\documentclass{IEEEtran}

\usepackage{amsfonts}
\usepackage{mathrsfs}
\usepackage{amssymb,amsmath}
\usepackage[noblocks]{authblk}
\usepackage[compress]{cite}
\usepackage{bm}
\usepackage{algorithm}
\usepackage{algorithmic}
\usepackage{amsmath,amssymb,epsfig,graphics,subfigure}
\usepackage{theorem}
\usepackage{array,color}
\usepackage[compress]{cite}
\usepackage{algorithm}
\usepackage{algorithmic}
\usepackage{stfloats}

\newtheorem{property}{Property}
\newtheorem{highlight}{Highlight}
\newtheorem{remark}{Remark}
\newtheorem{theorem}{Theorem}
\newtheorem{lemma}{Lemma}
\newtheorem{conclusion}{Conclusion}

\newtheorem{proof}{Proof}

%

\begin{document}

\title{Unified Framework of KKT Conditions Based Matrix Optimizations for MIMO Communications}

\author{Chengwen Xing, \textsl{Member}, \textsl{IEEE}, Yindi Jing, \textsl{Member},
 \textsl{IEEE}, Shuai Wang, \textsl{Member}, \textsl{IEEE}, Jiaheng Wang, \textsl{Member}, \textsl{IEEE},
 Soon Xin Ng, \textsl{Senior Member}, \textsl{IEEE},  Sheng Chen, \textsl{Fellow}, \textsl{IEEE},
 Lajos Hanzo, \textsl{Fellow}, \textsl{IEEE} \\
 \textsl{Overview Article}
\thanks{C.~Xing and S.~Wang, are with School of Information and Electronics, Beijing
 Institute of Technology, Beijing 100081, China (E-mails: chengwenxing@ieee.org,
 swang@bit.edu.cn).} %
\thanks{Y.~Jing is with the Department of Electrical and Computer Engineering,
 University of Alberta, Edmonton, AB T6G 2V4, Canada (E-mail: yindi@ualberta.ca).} %
\thanks{J.~Wang is with the National Mobile Communications Research Laboratory,
 Southeast University, Nanjing, China (E-mail: jhwang@seu.edu.cn).} %
\thanks{S.~X.~Ng, S.~Chen and L.~Hanzo are with School of Electronics and Computer Science,
 University of Southampton, U.K. (E-mails: sxn@ecs.soton.ac.uk, sqc@ecs.soton.ac.uk, lh@ecs.soton.ac.uk).
 S. Chen is also with King Abdulaziz University, Jeddah, Saudi Arabia.} %
\vspace*{-5mm}
}
\maketitle

\begin{abstract}
 For multi-input multi-output (MIMO) communication systems, many transceiver design
 problems involve the optimization of the covariance matrices of the transmitted
 signals. The  derivation of the optimal solutions based on Karush-Kuhn-Tucker (KKT)
 conditions is a most popular method, and many results have been reported for
 different scenarios of MIMO systems. In this overview paper, we propose a unified
 framework in formulating the KKT conditions for general MIMO systems. Based on this
 framework, the optimal water-filling structures of the transmission covariance
 matrices are derived rigorously, which are applicable to a wide range of MIMO
 systems. Our results show that for seemingly different MIMO systems with various
 power constraints and objective functions, the derivations and water-filling
 structures for the optimal covariance matrix solutions are fundamentally the same.
 Thus, our unified framework and solution reveal the underlying relationships among
 the different water-filling structures of the covariance matrices. Furthermore, our
 results provide new solutions to the covariance matrix optimization of many
 complicated MIMO systems with multiple users and imperfect channel state information
 which were unknown before.
\end{abstract}

\begin{IEEEkeywords}
 Multi-input multi-output communications,  covariance matrix optimization,
 matrix variate optimization, Karush-Kuhn-Tucker conditions, water-filling structure
\end{IEEEkeywords}

\section{Introduction}\label{S1}

 Our world is entering an amazing era of wireless technologies with many novel and
 revolutionary concepts coming forth, such as cloud computing, smart city, and green
 communications
 \cite{Larsson03,Baligh2014,Barbarossa2014,Wubben2014,Banelli2014,Razavizadeh2014,Rusek2013,zhongshan01,zhongshan02}.
 These unprecedented concepts demonstrate the demands and the desires for innovation
 in the wireless technology developments. In order to realize these goals, powerful
 physical layer technologies are expected, which are characterized by high power
 efficiency and high spectrum efficiency. Multi-input multi-output (MIMO) technology
 is widely seen as one of the most important ingredients for a variety of new
 wireless systems \cite{Weingarten2006,Telatar1999,JYangh1994,Sampth01,Scaglione2002,
Schizas07,Ding2007,Guan08,Medina07,Palomar03,YUWEI2004,PoliteWF_1,MatrixWaterFilling}.
 From the optimization viewpoint, many design problems for MIMO communication systems
 aim to optimize the covariance matrix of the transmitted signals
 \cite{Larsson03,Telatar1999,YUWEI2004}.

 Interestingly, for different performance metrics and different power constraints, the
 optimal transmission covariance matrices always have water-filling structures
 \cite{Waterfilling2004,Sampth01,Scaglione2002,JYangh1994,Schizas07,Ding2007,Guan08,
Medina07,Palomar03,YUWEI2004,PoliteWF_1,MatrixWaterFilling}.
 Such structures can greatly simplify the considered optimization problems and reveal
 the underlying physical meanings. Many variants of water-filling structures have
 been discovered, e.g., general water-filling \cite{GeneralizedWF}, polite water-filling
 \cite{PoliteWF_1}, cluster water-filling \cite{XingCL2013}, matrix-field water-filling
 \cite{MatrixWaterFilling}, and cave water-filling \cite{FFGaoTSP}. Along with the
 evolvements and developments of wireless communication systems, the number of the
 papers on MIMO optimization with water-filling structures in IEEE database is surging.
 Up to date, a large volume of elegant results have been published for various MIMO
 scenarios, including single-user (SU) MIMO systems \cite{Palomar03}, multi-user (MU)
 MIMO systems \cite{YUWEI2007}, distributed MIMO networks \cite{PoliteWF_2,Schizas07},
 and multi-hop amplify-and-forward MIMO relaying networks \cite{XingTSP2013,XingTSP201501}.
 Moreover, it has been shown that even for some transceiver optimizations with channel
 state information (CSI) errors, water-filling structures also hold
 \cite{XingTSP2013,Ding09,Ding10}. It is highly desired to reconsider the large volume
 of closely related existing works, to reveal the underlying fundamental connections of
 them, and to propose a unified framework.

 A natural question that arises is how these water-filling structures are derived.
 Generally speaking, there are three kinds of methods. The first one is the
 Karush-Kuhn-Tucker (KKT)-condition-based method \cite{Sampth01,Ding09,Ding10}. Under
 some mild conditions, KKT conditions are necessary conditions for the optimal solutions
 \cite{Boyd04}. Based on this fact, common properties derived from KKT conditions are
 the properties of the optimal solutions. The second one is based on matrix inequalities
 \cite{Guan08,Ding2007} or majorization theory \cite{Palomar03,XingTSP2013}. Compared
 with the first kind of method, majorization theory is less popular from the traditional
 communication theoretical perspective. Roughly speaking, it can reveal the inequality
 relationships between the diagonal elements and eigenvalues of a matrix \cite{Marshall79}.
 The third one is called matrix-monotonic optimization which exploits the monotonicity
 in the field of positive semi-definite matrices \cite{XingTSP201501,XingTSP201601}.
 The matrix-monotonic optimization framework has very strict restrictions on the objective
 functions and constraints of the optimization problems \cite{XingTSP201501,XingTSP201601}.
 Matrix inequality based methods are usually powerful and rigorous. When the objective
 functions are Schur-convex or Schur-concave functions of the diagonal elements of the
 mean-squared-error (MSE) matrix, the optimal structure of the matrix variables can be
 derived \cite{Palomar03,XingTSP2013}. Unfortunately, this kind of methods also suffer
 from many strict limitations in their applications \cite{Xing2016}.

 Generally speaking, the second and third kinds of methods are only applicable in
 single-source and single-destination systems. By contrast, the KKT-condition-based
 method is the most widely used and has a straightforward logic. True, KKT conditions
 are only necessary conditions for optimality. If the underlying problem is convex,
 KKT conditions are also sufficient for optimality. Even for nonconvex problems, they
 can still be very useful. Based on complex matrix derivatives, the KKT conditions of
 the optimization problems are derived first, from which the structures of the optimal
 solutions can be obtained. It is not limited by the format of the covariance matrix.
 e.g., SU or MU, nor by the format of the transmit power constraints. Thus this kind
 of methods has the widest range of applications.

 In the existing literature, the detailed derivation procedures for water-filling
 solutions from KKT conditions can seem to be largely different with each other
 \cite{Mai2011,Sampth01,Ding09}. In many works, it was argued that the derivation
 procedures are distinctive due to their specific system models and optimization
 problems, making the respective works significantly different from others. In our
 opinion, the theories and technologies for physical layer designs have some common
 underlying fundamentals. Due to the common root of all the KKT-condition-based
 methods, we believe that many seemingly different mathematical derivations can be
 unified into a common framework when the underlying fundamental nature is
 understood. This is the motivation of this work. In this paper, we investigate the
 most widely used KKT-condition-based method for the transmission covariance matrix
 optimization of MIMO systems. A unified framework is proposed for the derivations
 of the water-filling structure. The proposed unified framework provides general
 modeling, formulation, and methodology for MIMO transmission covariance optimization
 that can lead to new discoveries and solve new problems.

 Regarding the unified framework for the water-filling structured transmission
 covariance matrix based on KKT conditions, the main contribution of our work is
 two-fold. First, a fundamental and general solution is given based on which
 water-filling structures can be derived for a wide range of MIMO systems. We discover
 that the water-filling structures of these seemingly different MIMO systems are in
 nature closely related with each other. Second, based on the proposed solutions, we
 derive the water-filling structured transmission covariance matrices for various SU
 MIMO and MU MIMO systems under different objective functions, power constraints, and
 CSI assumptions, which include:
\begin{itemize}
\item the capacity maximization for SU MIMO systems under weighted power constraint
 and perfect CSI;
\item the MSE minimization for SU MIMO systems under weighted power constraint and
 perfect CSI;
\item the capacity maximization for SU MIMO systems under multiple weighted power
 constraints and perfect CSI;
\item the MSE minimization for SU MIMO systems under multiple weighted power constraints
 and perfect CSI;
\item the capacity maximization for SU MIMO systems with total power constraint and
 imperfect CSI;
\item the MSE minimization for SU MIMO systems under weighted power constraint and
 imperfect CSI;
\item the capacity maximization for SU MIMO systems under multiple weighted power
 constraints and CSI error;
\item the MSE minimization for SU MIMO systems under multiple weighted power constraints
 and imperfect CSI;
\item the capacity maximization for MU MIMO uplink under multiple weighted power
 constraints and perfect CSI;
\item the capacity maximization for MU MIMO uplink under multiple weighted power
 constraints and imperfect CSI;
\item the capacity maximization for MU MIMO downlink under multiple weighted power
 constraints and perfect CSI;
\item the capacity maximization for MU MIMO downlink under multiple weighted power
 constraints and imperfect CSI error;
\item the capacity maximization for network MIMO systems under multiple weighted
 power constraints and imperfect CSI.
\end{itemize}
 Many of these proposed solutions are unknown in the existing works. For example,
 to our best knowledge, the water-filling solutions for MIMO systems under
 multiple weighted power constraints and imperfect CSI are firstly revealed in
 this work.

 Throughout this paper, the following notation and symbol conventions are adopted.
 $(\cdot )^*$, $(\cdot )^{\rm{T}}$ and $(\cdot )^{\rm{H}}$ stand for the conjugate,
 transpose and Hermitian transpose operators, respectively, while ${\rm Tr}(\bm{Z})$
 and $|\bm{Z}|$ are the trace and determinant of complex matrix $\bm{Z}$,
 respectively. A complex matrix $\bm{Z}$ is represented by $\bm{Z}=\bm{Z}_{\rm R}
 +\textsf{j}\bm{Z}_{\rm I}$, where $\textsf{j}=\sqrt{-1}$, while $\bm{Z}_{\rm R}$
 and $\bm{Z}_{\rm I}$ are the real and imaginary parts of $\bm{Z}$. For matrix
 $\bm{Z}$ with rank $\text{Rank}(\bm{Z})=N$, $[\bm{Z}]_{:,1:N}$ denotes the first
 $N$ columns of $\bm{Z}$ and the sub-matrix $[\bm{Z}]_{1:N,1:N}$ consists of the
 first $N$ rows and the first $N$ columns of $\bm{Z}$, while $[\bm{Z}]_{i,i}$
 denotes the $i$th diagonal element of $\bm{Z}$. $\mathbb{E}\{\cdot\}$ denotes the
 expectation operation, $\bm{A}^{\frac{1}{2}}$ is the Hermitian square root of the
 positive semi-definite matrix $\bm{A}$, and $(a)^{+}=\max \{ 0,a\}$, while
 $\lambda_i(\bm{B})$ is the $i$th largest eigenvalue of matrix $\bm{B}$, and
 $\sigma_i(\bm{Z})$ is the $i$th largest singular value of matrix $\bm{Z}$. For
 two Hermitian matrices $\bm{A}$ and $\bm{B}$, $\bm{A}\succeq \bm{B}$ means that
 $\bm{A}-\bm{B}$ is positive semi-definite. The identity matrix of appropriate
 dimension is denoted by $\bm{I}$, and $\bm{0}$ is the zero matrix of appropriate
 dimension. In matrix decomposition, to clarify the order of the eigenvalues or
 singular values, we use $\bm{\Lambda} \searrow $ to represent a rectangular
 diagonal matrix with the diagonal elements arranged in decreasing order.

\section{Fundamental Results on Matrix Optimization}\label{S2}

\subsection{Premises}\label{S2.1}

 For a convex optimization problem, the KKT conditions are both necessary and
 sufficient for the optimal solution. When the considered problem is nonconvex,
 the KKT conditions are only necessary but not sufficient for the optimal
 solutions. However, even in this case, it is worth highlighting that KKT
 conditions can still be very useful. This is because if a property or structure
 is derived from KKT conditions, then all the solutions satisfying KKT
 conditions have this structure. This is the major motivation of our framework
 of providing a fundamental result for the structure of the optimal transmission
 covariance matrix from KKT conditions. We summarize this useful result in the
 following lemma.

\begin{lemma}\label{L1}
 If all the solutions satisfying the KKT conditions own the same structure,
 this structure is definitely satisfied by the optimal solutions of the
 associated optimization problem.
\end{lemma}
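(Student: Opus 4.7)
The plan is to argue by a direct set-inclusion/subset argument: under the mild regularity conditions (such as an appropriate constraint qualification) already assumed in Section~\ref{S2.1}, the KKT conditions are \emph{necessary} for optimality. Hence the set of optimal solutions is contained in the set of KKT points. Any structural property that holds uniformly across the larger set (KKT points) must automatically hold on every element of the smaller set (optimal points), which is exactly the statement of the lemma.

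First, I would set up notation: let $\mathcal{S}_{\text{opt}}$ denote the set of optimal solutions of the underlying optimization problem and $\mathcal{S}_{\text{KKT}}$ the set of all feasible points satisfying its KKT system. Second, I would invoke the standard result from nonlinear programming (e.g., Boyd \& Vandenberghe, as cited in the paper) that, under the mild conditions assumed, every local (and hence global) optimum must be a KKT point, giving the inclusion $\mathcal{S}_{\text{opt}} \subseteq \mathcal{S}_{\text{KKT}}$. Third, I would formalize what is meant by ``owning the same structure'' as a predicate $\mathcal{P}(\cdot)$ on feasible points: the hypothesis of the lemma is $\mathcal{P}(\bm{X})$ holds for every $\bm{X}\in\mathcal{S}_{\text{KKT}}$. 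Combining these, for any $\bm{X}^\star\in\mathcal{S}_{\text{opt}}$, the inclusion gives $\bm{X}^\star\in\mathcal{S}_{\text{KKT}}$, and the hypothesis then yields $\mathcal{P}(\bm{X}^\star)$.

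There is essentially no computational obstacle; the argument is a one-line contrapositive-free implication. The only substantive point to flag for the reader is the necessity direction of KKT, which is not automatic for arbitrary nonconvex problems, so I would briefly remark that for the covariance-matrix problems treated later in the paper the feasible sets are defined by smooth convex constraints (linear trace-type power constraints and positive semi-definiteness), for which Slater's condition or linear-independence constraint qualification is readily verified; this justifies the ``mild conditions'' clause and secures $\mathcal{S}_{\text{opt}} \subseteq \mathcal{S}_{\text{KKT}}$. I would close by emphasizing that the lemma is \emph{not} claiming that every KKT point is optimal (which would require convexity or additional arguments), only that any common structural feature of the KKT family is inherited by the optimum. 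This clarification is what makes the lemma a legitimate design tool for the subsequent water-filling derivations, even when the original problem is nonconvex.
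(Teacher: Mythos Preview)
Your proposal is correct and follows essentially the same logic as the paper: the paper does not give a separate formal proof of Lemma~\ref{L1} but simply states it as a direct consequence of the sentence preceding it, namely that under mild conditions KKT conditions are necessary for optimality, so any common property of KKT points is inherited by the optimal solutions. Your set-inclusion formalization $\mathcal{S}_{\text{opt}}\subseteq\mathcal{S}_{\text{KKT}}$ together with the predicate $\mathcal{P}$ is just a more explicit rendering of exactly this observation.
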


 For the KKT conditions of the optimization problem with complex matrix
 variables, complex matrix derivatives are the most fundamental tool. In this
 paper, we focus on positive semi-definite matrix. In other words, there exit
 some constraints on the complex matrix variables. We first present some
 important points for the complex matrix derivatives of positive semi-definite
 matrix. These highlights form the basis for the following results.

\begin{highlight}\label{H1}
 The objective function of the optimization problem with complex matrix variables
 must be real valued.
\end{highlight}

 It is worth noting that the role of complex matrix derivatives is to find
 extreme values. From the mathematical viewpoint, it is meaningless to argue a
 complex number is larger or smaller. Therefore, performing complex derivatives
 over a complex valued function is totally meaningless. It is true that in some
 textbooks some matrix derivatives are given such as \cite{Kay1993}
\begin{align}\label{eq1}
 \frac{\partial {\rm{Tr}}(\bm{W}\bm{X})}{\partial \bm{X}} =& \bm{W}^{\rm{T}} ,
\end{align}
 where $\bm{X}$ is the complex matrix variable and $\bm{W}$ is a complex matrix
 with proper dimensions. It is worth noting that ${\rm{Tr}}(\bm{W}\bm{X})$ can
 be a complex value. This operation is an intermediate operation. In other words,
 in real-world applications, ${\rm{Tr}}(\bm{W}\bm{X})$ and ${\rm{Tr}}(\bm{W}^{\rm{H}}
 \bm{X}^{\rm{H}})$ usually appear in company. Thus, it is more meaningful to
 define the following complex matrix derivative instead of the previous one
\begin{align}\label{eq2}
 \frac{\partial \big({\rm{Tr}}(\bm{W}\bm{X})+{\rm{Tr}}\big(\bm{W}^{\rm{H}}
  \bm{X}^{\rm{H}}\big)\big)}{\partial \bm{X}} =& \bm{W}^{\rm{T}} .
\end{align}

\begin{highlight}\label{H2}
 Complex matrix derivative is a special real matrix derivative.
\end{highlight}

 It is well-known that real number is a special case of complex number whose
 imaginary part is zero. However, the story is different for complex matrix
 derivatives. Complex matrix derivatives are only defined for some special
 functions. The objective function is just a function of the real and imaginary
 parts of the complex matrix variables. After performing matrix derivatives,
 the real and imaginary parts of the resulting matrix corresponds to the
 derivatives of the real and imaginary parts, separately. In a nutshell, a
 complex matrix derivative is a compact version of the real matrix derivatives
 with respect to both real and imaginary parts. This may be the reason why the
 existing textbooks on complex matrix derivatives are usually written by
 engineering people instead of mathematicians.

 Generally speaking, complex matrix derivatives are defined in the following
 two forms \cite{Kay1993,HJORUNGNES}
\begin{align}
 \frac{\partial f(\bm{X})}{\partial \bm{X}} =& \frac{1}{2}\left(
  \frac{\partial f(\bm{X})}{\partial \bm{X}_{\rm R}} - \textsf{j}
  \frac{\partial f(\bm{X})}{\partial \bm{X}_{\rm I}} \right) , \label{eq3} \\
 \frac{\partial f(\bm{X})}{\partial \bm{X}^*} =& \frac{1}{2}\left(
  \frac{\partial f(\bm{X})}{\partial \bm{X}_{\rm R}} + \textsf{j}
  \frac{\partial f(\bm{X})}{\partial \bm{X}_{\rm I}} \right) . \label{eq4}
\end{align}

\begin{highlight}\label{H3}
 For complex matrix variable $\bm{X}$, complex matrix derivative can be
 performed with respect to $\bm{X}$ itself or with respect to its conjugate
 $\bm{X}^*$. For these two matrix derivatives, the resulting KKT conditions
 are exactly the same.
\end{highlight}

 It should be highlighted that for some intermediate steps, the complex matrix
 derivations with respect to $\bm{X}$ itself and with respect to its conjugate
 $\bm{X}^*$ are different. For example, the following equalities are given in
 \cite{HJORUNGNES}
\begin{align}\label{eq5}
 \frac{\partial {\rm{Tr}}\big(\bm{X}^{\rm{H}}\big)}{\partial \bm{X}} = \bm{0} , ~
 \frac{\partial {\rm{Tr}}\big(\bm{X}^{\rm{H}}\big)}{\partial \bm{X}^*} = \bm{I} .
\end{align}
 However, no matter which complex matrix derivation is used, the resulting KKT
 conditions must be the same.

\begin{highlight}\label{H4}
 In the traditional definition of complex matrix derivatives, there is no
 restriction imposed on the structure of the matrix variable. The elements
 of the matrix variable are independent variables.
\end{highlight}

 In our MIMO optimization problems, the covariance matrix of the transmitted signals
 $\bm{Q}$ is the complex matrix variable, which is positive semi-definite. Different
 from a general unconstrained complex matrix variable, there are two constraints
 imposed on $\bm{Q}$: 1)~It is conjugate symmetric, and 2)~Its eigenvalues are all
 nonnegative. Based on the definition of complex matrix derivation \cite{Kay1993},
 when the objective function is a scalar real valued function, after performing a
 matrix derivative, the result is still a Hermitian matrix with the same dimension.
 Based on this fact we have
\begin{align}\label{eq6}
 \left(\frac{\partial f(\bm{Q})}{\partial \bm{Q}}\right)^{\rm T} =&
       \frac{\partial f(\bm{Q})}{\partial \bm{Q}^*} .
\end{align}
 For example, based on the complex matrix derivative definition and the fact that
 $\bm{Q}$ is a positive semi-definite matrix, we have
\begin{align}\label{eq7}
 \frac{\partial {\rm Tr}(\bm{W}\bm{Q})}{\partial \bm{Q}^*} =&
  \frac{\partial {\rm Tr}\big(\bm{W}\bm{Q}^{\rm H}\big)}{\partial \bm{Q}^*} = \bm{W} .
\end{align}
 Here the matrix $\bm{W}$ must be a Hermitian matrix. Otherwise, this derivation
 definition is meaningless.

 Moreover, as $\bm{Q}$ is a positive semi-definite matrix, the complex matrix derivative
 must be defined on the set of positive semi-definite matrices. Unfortunately, the
 existing definitions for complex matrix derivatives never consider this. In other
 words, there is a relaxation when performing complex matrix derivatives, and if the
 final solution is positive semi-definite, then there is no loss. This is usually
 guaranteed by the mathematical formula of the objective function.

\begin{highlight}\label{H5}
 After performing complex matrix derivative on a real valued function with respect to
 positive semi-definite matrix, the result must be a Hermitian matrix. The semi-positivity
 is guaranteed by the fact that for the objective function, the optimal values occur at
 the set of variables that are positive semi-definite matrices.
\end{highlight}

 We use an example to show that when we do not consider these highlights we may
 arrive at a wrong conclusion. The differential of $\log\big|\bm{I}+\bm{Q}\bm{H}^{\rm H}
 \bm{H}\big|$ could be defined as
\begin{align}\label{eq8}
 \hspace*{-2mm}{\rm d}\! \left(\log\big|\bm{I}\! +\! \bm{Q}\bm{H}^{\rm H}\bm{H}\big|\right)\!\! =&
  {\rm Tr}\big(\! \big(\bm{I}\! +\! \bm{Q}\bm{H}^{\rm H}\bm{H}\big)^{-1}
  \! {\rm d}\big(\bm{Q}\big)\bm{H}^{\rm H}\bm{H}\big) \! ,\!
\end{align}
 based on which we would have
\begin{align}\label{eq9}
 \frac{\partial \log\big|\bm{I}+\bm{Q}\bm{H}^{\rm H}\bm{H}\big|}{\partial \bm{Q}^*} =&
  \bm{H}^{\rm H}\bm{H}\big(\bm{I}+\bm{Q}\bm{H}^{\rm H}\bm{H}\big)^{-1} .
\end{align}
 This however is wrong because in the above equation in addition to $\bm{Q}$ being a
 Hermitian matrix, there exit some other constraints to ensure that the resultant matrix
 is a Hermitian matrix, e.g., the nonzero eigenvalues of $\bm{Q}$ and $\bm{H}^{\rm H}\bm{H}$
 having the same eigenvectors (for the weighted power constraints, the latter should be
 removed as it cannot be satisfied). The correct logic is because $\log\big|\bm{I}+
 \bm{Q}\bm{H}^{\rm H}\bm{H}\big|=\log\big|\bm{I}+\bm{H}\bm{Q}\bm{H}^{\rm H}\big|$
 and the differential of $\log\big|\bm{I}+\bm{H}\bm{Q}\bm{H}^{\rm H}\big|$ is
\begin{align}\label{eq10}
 {\rm d}\left(\log\big|\bm{I}+\bm{H}\bm{Q}\bm{H}^{\rm H}\big|\right) =& {\rm Tr}\big(
  \big(\bm{I}+\bm{H}\bm{Q}\bm{H}^{\rm H}\big)^{-1}\bm{H}{\rm d}\big(\bm{Q}\big)\bm{H}^{\rm H}\big) ,
\end{align}
 based on which we have
\begin{align}\label{eq11}
 \frac{\partial \log\big|\bm{I}+\bm{Q}\bm{H}^{\rm H}\bm{H}\big|}{\partial \bm{Q}^*} =&
  \bm{H}^{\rm H}\big(\bm{I}+\bm{H}\bm{Q}\bm{H}^{\rm H}\big)^{-1}\bm{H} .
\end{align}
 In this formulation, it can be concluded that if $\bm{Q}$ is Hermitian, the resulting
 matrix is definitely Hermitian.

\subsection{The General Conclusion}\label{S2.2}

 The fundamental mathematical result of the transmission covariance matrix optimization
 for MIMO systems is given in the following theorem.

\begin{theorem}\label{T1}
 Define a set of matrix equations as follows
\begin{equation}\label{two_equation} 
\left\{ \begin{array}{l}
 \hspace{-2mm}\bm{H}^{\rm H}\bm{\Pi}^{-\frac{1}{2}} \big(\bm{I}\! +\! \bm{\Pi}^{-\frac{1}{2}}\bm{H}
  \bm{Q}\bm{H}^{\rm H}\bm{\Pi}^{-\frac{1}{2}}\big)^{-K}\bm{\Pi}^{-\frac{1}{2}}\bm{H}
  = \mu\bm{\Phi}\! -\! \bm{\Psi} \!,  \\
 \hspace{-2mm}\bm{Q}^{\frac{1}{2}}\bm{\Psi}\bm{Q}^{\frac{1}{2}} = \bm{0} \text{ or }
  {\rm Tr}(\bm{\Psi}\bm{Q}) = 0 ,
\end{array} \right.
\end{equation}
 where $\bm{H}$ is a complex matrix with proper dimension, $K$ can be any positive integer,
 $\mu$ is a positive real scalar, $\bm{\Pi}$ and $\bm{\Phi}$ are positive definite matrices,
 while $\bm{Q}$ and $\bm{\Psi}$ are positive semidefinite matrices. Then based on the
 following singular value decomposition (SVD)
\begin{align}\label{theorem_1_svd} 
 \bm{\Pi}^{-\frac{1}{2}}\bm{H}\bm{\Phi}^{-\frac{1}{2}} =& \bm{U}_{\bm{\mathcal{H}}}
  \bm{\Lambda}_{\bm{\mathcal{H}}}\bm{V}_{\bm{\mathcal{H}}}^{\rm H} \text{ with }
  \bm{\Lambda}_{\bm{\mathcal{H}}} \searrow ,
\end{align}
 the matrix $\bm{Q}$ satisfying the two equations in (\ref{two_equation}) has the following
 water-filling structure
\begin{align}\label{eq14}
 \bm{Q} =& \bm{\Phi}^{-\frac{1}{2}}\big[\bm{V}_{\bm{\mathcal{H}}}\big]_{:,1:N}
  \Big( \mu^{-\frac{1}{K}}\big[\bm{\Lambda}_{\bm{\mathcal{H}}}\big]_{1:N,1:N}^{\frac{2}{K}-2}
  - \big[\bm{\Lambda}_{\bm{\mathcal{H}}}\big]_{1:N,1:N}^{-2}\Big)^{+} \nonumber \\
 & \times \big[\bm{V}_{\bm{\mathcal{H}}}\big]_{:,1:N}^{\rm H} \bm{\Phi}^{-\frac{1}{2}} ,
\end{align}
 where $N={\rm Rank}(\bm{H})$, and $(\bm{B})^+$ means applying the operation $(\cdot )^+$
 to every element of $\bm{B}$.
\end{theorem}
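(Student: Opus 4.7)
The plan is to first normalize the equations so as to absorb $\bm{\Pi}$ and $\bm{\Phi}$ into a whitened channel, then use complementary slackness to pin down the range of $\bm{Q}$, and finally reduce the matrix problem to a diagonal one through a commutativity argument, after which the water-filling formula falls out. To begin, I would introduce the substitutions $\tilde{\bm{H}} = \bm{\Pi}^{-\frac{1}{2}}\bm{H}\bm{\Phi}^{-\frac{1}{2}}$, $\tilde{\bm{Q}} = \bm{\Phi}^{\frac{1}{2}}\bm{Q}\bm{\Phi}^{\frac{1}{2}}$, and $\tilde{\bm{\Psi}} = \bm{\Phi}^{-\frac{1}{2}}\bm{\Psi}\bm{\Phi}^{-\frac{1}{2}}$. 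Multiplying the first equation in (\ref{two_equation}) by $\bm{\Phi}^{-\frac{1}{2}}$ on both sides reduces it to $\tilde{\bm{H}}^{\rm H}\big(\bm{I}+\tilde{\bm{H}}\tilde{\bm{Q}}\tilde{\bm{H}}^{\rm H}\big)^{-K}\tilde{\bm{H}} = \mu\bm{I} - \tilde{\bm{\Psi}}$, while the complementary slackness becomes $\tilde{\bm{\Psi}}\tilde{\bm{Q}} = \bm{0}$ after invoking the standard fact that, for positive semi-definite $\bm{M}_1$ and $\bm{M}_2$, the identity $\bm{M}_1^{\frac{1}{2}}\bm{M}_2\bm{M}_1^{\frac{1}{2}} = \bm{0}$ forces $\bm{M}_1\bm{M}_2 = \bm{0}$. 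Denoting the left-hand side by $\bm{A}$, these two conditions combine into the eigenvalue relation $\bm{A}\tilde{\bm{Q}} = \mu\tilde{\bm{Q}}$.

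Next, I would observe that $\mathrm{Range}(\bm{A}) \subseteq \mathrm{Range}(\tilde{\bm{H}}^{\rm H})$ is spanned by $[\bm{V}_{\bm{\mathcal{H}}}]_{:,1:N}$, so the preceding eigenvalue relation together with $\mu > 0$ forces every column of $\tilde{\bm{Q}}$ to lie in the same subspace. Hence $\tilde{\bm{Q}} = [\bm{V}_{\bm{\mathcal{H}}}]_{:,1:N}\bm{Y}[\bm{V}_{\bm{\mathcal{H}}}]_{:,1:N}^{\rm H}$ for some positive semi-definite $\bm{Y}$ of size $N\times N$. With $\bm{S} = [\bm{\Lambda}_{\bm{\mathcal{H}}}]_{1:N,1:N}$ and $\bm{Z} = \bm{S}\bm{Y}\bm{S}$, plugging this form back into $\bm{A}$ via the SVD (\ref{theorem_1_svd}) and exploiting the unitarity of $\bm{U}_{\bm{\mathcal{H}}}$ and $\bm{V}_{\bm{\mathcal{H}}}$ collapses the first equation into the compact matrix relation $\bm{S}^2(\bm{I}_N + \bm{Z})^{-K}\bm{Z} = \mu\bm{Z}$.

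The crux of the argument, and what I expect to be the main obstacle, is to show that $\bm{Y}$ must be diagonal in the basis in which $\bm{S}$ is diagonal. I would observe that $\bm{G} = (\bm{I}_N + \bm{Z})^{-K}\bm{Z}$, being a function of the Hermitian matrix $\bm{Z}$, commutes with $\bm{Z}$, is itself Hermitian, and has the same range as $\bm{Z}$. The relation $\bm{S}^2\bm{G} = \mu\bm{Z}$ then shows that $\bm{S}^2$ maps $\mathrm{Range}(\bm{Z})$ into itself, and on that invariant subspace $\bm{G}$ is invertible, giving $\bm{S}^2 = \mu\bm{Z}\bm{G}^{-1} = \mu(\bm{I} + \bm{Z})^K$. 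This exhibits $\bm{S}^2$ as a polynomial in $\bm{Z}$ on $\mathrm{Range}(\bm{Z})$, so $\bm{S}$ and $\bm{Z}$, and hence $\bm{S}$ and $\bm{Y}$, share an eigenbasis. Since $\bm{S}$ is already diagonal, $\bm{Y}$ may be taken diagonal as well; within each repeated-eigenvalue block of $\bm{S}$ the basis is free to choose, which simply corresponds to the usual nonuniqueness of the SVD in degenerate singular-value subspaces.

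Once $\bm{Y} = \mathrm{diag}(d_i)$ and $\bm{S} = \mathrm{diag}(\sigma_i)$ are diagonal in the same basis, the matrix equation decouples into the scalar conditions $\sigma_i^2(1+\sigma_i^2 d_i)^{-K} = \mu$ for every $i$ with $d_i > 0$, which I would solve to obtain $d_i = \mu^{-\frac{1}{K}}\sigma_i^{\frac{2}{K}-2} - \sigma_i^{-2}$. Whenever this expression is negative, the requirement $\tilde{\bm{\Psi}} \succeq \bm{0}$ forces $d_i = 0$, which is precisely what generates the $(\cdot)^{+}$ operator in the final formula. Reassembling $\tilde{\bm{Q}}$ and undoing the normalization via $\bm{Q} = \bm{\Phi}^{-\frac{1}{2}}\tilde{\bm{Q}}\bm{\Phi}^{-\frac{1}{2}}$ then produces exactly the water-filling structure (\ref{eq14}).
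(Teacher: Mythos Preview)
Your proposal is correct and follows the same broad arc as the paper---normalize to a whitened channel, force the covariance into the right singular subspace of $\tilde{\bm{H}}$, reduce to scalars, and recover the $(\cdot)^{+}$ from the sign constraint on the Lagrange multiplier---but the diagonalization step is carried out differently. The paper sandwiches the first equation by $\bm{Q}^{1/2}$, sets $\bm{A}=\bm{\Phi}^{1/2}\bm{Q}^{1/2}$, and argues from the resulting identity that $\bm{A}$ and $\bm{\Pi}^{-1/2}\bm{H}\bm{\Phi}^{-1/2}\bm{A}$ share right singular vectors (Property~1), which then forces the eigenbasis of $\bm{\Phi}^{-1/2}\bm{\Psi}\bm{\Phi}^{-1/2}$ to coincide with the right singular basis of the whitened channel (Property~2). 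You instead multiply the first equation by $\tilde{\bm{Q}}$, use complementary slackness to obtain the clean eigenvalue relation $\bm{A}\tilde{\bm{Q}}=\mu\tilde{\bm{Q}}$, read off the range containment $\mathrm{Range}(\tilde{\bm{Q}})\subseteq\mathrm{Range}(\tilde{\bm{H}}^{\rm H})$ directly, and then show $\bm{S}^{2}=\mu(\bm{I}+\bm{Z})^{K}$ on $\mathrm{Range}(\bm{Z})$ via an invariant-subspace argument, which gives commutativity of $\bm{S}$ and $\bm{Y}$ without ever invoking an SVD comparison. Your route is arguably cleaner on the linear-algebra side and makes the role of complementary slackness more explicit early on; the paper's route keeps the multiplier $\bm{\Psi}$ visible throughout and establishes its diagonal structure as a separate Property, which is helpful when one later needs to read off $\psi_{i}$ explicitly. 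Both handle the rank-deficient case and the degenerate singular-value case in the same way, by appealing to the freedom in choosing $\bm{V}_{\bm{\mathcal{H}}}$ on repeated-eigenvalue blocks.
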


\begin{proof}
 By left and right multiplying the both sides of the first equation in (\ref{two_equation})
 with $\bm{Q}^{\frac{1}{2}}$ and noting the second equation in (\ref{two_equation}), we have
\begin{align}\label{KKT_1_R} 
 & \bm{Q}^{\frac{1}{2}}\bm{H}^{\rm H}\bm{\Pi}^{-\frac{1}{2}} \big(\bm{I} + \bm{\Pi}^{-\frac{1}{2}}\bm{H}
  \bm{Q}\bm{H}^{\rm H}\bm{\Pi}^{-\frac{1}{2}}\big)^{-K}\bm{\Pi}^{-\frac{1}{2}}\bm{H} \bm{Q}^{\frac{1}{2}}
 \nonumber \\
 & \hspace*{10mm} = \mu \bm{Q}^{\frac{1}{2}}\bm{\Phi}\bm{Q}^{\frac{1}{2}} .
\end{align}
 By defining the new matrix
\begin{align}\label{eq16}
 \bm{A} =& \bm{\Phi}^{\frac{1}{2}}\bm{Q}^{\frac{1}{2}} ,
\end{align}
 (\ref{KKT_1_R}) can be rewritten as
\begin{align}\label{KKT_1_F} 
 & \bm{A}^{\rm H}\bm{\Phi}^{-\frac{1}{2}}\bm{H}^{\rm H}\bm{\Pi}^{-\frac{1}{2}}\big( \bm{I}
  \! +\! \bm{\Pi}^{-\frac{1}{2}}\bm{H}\bm{\Phi}^{-\frac{1}{2}}\bm{A}\bm{A}^{\rm H}
  \bm{\Phi}^{-\frac{1}{2}}\bm{H}^{\rm H}\bm{\Pi}^{-\frac{1}{2}}\big)^{-\! K} \nonumber \\
 &\hspace*{10mm}\times \bm{\Pi}^{-\frac{1}{2}}\bm{H}\bm{\Phi}^{-\frac{1}{2}}\bm{A}
  = \mu \bm{A}^{\rm H}\bm{A} .
\end{align}
 From the SVDs of $\bm{A}$ and $\bm{\Pi}^{-\frac{1}{2}}\bm{H}\bm{\Phi}^{-\frac{1}{2}}\bm{A}$,
 (\ref{KKT_1_F}) implies that $\bm{A}$ and $\bm{\Pi}^{-\frac{1}{2}}\bm{H}\bm{\Phi}^{-\frac{1}{2}}
 \bm{A}$ have the same right SVD unitary matrix. Based on this fact, it can be concluded that
 $\bm{A}^{\rm H}\bm{A}$ and $\bm{A}^{\rm H}\bm{\Phi}^{-\frac{1}{2}}\bm{H}^{\rm H}
 \bm{\Pi}^{-1}\bm{H}\bm{\Phi}^{-\frac{1}{2}}\bm{A}$ have the same eigen-matrix. Therefore,
 the left SVD eigenvectors of $\bm{A}$ corresponding to its nonzero singular values are
 also the right SVD eigenvectors of $\bm{\Pi}^{-\frac{1}{2}}\bm{H}\bm{\Phi}^{-\frac{1}{2}}$.
 It is worth noting that for the zero singular values of $\bm{A}$, the corresponding right
 eigenvectors can be arbitrary as long as they are orthogonal to each other and orthogonal
 to the ones corresponding to the nonzero singular values. Therefore, without loss of
 optimality, we can assume the following property.

\begin{property}\label{P1}
 The left SVD unitary matrix of $\bm{A}$ is the right SVD unitary matrix of $\bm{\Pi}^{-\frac{1}{2}}
 \bm{H}\bm{\Phi}^{-\frac{1}{2}}$.
\end{property}

 Based on the definition of $\bm{A}$, the first equation in (\ref{two_equation}) is
 equivalent to the following one
\begin{align}\label{KKT_1_F_1} 
 & \bm{\Phi}^{-\frac{1}{2}}\bm{H}^{\rm H}\bm{\Pi}^{-\frac{1}{2}}\big(\bm{I} +
  \bm{\Pi}^{-\frac{1}{2}}\bm{H}\bm{\Phi}^{-\frac{1}{2}}\bm{A}\bm{A}^{\rm H}\bm{\Phi}^{-\frac{1}{2}}
  \bm{H}^{\rm H}\bm{\Pi}^{-\frac{1}{2}}\big)^{-K} \nonumber \\
 &\hspace*{10mm} \times \bm{\Pi}^{-\frac{1}{2}}\bm{H}\bm{\Phi}^{-\frac{1}{2}} =
  \mu \bm{I} - \bm{\Phi}^{-\frac{1}{2}}\bm{\Psi}\bm{\Phi}^{-\frac{1}{2}} .
\end{align}
 Based on Property~\ref{P1}, it can be seen that the right SVD unitary matrix of
 $\bm{\Pi}^{-\frac{1}{2}}\bm{H}\bm{\Phi}^{-\frac{1}{2}}$ is the unitary matrix for
 the eigenvalue decomposition (EVD) of the lefthand side of (\ref{KKT_1_F_1}).
 Because of the equality in (\ref{KKT_1_F_1}), the EVD unitary matrix of the lefthand
 side of (\ref{KKT_1_F_1}) is exactly the EVD unitary matrix of the righthand side of
 (\ref{KKT_1_F_1}). In other words, the following property holds.

\begin{property}\label{P2}
 The right SVD unitary matrix of $\bm{\Pi}^{-\frac{1}{2}}\bm{H}\bm{\Phi}^{-\frac{1}{2}}$
 is the EVD unitary matrix of $\bm{\Phi}^{-\frac{1}{2}}\bm{\Psi}\bm{\Phi}^{-\frac{1}{2}}$.
\end{property}

 With the following definitions
\begin{align}
 a_i =& \sigma_i(\bm{A}) , \label{eq19} \\
 h_i =& \sigma_i\big(\bm{\Pi}^{-\frac{1}{2}}\bm{H}\bm{\Phi}^{-\frac{1}{2}}\big) , \label{eq20} \\
 \psi_i=& \lambda_i\big(\bm{\Phi}^{-\frac{1}{2}}\bm{\Psi}\bm{\Phi}^{-\frac{1}{2}}\big) , \label{eq21}
\end{align}
 and together with Properties~\ref{P1} and \ref{P2}, (\ref{KKT_1_F_1}) becomes
\begin{align}\label{eq22}
 \frac{h_i^2}{\big(1+a_i^2 h_i^2\big)^K} =& \mu - \psi_i ,
\end{align}
 based on which $a_i^2$, $\forall i$, are derived as
\begin{align}\label{solution} 
 a_i^2 =& \frac{ \big(h_i^2\big)^{\frac{1}{K}} }{ h_i^2 } \left( \frac{1}{\big(\mu - \psi_i\big)^{\frac{1}{K}}}
  - \frac{1}{\big(h_i^2\big)^{\frac{1}{K}}} \right) .
\end{align}

 Moreover, because of $\bm{Q}^{\frac{1}{2}}\bm{\Psi}\bm{Q}^{\frac{1}{2}}=\bm{0}$ and
 together with the definition of $\bm{A}=\bm{\Phi}^{\frac{1}{2}}\bm{Q}^{\frac{1}{2}}$, we
 can conclude that $\bm{A}^{\rm H}\bm{\Phi}^{-\frac{1}{2}}\bm{\Psi}\bm{\Phi}^{-\frac{1}{2}}
 \bm{A}=\bm{0}$, which implies
\begin{align}\label{eq24}
 a_i^2 \psi_i =& 0 ,
\end{align}
 based on Properties~\ref{P1} and \ref{P2}. Therefore, $\psi_i$ in (\ref{solution}) can
 be removed. The reason is as follows. With the inequalities $\mu > 0$ and $\psi_i \ge 0$
 together with (\ref{eq24}), it can be seen that for (\ref{solution})
\begin{align}
 \psi_i = 0 & \text{ when } a_i^2 > 0 , \label{eq25} \\
 \psi_i > 0 & \text{ when } a_i^2 = 0 , \label{eq26}
\end{align}
 based on which we conclude that
\begin{align}
 \psi_i = 0 & \text{ when } a_i^2 = \frac{(h_i^2)^{\frac{1}{K}}}{h_i^2}
  \left(\frac{1}{\mu^{\frac{1}{K}}} - \frac{1}{(h_i^2)^{\frac{1}{K}}}\right) > 0 , \label{eq27} \\
 \psi_i > 0 & \text{ when } a_i^2 = \frac{(h_i^2)^{\frac{1}{K}}}{h_i^2}
  \! \left(\! \frac{1}{(\mu\! -\! \psi_i)^{\frac{1}{K}}}\! -\! \frac{1}{(h_i^2)^{\frac{1}{K}}}\! \right) = 0  \label{eq28} .
\end{align}
 By combining (\ref{eq27}) and (\ref{eq28}) together, we have
\begin{align}\label{eq29}
 a_i^2 =& \frac{(h_i^2)^{\frac{1}{K}}}{h_i^2} \left(\frac{1}{\mu^{\frac{1}{K}}} -
  \frac{1}{(h_i^2)^{\frac{1}{K}}}\right)^+ .
\end{align}
 Noting (\ref{eq19}) to (\ref{eq21}), we obtain (\ref{eq14}).
\end{proof}

\begin{remark}\label{R1}
 When applying Theorem~\ref{T1} to KKT-condition-based transmission covariance matrix
 optimization for MIMO systems, $\bm{Q}$ corresponds to the covariance matrix to be
 optimized and $\bm{H}$ is the channel matrix, while $\bm{\Phi}$ is the power
 weighting matrix, $\bm{\Pi}$ is the noise covariance matrix, and $\bm{\Psi}$
 corresponds to the Lagrange multipliers. It is worth emphasizing that Theorem~\ref{T1}
 is very general and independent of specific MIMO system setups, including the objective
 functions, power constraints, signal models, and channel assumptions. Moreover, in
 Theorem~\ref{T1}, the matrices $\bm{\Pi}$ and $\bm{\Phi}$ are not restricted to be
 constant but can be functions of $\bm{Q}$.
\end{remark}

\subsection{Differences with Existing Works}\label{S2.3}

 As water-filling structures have been extensively studied, we would like to discuss
 the main differences between our derivations and existing ones. To our best knowledge,
 the existing derivation methods can be classified into two categories.

\subsubsection{Comparison with the first existing category}

 The first existing approach is based on matrix inequalities, e.g.,
 \cite{Xing2016,Telatar1999,Palomar03}. However, it is usually very difficult to
 guarantee that the extreme values of the matrix inequalities can be achieved due to
 the variations in the objectives or constraints of the optimization problems
 \cite{Xing2016}. For example, in Telatar's paper \cite{Telatar1999}, the matrix
 inequality $\log\big|\bm{I}+\bm{H}\bm{Q}\bm{H}^{\rm H}\big|\le \sum_i
 \log\big(1+\lambda_i(\bm{H}^{\rm H}\bm{H})\lambda_i(\bm{Q})\big)$ is used to derive
 the water-filling structure of the optimal solutions. If the sum power constraint
 ${\rm Tr}(\bm{Q})\le P$ is replaced by $[\bm{Q}]_{i,i}\le P_i$, the equality cannot
 be achieved \cite{Xing2016}. In Section~\ref{SU_MIMO}, we will show that this
 problem can be overcome by using our method presented in Theorem~\ref{T1}, because
 in Theorem~\ref{T1}, $\bm{\Pi}$ and $\bm{\Phi}$ can be functions of $\bm{Q}$. Thus,
 compared with the existing works \cite{Xing2016,Telatar1999,Palomar03}, our
 conclusion is more general.

\subsubsection{Comparison with the second existing category}

 The second category of existing water-filling structure derivations is purely based on
 KKT conditions, which consists of two phases. In the first phase, the argument is made
 that when the product of two matrices $\bm{\Lambda}_1\bm{\Lambda}_2$ is a Hermitian
 matrix and $\bm{\Lambda}_1$ is a diagonal matrix, then $\bm{\Lambda}_2$ is also a
 diagonal matrix. However, if some diagonal elements of $\bm{\Lambda}_1$ are zeros, this
 claim does not hold. To avoid this difficulty, in some existing works, it is usually
 assumed that $\bm{H}$ is full rank \cite{Mai2011}. Different from these existing works,
 Theorem~\ref{T1} does not rely on this argument and we do no impose the full rank condition
 on $\bm{H}$.

 In the second phase of the existing KKT-condition-based derivation methods
 \cite{Sampth01,Ding09}, the diagonalizable structure is used to reduce the KKT
 conditions to some equations that involve only diagonal matrices, and then to solve
 the optimal covariance matrix from these equations. As the diagonal elements must be
 nonnegative, the operation $(a)^+=\max\{a,0\}$ is introduced. In Theorem~\ref{T1}, this
 operation appears in the solution via rigorous mathematical derivation, which has
 clear physical interpretation or insight in applications, e.g., the transmission power
 cannot be negative. By contrast, some existing derivations use the argument that
 $(a)^+=\max\{a,0\}$ comes from the fact that $\bm{Q}$ is positive semi-definite and
 therefore the negative eigenvalues must be set to zero. This argument may be incorrect
 in some applications. To see this, let us consider a positive semidefinite matrix
 $\bm{\Phi}$. Generally,
\begin{align}\label{eq30}
 & \bm{\Phi}^{-\frac{1}{2}}\big[\bm{V}_{\bm{\mathcal{H}}}\big]_{:,1:N}
  \Big( \mu^{-\frac{1}{K}}\big[\bm{\Lambda}_{\bm{\mathcal{H}}}\big]_{1:N,1:N}^{\frac{2}{K}-2}
  - \big[\bm{\Lambda}_{\bm{\mathcal{H}}}\big]_{1:N,1:N}^{-2}\Big)^{+} \nonumber \\
 & \hspace*{20mm}\times \big[\bm{V}_{\bm{\mathcal{H}}}\big]_{:,1:N}^{\rm H} \bm{\Phi}^{-\frac{1}{2}} \nonumber \\
 & \neq \bigg(\bm{\Phi}^{-\frac{1}{2}}\big[\bm{V}_{\bm{\mathcal{H}}}\big]_{:,1:N}
  \Big( \mu^{-\frac{1}{K}}\big[\bm{\Lambda}_{\bm{\mathcal{H}}}\big]_{1:N,1:N}^{\frac{2}{K}-2}
  - \big[\bm{\Lambda}_{\bm{\mathcal{H}}}\big]_{1:N,1:N}^{-2}\Big) \nonumber \\
 & \hspace*{20mm}\times \big[\bm{V}_{\bm{\mathcal{H}}}\big]_{:,1:N}^{\rm H} \bm{\Phi}^{-\frac{1}{2}}\bigg)^+ .
\end{align}
 The equality holds only when $\bm{\Phi}\propto\bm{I}$.

 Moreover, in some existing works,  $\bm{Q}$ is replaced by $\bm{F}\bm{F}^{\rm H}$ in
 the derivation and $\bm{F}$ is a tall matrix instead of a square one. Then the
 following KKT condition is achieved
\begin{align}\label{KKT_F} 
 & \bm{H}^{\rm H}\bm{\Pi}^{-\frac{1}{2}}\big(\bm{I}\! +\! \bm{\Pi}^{-\frac{1}{2}}\bm{H}\bm{F}
  \bm{F}^{\rm H}\bm{H}^{\rm H}\bm{\Pi}^{-\frac{1}{2}}\big)^{-1}\bm{\Pi}^{-\frac{1}{2}}
  \bm{H}\bm{F} = \mu \bm{\Phi}\bm{F} .
\end{align}
 In this case, as revealed in \cite{Xing2016}, due to the turning-off effect, the
 water-filling structure cannot be achieved. To clarify this, a brief discussion is given
 as follows. First, (\ref{KKT_F}) is not equivalent to the following equation
\begin{align}\label{KKT_F_A} 
 & \hspace*{-2mm}\bm{H}^{\rm H}\bm{\Pi}^{-\frac{1}{2}}\big(\bm{I}\! +\! \bm{\Pi}^{-\frac{1}{2}}\bm{H}
  \bm{F}\bm{F}^{\rm H}\bm{H}^{\rm H}\bm{\Pi}^{-\frac{1}{2}}\big)^{-1}\bm{\Pi}^{-\frac{1}{2}}
 \bm{H}\! =\! \mu \bm{\Phi} , \!
\end{align}
 as the right inverse of $\bm{F}$  may not exist. Also (\ref{KKT_F}) is clearly not
 equivalent to (\ref{two_equation}). As discussed in \cite{Xing2016}, any eigen-channel
 can be turned off (allocated zero power) and (\ref{KKT_F}) can still be satisfied. This
 fact is referred to as the `turning-off effect', and because of it, Theorem~\ref{T1}
 cannot be achieved based on (\ref{KKT_F}).

\begin{figure*}[!b]
\vspace*{-3mm}
\begin{center}
\includegraphics[width=0.95\textwidth]{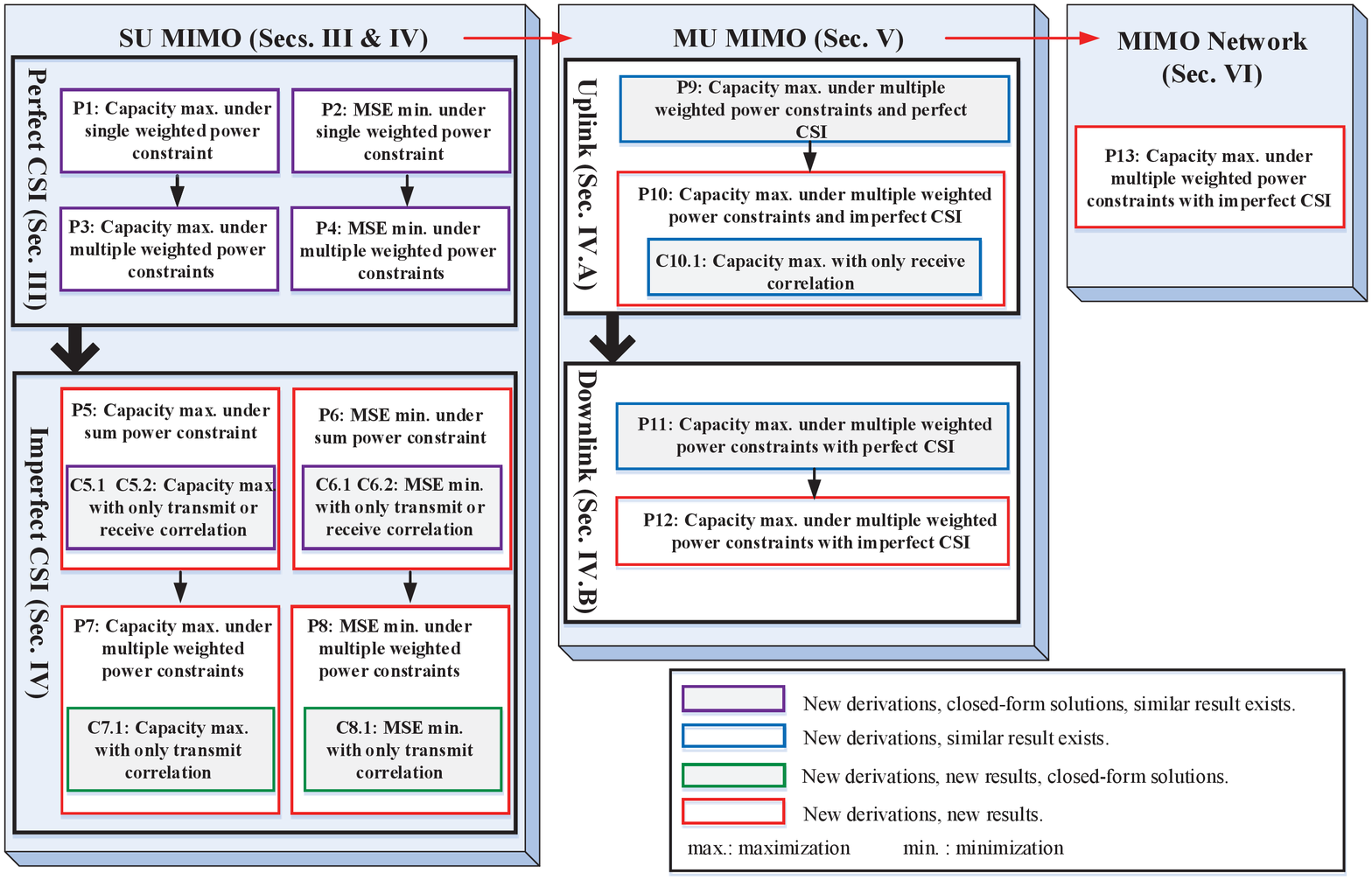}
\end{center}
\vspace*{-5mm}
\caption{Summary diagram of the MIMO design problems studied in this paper.}
\label{Fig_1}
\vspace*{-1mm}
\end{figure*}

\subsubsection{Summary of our derivation}

 Based on the KKT conditions, which are the necessary conditions of optimality for the
 corresponding optimization problem, Theorem~\ref{T1} reveals that all the solutions
 of $\bm{Q}$ satisfying the two mathematical equations in (\ref{two_equation}) have
 the structure given by (\ref{eq14}). The fundamental conclusion in Theorem~\ref{T1}
 has the following properties:
\begin{itemize}
\item It does not depend on the considered optimization problem being convex.
\item It is related to the two equations in (\ref{two_equation}), which compose
 a partial set of the KKT conditions, instead of the full set.
\item It is applicable when the involved parameters are constants or functions of the
 optimization variable.
\item It provides a common structure of the solutions that satisfy the KKT conditions.
\end{itemize}

 In the following four sections, we use Theorem~\ref{T1} to derive the optimal
 water-filling structures for the transmission covariance matrices of MIMO systems
 with different objective functions, power constraints, and CSI assumptions. A
 diagram is provided in Fig.~\ref{Fig_1} to show the organization of these different
 MIMO scenarios, their interconnections, and our new contributions. Generally, we
 investigate three kinds of MIMO systems: SU MIMO systems, MU MIMO systems, and
 network MIMO systems. More specifically, for SU MIMO systems, we investigate
 the capacity maximization and MSE minimization with multiple weighted power
 constraints and under perfect CSI. In addition, we find the optimal structures for
 the corresponding optimization problems under the imperfect CSI with Kronecker
 structured channel errors. For MU MIMO systems, both uplink and downlink are
 considered for capacity maximization with multiple weighted power constraints under
 both perfect CSI and imperfect CSI. Based on Theorem~\ref{T1}, the optimal structures
 of the transmission covariance matrices are derived. Finally, general MIMO networks
 are investigated, in which multiple nodes communicate with multiple destinations
 with arbitrary network topologies and under imperfect CSI. From Theorem~\ref{T1},
 the structures of the optimal transmission covariance matrices are derived, which
 can lead to low-complexity suboptimal solutions that can work as a benchmark
 algorithm for many complicated MIMO systems.

\begin{remark}\label{R2}
 Taking the transmission covariance matrices as optimization variables is a widely
 adopted approach for MIMO optimizations \cite{Jafar2005,Goldsmith2003a,Goldsmith2003b}.
 For MSE minimization, when a linear precoding matrix is adopted at the source, there
 may exist a constraint on the rank of the covariance matrix when the number of data
 streams is smaller than that of the transmit antennas. Unfortunately, this constraint
 is nonconvex, and rank-relaxation is often needed in solving the associated
 optimization. On the other hand, for capacity maximization, there is no rank
 constraint. This is because for MIMO capacity maximization, the number of data streams
 should also be optimized and is not given a priori. Moreover, from the information
 theory viewpoint, orthogonality between different data steams can also be realized
 via certain coding strategies in addition to the spatial multiplexing provided by
 the antennas \cite{Goldsmith2003a}. In such cases, the rank constraints imposed by
 antenna arrays do not exist.
\end{remark}

\section{Transmission Covariance Matrix Optimization for SU MIMO Systems under Perfect CSI}\label{SU_MIMO} 

 For SU MIMO systems of one source and one destination, both equipped with
 multiple antennas, the signal model is expressed by
\begin{align}\label{eq33}
 \bm{y} =& \bm{H} \bm{s} + \bm{n} ,
\end{align}
 where $\bm{y}$ is the received signal vector, $\bm{s}$ is the transmitted signal
 vector whose covariance matrix is $\bm{Q}=\mathbb{E}\big\{\bm{s}\bm{s}^{\rm H}\big\}$,
 $\bm{H}$ is the channel matrix, and $\bm{n}$ is the additive noise vector at the
 destination with the covariance matrix $\bm{R}_{\rm n}=\mathbb{E}\big\{\bm{n}
 \bm{n}^{\rm H}\big\}$.

 This section applies Theorem~\ref{T1} to the transmission covariance matrix
 optimization for SU MIMO systems under perfect CSI with the weighted sum power
 constraint and the multiple weighted power constraints, respectively. Both
 capacity maximization and MSE minimization are considered. The reason to
 investigate this `simple' transmission covariance matrix optimization is two-fold.
 First, the optimal solutions for the related problems have been derived using other
 approaches in the  existing literature. Thus these cases can be used for
 verification of our derivations using Theorem~\ref{T1}. Second, the results of
 these simple cases and their comparison with their counterparts for more
 complicated systems can help reveal the physical meanings and insights of the
 proposed solutions.

\subsection{Single Weighted Sum Power Constraint}\label{S3.1}

 For MIMO systems, when the channel statistics for different antennas are similar,
 the sum power constraint is a useful power model for transceiver optimization
 \cite{MatrixWaterFilling}. Weighted sum power constraint is a generalization of
 the sum power constraint, modeled as ${\rm Tr}(\bm{W}\bm{Q})\le P$, where $P$
 is the maximum transmit power and $\bm{W}$ is the weight matrix, which must
 be positive definite. If $\bm{W}$ has a zero eigenvalue, there is no power
 constraint at the corresponding direction, which is impractical.

 The capacity maximization problem under the weighted sum power constraint is
 formulated as follows
\begin{align}\label{capacity_weighted} 
\begin{array}{lcl}
 \textbf{P1}: & \min\limits_{\bm{Q}} & -\log\left|\bm{I} + \bm{R}_{\rm n}^{-1}\bm{H}
  \bm{Q}\bm{H}^{\rm H}\right| , \\
 & {\rm{s.t.}} & {\rm Tr}(\bm{W}\bm{Q})\le P, ~ \bm{Q}\succeq \bm{0} .
\end{array}
\end{align}
 The full set of KKT conditions for \textbf{P1} (\ref{capacity_weighted}) are given by
\begin{align}\label{eq35}
\hspace*{-2mm} \left\{ \!\!\!
\begin{array}{l}
 \bm{H}^{\rm H} \bm{R}_{\rm n}^{-\frac{1}{2}} \big( \bm{I}\! +\! \bm{R}_{\rm n}^{-\frac{1}{2}}
  \bm{H} \bm{Q} \bm{H}^{\rm H} \bm{R}_{\rm n}^{-\frac{1}{2}} \big)^{-1}\! \bm{R}_{\rm n}^{-\frac{1}{2}}
  \bm{H}\\ \hspace*{10mm} = \mu \bm{W}\! -\! \bm{\Psi} , \\
 \mu \ge 0 , ~ \mu \big({\rm Tr}(\bm{W}\bm{Q}) - P\big) = 0 , ~
  {\rm Tr}(\bm{Q}\bm{\Psi}) = 0 , \\
  \hspace*{10mm} \bm{\Psi} \succeq \bm{0} , ~{\rm Tr}(\bm{W}\bm{Q})\le P , ~ \bm{Q}\succeq \bm{0} ,
\end{array} \right.
\end{align}
 where $\mu$ is the Lagrangian multiplier associated with the constraint ${\rm Tr}(\bm{W}\bm{Q})
 \le P$ and the positive semi-definite matrix $\bm{\Psi}$ is the Lagrangian multiplier
 corresponding to the constraint $\bm{Q}\succeq \bm{0}$. Based on Theorem~\ref{T1} together
 with the replacements $\bm{\Pi}=\bm{R}_{\rm n}$ and $\bm{\Phi}=\bm{W}$, we have the following
 conclusion.

\begin{conclusion}\label{C1}
 The optimal transmission covariance matrix for $\textbf{P1}$ has the following water-filling
 structure
\begin{align}\label{eq36}
 \bm{Q}\! =& \bm{W}^{-\frac{1}{2}}\big[\bm{V}_{\bm{\mathcal{H}}}\big]_{:,1:N} \Big(\!
  \mu^{-1} \bm{I}\! -\! \big[\bm{\Lambda}_{\bm{\mathcal{H}}}\big]_{1:N,1:N}^{-2}\Big)^{+}\!
  \big[\bm{V}_{\bm{\mathcal{H}}}\big]_{:,1:N}^{\rm H} \bm{W}^{-\frac{1}{2}}\! ,
\end{align}
 where the unitary matrix $\bm{V}_{\bm{\mathcal{H}}}$ is defined by the SVD
\begin{align}\label{eq37}
 \bm{R}_{\rm n}^{-\frac{1}{2}}\bm{H}\bm{W}^{-\frac{1}{2}} =& \bm{U}_{\bm{\mathcal{H}}}
  \bm{\Lambda}_{\bm{\mathcal{H}}}\bm{V}_{\bm{\mathcal{H}}}^{\rm H}  \text{ with }
  \bm{\Lambda}_{\bm{\mathcal{H}}} \searrow .
\end{align}
\end{conclusion}

\noindent \textbf{Computation of $\mu$}:
 It is well-known that $\mu$ in Conclusion~\ref{C1} can be computed based on
\begin{align}\label{eq38}
 {\rm Tr}(\bm{W}\bm{Q}) =& P .
\end{align}
 This is obviously a standard water-level computation for the water-filling solution
 of MIMO capacity maximization. It can be computed both analytically or numerically
 \cite{Palomar03}.

 Similarly, the MSE minimization problem is formulated as
\begin{align}\label{eq39}
\begin{array}{lcl}
 \textbf{P2}: & \min\limits_{\bm{Q}} & {\rm Tr}\left(\big( \bm{I} + \bm{R}_{\rm n}^{-1}
  \bm{H}\bm{Q}\bm{H}^{\rm H}\big)^{-1}\right) , \\
 & {\rm{s.t.}} & {\rm Tr}(\bm{W}\bm{Q})\le P , ~ \bm{Q} \succeq \bm{0} .
\end{array}
\end{align}
 The full KKT conditions of \textbf{P2} (\ref{eq39}) are given by
\begin{align}\label{eq40}
\hspace*{-2mm} \left\{ \!\!\!
\begin{array}{l}
 \bm{H}^{\rm H}\bm{R}_{\rm n}^{-\frac{1}{2}}\big(\bm{I}\! +\! \bm{R}_{\rm n}^{-\frac{1}{2}}\bm{H}
  \bm{Q}\bm{H}^{\rm H}\bm{R}_{\rm n}^{-\frac{1}{2}}\big)^{-2}\bm{R}_{\rm n}^{-\frac{1}{2}}\bm{H}
  \\ \hspace*{10mm} =\! \mu \bm{W}\! -\! \bm{\Psi} , \\
 \mu \ge 0 , ~ \mu ({\rm Tr}(\bm{W}\bm{Q}) - P) = 0 , ~ {\rm Tr}(\bm{Q}\bm{\Psi}) = 0 , \\
  \hspace*{10mm} \bm{\Psi} \succeq \bm{0} , ~ {\rm Tr}(\bm{W}\bm{Q})\le P , ~ \bm{Q}\succeq \bm{0} .
\end{array} \right.
\end{align}
 Again based on Theorem~\ref{T1} with the replacements $\bm{\Pi}=\bm{R}_{\rm n}$ and
 $\bm{\Phi}=\bm{W}$,  the following conclusion holds.

\begin{conclusion}\label{C2}
 The optimal transmission covariance matrix for \textbf{P2} satisfies the following
 water-filling structure
\begin{align}\label{eq41}
 \bm{Q} =& \bm{W}^{-\frac{1}{2}}\big[\bm{V}_{\bm{\mathcal{H}}}\big]_{:,1:N}\Big(
  \mu^{-\frac{1}{2}}\big[\bm{\Lambda}_{\bm{\mathcal{H}}}\big]_{1:N,1:N}^{-1} -
  \big[\bm{\Lambda}_{\bm{\mathcal{H}}}\big]_{1:N,1:N}^{-2}\Big)^{+} \nonumber \\
 & \times \big[\bm{V}_{\bm{\mathcal{H}}}\big]_{:,1:N}^{\rm H}\bm{W}^{-\frac{1}{2}} .
\end{align}
\end{conclusion}

\noindent \textbf{Computation of $\mu$}:
 Similarly, $\mu$ in Conclusion~\ref{C2} can also be computed based on (\ref{eq38})
 analytically or numerically \cite{Palomar03}, which is a standard water-level
 computation for the water-filling solution for MIMO sum MSE minimization.

 For the capacity maximization problem \textbf{P1} and the MSE minimization problem
 \textbf{P2} for SU MIMO systems with perfect CSI and a weighted sum power constraint,
 the optimal solutions have been derived in the literature \cite{Telatar1999,Sampth01}.
 But our proposed method is different in the following two aspects. First, we solve
 the two optimizations using the same framework and case-by-case studies are avoided.
 Second, the proposed proofs are simpler and more rigorous. More specifically, the
 derivations in \cite{Telatar1999} require two steps, i.e., deriving the diagonalizable
 structure based on matrix inequality and deriving water-filling solution based on KKT
 conditions. By contrast, our work needs only one step, namely, from the KKT conditions
 to directly derive the optimal solutions. As pointed out in \cite{Xing2016}, the work
 in \cite{Sampth01} suffers from the turning-off effect, while our work successfully
 resolves this problem.

\subsection{Multiple Weighted Power Constraints}\label{S3.2}

 As each antenna in an antenna array has its own amplifier, the per-antenna power
 constraints are more practical than the sum power constraint. We consider the multiple
 weighted power constraints, which include the case of per-antenna power constraints as a
 special case. The generic multiple weighted power constraints take the following form
\begin{align}\label{per-cons} 
 {\rm Tr}(\bm{\Omega}_i\bm{Q}) \le P_i , ~ 1\le i\le I ,
\end{align}
 where $P_i$ is the $i$th power constraint with the positive semi-definite matrix
 $\bm{\Omega}_i$ as the corresponding weight matrix. When $\bm{\Omega}_i=\bm{b}_i
 \bm{b}_i^{\rm H}$, where $\bm{b}_i$ is the vector whose $i$th element is one and
 all the other elements are zeros,  we have ${\rm Tr}(\bm{\Omega}_i\bm{Q})=
 \bm{b}_i^{\rm H}\bm{Q}\bm{b}_i$. In this case, the $i$th power constraint in
 (\ref{per-cons}) becomes a per-antenna power constraint for the $i$th antenna.

 The capacity maximization for SU MIMO systems under multiple weighted power
 constraints is formulated as
\begin{align}\label{eq43}
\begin{array}{lcl}
 \textbf{P3}: & \min\limits_{\bm{Q}} & -\log\big|\bm{I}+\bm{R}_{\rm n}^{-1}\bm{H}
  \bm{Q}\bm{H}^{\rm H}\big| , \\
 & {\rm{s.t.}} & {\rm Tr}(\bm{\Omega}_i\bm{Q}) \le P_i , 1\le i\le I , ~
  \bm{Q} \succeq \bm{0} .
\end{array}
\end{align}
 The corresponding full KKT conditions are
\begin{align}\label{eq44}
\hspace*{-2mm} \left\{ \!\!\!
\begin{array}{l}
 \bm{H}^{\rm H}\bm{R}_{\rm n}^{-\frac{1}{2}}\big( \bm{I} + \bm{R}_{\rm n}^{-\frac{1}{2}}\bm{H}
  \bm{Q}\bm{H}^{\rm H}\bm{R}_{\rm n}^{-\frac{1}{2}}\big)^{-1}\bm{R}_{\rm n}^{-\frac{1}{2}}\bm{H} \\
 \hspace*{10mm} =\sum_{i=1}^I \mu_i\bm{\Omega}_i - \bm{\Psi} , \\
 \mu_i \ge 0 , ~ \mu_i\big({\rm Tr}(\bm{\Omega}_i\bm{Q}) - P_i\big)=0 , ~
  {\rm Tr}(\bm{\Omega}_i\bm{Q})\le P_i , \\
  \hspace*{5mm} 1\le i\le I , ~ {\rm Tr}(\bm{Q}\bm{\Psi})=0 , ~ \bm{\Psi} \succeq \bm{0} , ~
  \bm{Q}\succeq \bm{0} ,
\end{array} \right.
\end{align}
 where $\mu_i$ is the Lagrangian multiplier for the constraint ${\rm Tr}(\bm{\Omega}_i
 \bm{Q})\le P_i$. Based on Theorem~\ref{T1} and introducing an auxiliary variable $\mu$,
 together with the replacement $\bm{\Pi}=\bm{R}_{\rm n}$, we have the following
 conclusion for \textbf{P3}.

\begin{conclusion}\label{C3}
 The optimal transmission covariance matrix for \textbf{P3} has the following
 water-filling structure
\begin{align}\label{eq45}
 \bm{Q}\! =& \bm{\Phi}^{-\frac{1}{2}}\big[\bm{V}_{\bm{\mathcal{H}}}\big]_{:,1:N}
  \Big(\mu^{-1}\bm{I}\! -\! \big[\bm{\Lambda}_{\bm{\mathcal{H}}}\big]_{1:N,1:N}^{-2}\Big)^{+}
  \big[\bm{V}_{\bm{\mathcal{H}}}\big]_{:,1:N}^{\rm H}\bm{\Phi}^{-\frac{1}{2}} ,
\end{align}
 where $\bm{\Phi}$ is defined by
\begin{align}\label{Phi} 
 \bm{\Phi} =& \sum\nolimits_{i=1}^I \widetilde{\mu}_i\bm{\Omega}_i ,
\end{align}
 with
\begin{align}\label{eq47}
 \widetilde{\mu}_i =& \frac{\mu_i}{\mu} ,
\end{align}
 and the unitary matrix $\bm{V}_{\bm{\mathcal{H}}}$ is defined by the following SVD
\begin{align}\label{V_H} 
 \bm{R}_{\rm n}^{-\frac{1}{2}}\bm{H}\bm{\Phi}^{-\frac{1}{2}} =& \bm{U}_{\bm{\mathcal{H}}}
  \bm{\Lambda}_{\bm{\mathcal{H}}}\bm{V}_{\bm{\mathcal{H}}}^{\rm H} \text{ with }
  \bm{\Lambda}_{\bm{\mathcal{H}}} \searrow .
\end{align}
\end{conclusion}

\noindent \textbf{Computation of $\mu$ and $\mu_i$, $\forall i$}:
 It is worth noting that based on Theorem~\ref{T1}, in Conclusion~\ref{C3}, we can choose
 $\mu =1$ and the remaining task becomes how to compute $\mu_i$, $\forall i$, using
 numerical methods, such as subgradient methods. More specifically, to make the
 formulation consistent with its counterpart under the sum power constraint, $\mu$ is
 introduced, which makes sure
\begin{align}\label{eq49}
 {\rm Tr}\left(\sum\nolimits_{i=1}^I \widetilde{\mu}_i \bm{\Omega}_i\bm{Q}\right) =&
  \sum\nolimits_{i=1}^I P_i .
\end{align}
 When $\widetilde{\mu}_i$ are given, the solution of $\mu$ is unique. For this
 formulation, therefore, the remaining task is to search for $\widetilde{\mu}_i$
 instead of $\mu_i$, which can be solved by using subgradient algorithm. Thus,
 the benefit of introducing $\mu$ is threefold.
\begin{itemize}
\item First, it reveals that the multiple weighted constraints are equivalent to a single
 weighted constraint with proper weights $\widetilde{\mu}_i$. This can be verified by
 simply comparing the optimal solutions of \textbf{P1} and \textbf{P3}.
\item Second, in the process of searching for $\widetilde{\mu}_i$, $\bm{Q}$ can be
 restricted in a reasonable region, because (\ref{eq49}) always holds for the proposed
 formula. Hence, this formula can simplify the choice of initial $\widetilde{\mu}_i$.
\item The artful usage of $\mu$ can facilitate optimal solution derivations in some
 more difficult cases, e.g., the case under multiple weighted power constraints and
 imperfect CSI. In other cases, when it is unnecessary to compute $\mu$, such as for
 \textbf{P3}, we can directly set $\mu =1$.
\end{itemize}
 For water-filling solutions, the Lagrangian multipliers are computed iteratively
 or numerically. The solution given by Conclusion~\ref{C3}, however, can be taken as a
 closed-form solution.

 The MSE minimization problem under multiple weighted power constraints on the other hand
 is formulated as
\begin{align}\label{eq50}
\begin{array}{lcl}
 \textbf{P4}: & \min\limits_{\bm{Q}} & {\rm Tr}\Big(\big( \bm{I}+\bm{R}_{\rm n}^{-1}\bm{H}
  \bm{Q}\bm{H}^{\rm H}\big)^{-1}\Big) , \\
 & {\rm{s.t.}} & {\rm Tr}(\bm{\Omega}_i\bm{Q}) \le P_i , 1\le i\le I , ~
  \bm{Q} \succeq \bm{0} .
\end{array}
\end{align}
 After tedious but straightforward derivations, the corresponding full KKT conditions
 can be obtained as follows
\begin{align}\label{eq51}
\hspace*{-2mm} \left\{ \!\!\!
\begin{array}{l}
 \bm{H}^{\rm H}\bm{R}_{\rm n}^{-\frac{1}{2}}\big( \bm{I} + \bm{R}_{\rm n}^{-\frac{1}{2}}\bm{H}
  \bm{Q}\bm{H}^{\rm H}\bm{R}_{\rm n}^{-\frac{1}{2}}\big)^{-2}\bm{R}_{\rm n}^{-\frac{1}{2}}\bm{H} \\
 \hspace*{10mm} =\sum_{i=1}^I \mu_i\bm{\Omega}_i - \bm{\Psi} , \\
 \mu_i \ge 0 , ~ \mu_i\big({\rm Tr}(\bm{\Omega}_i\bm{Q}) - P_i\big)=0 , ~
  {\rm Tr}(\bm{\Omega}_i\bm{Q})\le P_i , \\
  \hspace*{5mm} 1\le i\le I , ~ {\rm Tr}(\bm{Q}\bm{\Psi})=0 , ~ \bm{\Psi} \succeq \bm{0} , ~
  \bm{Q}\succeq \bm{0} ,
\end{array} \right.
\end{align}
 By following exactly the same procedure as that for the capacity maximization, the
 optimal solution for \textbf{P4} is solved by Theorem~\ref{T1} and the following
 conclusion is obtained. Similar to the capacity maximization, the solution given in
 Conclusion~\ref{C4} is also a closed-form.

\begin{conclusion}\label{C4}
 The optimal transmission covariance matrix for \textbf{P4} has the following
 water-filling structure
\begin{align}\label{eq52}
 \bm{Q} =& \bm{\Phi}^{-\frac{1}{2}}\big[\bm{V}_{\bm{\mathcal{H}}}\big]_{:,1:N}
  \Big( \mu^{-\frac{1}{2}}\big[\bm{\Lambda}_{\bm{\mathcal{H}}}\big]_{1:N,1:N}^{-1} -
  \big[\bm{\Lambda}_{\bm{\mathcal{H}}}\big]_{1:N,1:N}^{-2}\Big)^{+} \nonumber \\
 & \times \big[\bm{V}_{\bm{\mathcal{H}}}\big]_{:,1:N}^{\rm H}\bm{\Phi}^{-\frac{1}{2}} ,
\end{align}
 where $\bm{\Phi}$ is given by (\ref{Phi}) and $\bm{V}_{\bm{\mathcal{H}}}$ is
 defined in (\ref{V_H}).
\end{conclusion}

\noindent \textbf{Computation of $\mu$ and $\mu_i$, $\forall i$}: The scalar $\mu$
 can be computed based on the equality (\ref{eq49}). When $\widetilde{\mu}_i$ are
 given, the solution of $\mu$ is unique. For the formulation in Conclusion~\ref{C4},
 the remaining task is also to search for $\widetilde{\mu}_i$ instead of $\mu_i$,
 which can be solved by using subgradient algorithms.

 For \textbf{P3} and \textbf{P4}, our distinct contribution is three-fold. First,
 although \textbf{P3} has been studied in \cite{Mai2011}, our proposed method is simpler
 and it overcomes the turning-off effect. Furthermore, we do not require that the
 channel matrix must be column or row full rank. Second, for \textbf{P4}, no solution
 has been discovered before. Based on our proposed framework, the optimal solution
 structure is derived for this problem. Third, our work reveals that the effect of
 multiple weighted constraints are in nature equivalent to a weighted power constraint
 by comparing the solutions for \textbf{P3} and \textbf{P4} with those for
 \textbf{P1} and \textbf{P2}. Again, like \textbf{P1} and \textbf{P2}, the optimal
 solutions given in Conclusions~\ref{C3} and \ref{C4} are in closed-form.

\section{Transmission Covariance Matrix Optimization for SU MIMO Systems Under Imperfect CSI}\label{S4}

 In practice, it is unrealistic to assume perfect CSI, as CSI must be estimated
 via training process. The limited training length and the ubiquitous noise together
 with the time varying nature of wireless channels make channel estimation error
 inevitable \cite{XingTSP201501}. By taking channel estimation error into account,
 the CSI can usually be modeled as \cite{XingTSP201501,Ding09}:
\begin{align}\label{error_model} 
 \bm{H} =& \widehat{\bm{H}} + \Delta\bm{H} \text{ with } \Delta\bm{H} = \bm{R}_{\rm R}^{\frac{1}{2}}
  \bm{H}_{\rm W}\bm{R}_{\rm T}^{\frac{1}{2}} .
\end{align}
 In this model, $\widehat{\bm{H}}$ is the estimated channel matrix, $\Delta\bm{H}$ is
 the CSI error, $\bm{R}_{\rm R}$ and $\bm{R}_{\rm T}$ are the corresponding
 receive and transmit spatial correlation matrices, respectively, while
 $\bm{H}_{\rm W}$ is a random matrix whose elements are independently and identically
 distributed (i.i.d.) random variables following the complex Gaussian distribution
 $\mathcal{CN}(0,1)$.

\subsection{Total Power Constraint with Imperfect CSI}\label{S4.1}

\subsubsection{Capacity maximization}

 With the CSI error (\ref{error_model}) and the white noise i.e., $\bm{R}_{\rm n}
 =\sigma_n^2\bm{I}$, the capacity maximization problem under a total power constraint
 is formulated as
\begin{align}\label{eq54}
\begin{array}{lcl}
 \textbf{P5}: & \min\limits_{\bm{Q}} & -\log \left| \bm{I} + \bm{K}_{\rm n}^{-1}\widehat{\bm{H}}
  \bm{Q}\widehat{\bm{H}}^{\rm H} \right| , \\
 & {\rm{s.t.}} & {\rm Tr}(\bm{Q})\le P , ~ \bm{Q}\succeq \bm{0},
\end{array}
\end{align}
 where the equivalent noise covariance matrix $\bm{K}_{\rm n}$ is given by
\begin{align}\label{eq55}
 \bm{K}_{\rm n} =& \sigma_n^2 \bm{I} + {\rm Tr}(\bm{R}_{\rm T} \bm{Q}) \bm{R}_{\rm R} ,
\end{align}
 and $P$ is the total power constraint.

 Based on the following matrix derivative equation
\begin{align}\label{eq56}
 & \frac{\partial\! \log\! \big|\bm{I}\! +\! \bm{K}_{\rm n}^{-1} \widehat{\bm{H}}\bm{Q}
  \widehat{\bm{H}}^{\rm H}\big|}{\partial \bm{Q}^*}\! =\!
  \frac{\partial\! \log\! \big|\bm{K}_{\rm n}\!\! +\! \widehat{\bm{H}}\bm{Q}\widehat{\bm{H}}^{\rm H}\big|}
  {\partial \bm{Q}^*}\! -\! \frac{\partial\! \log\! |\bm{K}_{\rm n}|}{\partial \bm{Q}^*}  \nonumber \\
 & \hspace*{5mm}= \widehat{\bm{H}}^{\rm H}\left(\bm{K}_{\rm n}+\widehat{\bm{H}}\bm{Q}
  \widehat{\bm{H}}^{\rm H}\right)^{-1}\widehat{\bm{H}} - {\rm Tr}\left(\bm{K}_{\rm n}^{-1}
  \bm{R}_{\rm R}\right) \bm{R}_{\rm T} \nonumber \\
 & \hspace*{9mm}+{\rm Tr}\left(\big(\bm{K}_{\rm n} + \widehat{\bm{H}}\bm{Q}
  \widehat{\bm{H}}^{\rm H}\big)^{-1}\bm{R}_{\rm R}\right)\bm{R}_{\rm T} ,
\end{align}
 the full KKT conditions of \textbf{P5} can be derived as follows
\begin{align}\label{eq57}
\hspace*{-2mm} \left\{ \!\!\!
\begin{array}{l}
 \widehat{\bm{H}}^{\rm H}\bm{K}_{\rm n}^{-\frac{1}{2}}\Big(\bm{I}\! +\! \bm{K}_{\rm n}^{-\frac{1}{2}}
  \widehat{\bm{H}}\bm{Q}\widehat{\bm{H}}^{\rm H}\bm{K}_{\rm n}^{-\frac{1}{2}}\Big)^{-1}
  \bm{K}_{\rm n}^{-\frac{1}{2}}\widehat{\bm{H}}\! =\! \mu \bm{I} \\
 \hspace*{4mm} +\! {\rm Tr}\bigg(\Big(\bm{K}_{\rm n}^{-1}\! -\! \big(\bm{K}_{\rm n}\! +\!
  \widehat{\bm{H}}\bm{Q}\widehat{\bm{H}}^{\rm H}\big)^{-1}\Big)
  \bm{R}_{\rm R}\bigg)\bm{R}_{\rm T}\! -\! \bm{\Psi} , \\
 \mu \ge 0 , ~ \mu\left({\rm Tr}(\bm{Q}) - P\right) = 0 , ~ {\rm Tr}(\bm{Q}\bm{\Psi}) = 0 , \\
  \hspace*{4mm} \bm{\Psi} \succeq \bm{0} , ~ {\rm Tr}(\bm{Q})\le P , ~ \bm{Q}\succeq \bm{0} .
\end{array} \right.
\end{align}
 Using Theorem~\ref{T1}, the following conclusion for the optimal solution of
 \textbf{P5} can be obtained.

\begin{conclusion}\label{C5}
 The optimal transmission covariance matrix for \textbf{P5} has the following
 water-filling structure
\begin{align}\label{Conclusion_5} 
 \bm{Q}\! =& \bm{\Phi}^{-\frac{1}{2}} \big[\bm{V}_{\bm{\mathcal{H}}}\big]_{:,1:N} \Big( \mu^{-1}
  \bm{I}\! -\! \big[\bm{\Lambda}_{\bm{\mathcal{H}}}\big]_{1:N,1:N}^{-2}\Big)^+
  \big[\bm{V}_{\bm{\mathcal{H}}}\big]_{:,1:N}^{\rm H}\bm{\Phi}^{-\frac{1}{2}} \! ,
\end{align}
 where $\bm{\Phi}$ is given by
\begin{align}\label{Phi_56} 
 \bm{\Phi}\! =&\! \bm{I}\! +\! \frac{1}{\mu}{\rm Tr}\bigg(\Big(\bm{K}_{\rm n}^{-1}\! -\!
  \big(\bm{K}_{\rm n}\! +\! \widehat{\bm{H}}\bm{Q}\widehat{\bm{H}}^{\rm H}\big)^{-1}\Big)
  \bm{R}_{\rm R}\bigg)\bm{R}_{\rm T} ,
\end{align}
 and the unitary matrix $\bm{V}_{\bm{\mathcal{H}}}$ is defined by the SVD
\begin{align}\label{eq60}
 & \bm{K}_{\rm n}^{-\frac{1}{2}}\widehat{\bm{H}}\bm{\Phi}^{-\frac{1}{2}} =
  \bm{U}_{\bm{\mathcal{H}}}\bm{\Lambda}_{\bm{\mathcal{H}}}\bm{V}_{\bm{\mathcal{H}}}^{\rm H}
  \text{ with } \bm{\Lambda}_{\bm{\mathcal{H}}} \searrow .
\end{align}
\end{conclusion}

 In Conclusion~\ref{C5}, the optimization variable $\bm{Q}$ appears in the both sides
 of the equality (\ref{Conclusion_5}). Hence, although the optimal structure is derived,
 it is not in a closed-form. Next, two special cases are considered: 1)~\textbf{P5.1}:
 the transmit antennas have no spatial correlation, and 2)~\textbf{P5.2}: the receive
 antennas have no spatial correlation. In these two special cases, the optimal solutions
 can be derived in closed-forms.

 \textbf{P5.1}: $\bm{R}_{\rm T}\propto {\bf{I}}$. It can be shown that $\bm{\Phi}
 \propto{\bf{I}}$, and $\bm{K}_{\rm n}$ of (\ref{eq55}) becomes
\begin{align}\label{eq61}
 \bm{K}_{\rm n} =& \sigma_n^2\bm{I} + r_b {\rm Tr}(\bm{Q})\bm{R}_{\rm R} = \sigma_n^2 \bm{I}
  + r_b\,P\,\bm{R}_{\rm R} .
\end{align}
 Consequently, Conclusion~\ref{C5} can be greatly simplified.

\vspace*{2mm}

\noindent
\textbf{Conclusion~5.1} \textit{When $\bm{R}_{\rm T}=r_b\bm{I}$, the optimal solution
 of \textbf{P5} can be simplified as}
\begin{align}\label{eq62}
 \bm{Q} =& \big[\bm{V}_{\bm{\mathcal{H}}}\big]_{:,1:N}\Big(\widetilde{\mu}^{-1}\bm{I} -
  \big[\bm{\Lambda}_{\bm{\mathcal{H}}}\big]_{1:N,1:N}^{-2}\Big)^{+}
  \big[\bm{V}_{\bm{\mathcal{H}}}\big]_{:,1:N}^{\rm H} ,
\end{align}
 \textit{where the unitary matrix $\bm{V}_{\bm{\mathcal{H}}}$ is defined by the following SVD}
\begin{align}\label{V_H_C} 
 \bm{K}_{\rm n}^{-\frac{1}{2}}\widehat{\bm{H}} =& \bm{U}_{\bm{\mathcal{H}}}
  \bm{\Lambda}_{\bm{\mathcal{H}}}\bm{V}_{\bm{\mathcal{H}}}^{\rm H}
  \text{ with } \bm{\Lambda}_{\bm{\mathcal{H}}} \searrow .
\end{align}

\noindent \textbf{Computation of $\widetilde{\mu}$}: The real parameter $\widetilde{\mu}$
 can be computed based on the equality
\begin{align}\label{eq64}
 {\rm Tr}(\bm{Q}) =& P .
\end{align}
 It can be seen that the computation of $\widetilde{\mu}$ is a standard water-level
 computation of the water-filling solution for MIMO capacity maximization. In other
 words, $\widetilde{\mu}$ in Conclusion~{5.1} can be computed analytically or
 numerically \cite{Palomar03}.

 Here, we would like to highlight that based on Conclusion~\ref{C5}, $\widetilde{\mu}$ in
 Conclusion~{5.1} actually equals to
\begin{align}\label{mu_f} 
 \widetilde{\mu}\! =& \mu \left(\! 1\! +\! \frac{r_b}{\mu}{\rm Tr}\left(\! \Big( \bm{K}_{\rm n}^{-1}
  \! -\! \big(\bm{K}_{\rm n}\! +\! \widehat{\bm{H}}\bm{Q}\widehat{\bm{H}}^{\rm H}\big)^{-1} \Big)
 \bm{R}_{\rm R}\! \right)\! \right) \!.
\end{align}
 It might appear that $\widetilde{\mu}$ is a function of $\bm{Q}$ and thus there might
 exist some constraint in computing $\widetilde{\mu}$. However this is actually not true.
 Because of the existence of the variable $\mu$, for the computed $\bm{Q}$ and
 $\widetilde{\mu}$, it is always possible to find a $\mu$ to make sure that the above
 equation holds. Note that substituting $\bm{Q}$ into (\ref{mu_f}), it can be proved
 that the solved  $\mu$ is always positive. As a result, $\widetilde{\mu}$ in
 Conclusion~{5.1} can be taken as an unconstrained variable. In a nutshell, the solution
 given by Conclusion~{5.1} is a closed-form solution.

 \textbf{P5.2}: When $\bm{R}_{\rm R}=r_a\bm{I}$, the following equation holds
\begin{align}\label{eq66}
 \bm{K}_{\rm n} =& \underbrace{\left(\sigma_n^2+r_a{\rm Tr}(\bm{Q}\bm{R}_{\rm T})
 \right)}_{\triangleq k_n} \bm{I}.
\end{align}
 Based on the definition of $k_n$, we have the following two equivalent equations
\begin{align}\label{power} 
 {\rm Tr}(\bm{Q})=P \leftrightarrow \frac{1}{k_n}{\rm Tr}\left(\left(\sigma_n^2\bm{I}
  +r_a \, P\, \bm{R}_{\rm T}\right)\bm{Q}\right)=P .
\end{align}
 Then by defining
\begin{align}\label{Q_Q}  
 \widetilde{\bm{Q}} =& \frac{1}{k_n}\bm{Q} ,
\end{align}
 the first KKT condition can be rewritten as
\begin{align}\label{KKT-2} 
 & \widehat{\bm{H}}^{\rm H}\left(\bm{I} + \widehat{\bm{H}}\widetilde{\bm{Q}}\widehat{\bm{H}}^{\rm H}
  \right)^{-1}\widehat{\bm{H}} = \mu \, k_n\bm{I} \nonumber \\
 & \hspace*{5mm} + r_a {\rm Tr}\left(\widehat{\bm{H}}^{\rm H}\left(\bm{I}\! + \!
  \widehat{\bm{H}}\widetilde{\bm{Q}}\widehat{\bm{H}}^{\rm H}\right)^{-1}
  \widehat{\bm{H}}\widetilde{\bm{Q}}\right)\bm{R}_{\rm T}\! -\! k_n \bm{\Psi} .
\end{align}
 When $\widetilde{\bm{Q}}$ has been computed, to derive $\bm{Q}$ based on (\ref{Q_Q}),
 $k_n$ should be computed. This is straightforward as ${\rm Tr}(\bm{Q})=P$. Based on
 (\ref{Q_Q}), it is obvious that
\begin{align}\label{k_n_equ} 
 k_n =& \frac{{\rm{Tr}}(\bm{Q})}{{\rm{Tr}}(\widetilde{\bm{Q}})} =
  \frac{P}{{\rm{Tr}}(\widetilde{\bm{Q}})} .
\end{align}
 Right multiplying $\widetilde{\bm{Q}}$ and taking trace operation on the both sides
 of (\ref{KKT-2}), we have
\begin{align}\label{eq71}
 & {\rm Tr}\left(\widehat{\bm{H}}^{\rm H}\left(\bm{I} + \widehat{\bm{H}}\widetilde{\bm{Q}}
  \widehat{\bm{H}}^{\rm H}\right)^{-1} \widehat{\bm{H}}\widetilde{\bm{Q}}\right)
  = \mu \,k_n {\rm Tr}\big(\widetilde{\bm{Q}}\big) \nonumber \\
 & \hspace{4mm} + r_a {\rm Tr}\! \left( \widehat{\bm{H}}^{\rm H}\left(\bm{I}\! +\! \widehat{\bm{H}}
  \widetilde{\bm{Q}}\widehat{\bm{H}}^{\rm H}\right)^{-1}\! \widehat{\bm{H}}\widetilde{\bm{Q}}
  \right) {\rm Tr}\! \left( \bm{R}_{\rm T}\widetilde{\bm{Q}}\right)\! ,
\end{align}
 based on which the following equalities are obtained
\begin{align}\label{KKT_1_a} 
 & {\rm Tr}\left( \widehat{\bm{H}}^{\rm H} \left( \bm{I} + \widehat{\bm{H}} \widetilde{\bm{Q}}
  \widehat{\bm{H}}^{\rm H} \right)^{-1} \widehat{\bm{H}} \widetilde{\bm{Q}} \right)
  =\frac{ \mu k_n {\rm Tr}\big(\widetilde{\bm{Q}}\big) }{ 1 - r_a {\rm Tr} \big( \bm{R}_{\rm T}
  \widetilde{\bm{Q}} \big)} \nonumber \\
 & \hspace*{10mm} =\frac{\mu k_n {\rm Tr}(\bm{Q})}{k_n - r_a{\rm Tr}(\bm{R}_{\rm T}\bm{Q})}
  = \frac{\mu k_n P}{\sigma_n^2} .
\end{align}
 From (\ref{KKT_1_a}), the first KKT condition (\ref{KKT-2}) becomes
\begin{align}\label{eq73}
 \widehat{\bm{H}}^{\rm H}\! \Big(\! \bm{I}\! +\! \widehat{\bm{H}} \widetilde{\bm{Q}} \widehat{\bm{H}} \Big)^{-1}
  \! \widehat{\bm{H}} =& \frac{\mu k_n}{\sigma_n^2} \left(\sigma_n^2 \bm{I}\! +\!
  r_a P \bm{R}_{\rm T}\right)\! -\! k_n \bm{\Psi} .\!
\end{align}
 Therefore, based on Theorem~\ref{T1} and defining
\begin{align}
 \widetilde{\mu} =& \frac{\mu k_n}{\sigma_n^2} , \label{eq74} \\
 \bm{\Phi} =& \sigma_n^2\bm{I}+r_a P\, \bm{R}_{\rm T} , \label{Phi_C} 
\end{align}
 we have
\begin{align}\label{eq76}
 & \widetilde{\bm{Q}}\! =\! \bm{\Phi}^{-\frac{1}{2}}\big[\bm{V}_{\bm{\mathcal{H}}}\big]_{:,1:N}
  \Big(\widetilde{\mu}^{-1}\bm{I}\! -\! \big[\bm{\Lambda}_{\bm{\mathcal{H}}}\big]_{1:N,1:N}^{-2}
  \Big)^{+}\big[\bm{V}_{\bm{\mathcal{H}}}\big]_{:,1:N}^{\rm H}\bm{\Phi}^{-\frac{1}{2}} ,
\end{align}
 in which the unitary matrix $\bm{V}_{\bm{\mathcal{H}}}$ is defined by the SVD
\begin{align}\label{V_H_2} 
 \widehat{\bm{H}} \bm{\Phi}^{-\frac{1}{2}} =& \bm{U}_{\bm{\mathcal{H}}}
  \bm{\Lambda}_{\bm{\mathcal{H}}}\bm{V}_{\bm{\mathcal{H}}}^{\rm H} \text{ with }
  \bm{\Lambda}_{\bm{\mathcal{H}}} \searrow .
\end{align}
 Based on the water-filling structure of (\ref{eq76}) together with (\ref{k_n_equ}),
 we arrive at the following conclusion.

\vspace*{2mm}

\noindent
\textbf{Conclusion~5.2} \textit{When $\bm{R}_{\rm R}=r_a\bm{I}$, the optimal solution
 of \textbf{P5} can be simplified as}
\begin{align}\label{eq78}
 \bm{Q} =& \frac{P}{{\rm Tr}\big(\widetilde{\bm{Q}}\big)} \widetilde{\bm{Q}} .
\end{align}

\noindent \textbf{Computation of ${\tilde \mu}$}:
 The real scalar $\widetilde{\mu}$ in Conclusion~{5.2} can be computed using the
 following equality which is derived based on (\ref{power})
\begin{align}\label{eq79}
 {\rm Tr}\big(\bm{\Phi}\widetilde{\bm{Q}}\big) =& P .
\end{align}
 It can be seen that the computation of $\widetilde{\mu}$ is a standard water-filling
 computation for MIMO capacity maximization. Therefore, $\widetilde{\mu}$ can be computed
 analytically or numerically \cite{Palomar03}.

 Based on Conclusion~\ref{C5}, $\widetilde{\mu}$ in Conclusion~{5.2} equals to
\begin{align}\label{eq80}
 \widetilde{\mu} =& \frac{k_n}{\sigma_n^2} \mu .
\end{align}
 For any computed $\widetilde{\mu}$ and $\bm{Q}$  or $k_n$ (as $k_n$ is a function
 of $\bm{Q}$), it is always possible to compute a $\mu$ to ensure that the above
 equality hold. This means that $\widetilde{\mu}$ can be directly solved based on
 (\ref{eq79}) without any additional constraint. Hence, the solution given in
 Conclusion~{5.2} is a closed-form solution.

\subsubsection{MSE minimization}

 The MSE minimization problem can similarly be formulated in the following form
\begin{align}\label{eq81}
\begin{array}{lcl}
 \textbf{P6}: & \min\limits_{\bm{Q}} & {\rm Tr} \left( \left( \bm{I} + \bm{K}_{\rm n}^{-1}
  \widehat{\bm{H}} \bm{Q} \widehat{\bm{H}}^{\rm H} \right)^{-1} \right) , \\
 & {\rm{s.t.}} & {\rm Tr}(\bm{Q})\le P , ~ \bm{Q}\succeq \bm{0} ,
\end{array}
\end{align}
 where $\bm{K}_{\rm n}$ is given in (\ref{eq55}). Based on the following matrix
 derivative equality
\begin{align}\label{eq82}
 & \frac{\partial {\rm Tr}\Big(\big( \bm{I} + \bm{K}_{\rm n}^{-1} \widehat{\bm{H}}
  \widetilde{\bm{Q}} \widehat{\bm{H}}\big)^{-1}\Big)}{\partial \bm{Q}^{*}} = \nonumber \\
 & \hspace*{4mm} - \widehat{\bm{H}}^{\rm H} \bm{K}_{\rm n}^{-\frac{1}{2}} \Big(
  \bm{I} + \bm{K}_{\rm n}^{-\frac{1}{2}} \widehat{\bm{H}} \bm{Q} \widehat{\bm{H}}^{\rm H}
   \bm{K}_{\rm n}^{-\frac{1}{2}}\Big)^{-2} \bm{K}_{\rm n}^{-\frac{1}{2}} \widehat{\bm{H}} \nonumber \\
 & \hspace*{4mm} + {\rm Tr} \bigg( \bm{K}_{\rm n}^{-1} \widehat{\bm{H}} \bm{Q}^{\frac{1}{2}}
  \Big( \bm{I} + \bm{Q}^{\frac{1}{2}} \widehat{\bm{H}}^{\rm H} \bm{K}_{\rm n}^{-1}
  \widehat{\bm{H}} \bm{Q}^{\frac{1}{2}} \Big)^{-2} \nonumber \\
 & \hspace{10mm} \times \bm{Q}^{\frac{1}{2}} \widehat{\bm{H}}^{\rm H}
  \bm{K}_{\rm n}^{-1} \bm{R}_{\rm R}\bigg) \bm{R}_{\rm T} ,
\end{align}
 the full KKT conditions of \textbf{P6} can be derived to be
\hspace*{-2mm}\begin{align}\label{eq83}
\left\{\begin{array}{l}\!\!\!
 \widehat{\bm{H}}^{\rm H} \bm{K}_{\rm n}^{-\frac{1}{2}} \! \Big(\! \bm{I}\! +\! \bm{K}_{\rm n}^{-\frac{1}{2}}
  \widehat{\bm{H}} \bm{Q} \widehat{\bm{H}}^{\rm H} \bm{K}_{\rm n}^{-\frac{1}{2}}\Big)^{-2}\!
  \bm{K}_{\rm n}^{-\frac{1}{2}} \widehat{\bm{H}}\! =\! \mu \bm{I} \\
 \hspace*{4mm} + {\rm Tr}\bigg( \bm{K}_{\rm n}^{-1} \widehat{\bm{H}} \bm{Q}^{\frac{1}{2}}
  \Big( \bm{I} + \bm{Q}^{\frac{1}{2}} \widehat{\bm{H}}^{\rm H} \bm{K}_{\rm n}^{-1}
  \widehat{\bm{H}} \bm{Q}^{\frac{1}{2}} \Big)^{-2} \\
 \hspace{8mm} \times \bm{Q}^{\frac{1}{2}} \widehat{\bm{H}}^{\rm H} \bm{K}_{\rm n}^{-1}
  \bm{R}_{\rm R} \bigg) \bm{R}_{\rm T} - \bm{\Psi} ,  \\
 \mu \ge 0 , ~ \mu ({\rm Tr}(\bm{Q})-P)=0 , ~ {\rm Tr}(\bm{Q}\bm{\Psi})=0 , \\
  \hspace*{4mm} \bm{\Psi} \succeq \bm{0} , ~ {\rm Tr}(\bm{Q})\le P , ~ \bm{Q}\succeq \bm{0} .
\end{array}\right. \!\!
\end{align}
 Based on Theorem~\ref{T1}, we have the following conclusion.

\begin{conclusion}\label{C6}
 The optimal transmission covariance matrix for \textbf{P6} has the following
 water-filling structure
\begin{align}\label{eq84}
 \bm{Q} =& \bm{\Phi}^{-\frac{1}{2}}\big[\bm{V}_{\bm{\mathcal{H}}}\big]_{:,1:N} \Big(
  \mu^{-\frac{1}{2}}\big[\bm{\Lambda}_{\bm{\mathcal{H}}}\big]_{1:N,1:N}^{-1} -
  \big[\bm{\Lambda}_{\bm{\mathcal{H}}}\big]_{1:N,1:N}^{-2}\Big)^{+} \nonumber \\
 & \times \big[\bm{V}_{\bm{\mathcal{H}}}\big]_{:,1:N}^{\rm H}\bm{\Phi}^{-\frac{1}{2}} ,
\end{align}
 where $\bm{\Phi}$ is given by
\begin{align}\label{eq85}
 \bm{\Phi} =& \bm{I} + \frac{1}{\mu} {\rm Tr}\bigg( \bm{K}_{\rm n}^{-1} \widehat{\bm{H}}
  \bm{Q}^{\frac{1}{2}} \Big( \bm{I} + \bm{Q}^{\frac{1}{2}} \widehat{\bm{H}}^{\rm H}
  \bm{K}_{\rm n}^{-1} \widehat{\bm{H}} \bm{Q}^{\frac{1}{2}} \Big)^{-2} \nonumber \\
 & \times \bm{Q}^{\frac{1}{2}} \widehat{\bm{H}}^{\rm H} \bm{K}_{\rm n}^{-1}
  \bm{R}_{\rm R} \bigg) \bm{R}_{\rm T} ,
\end{align}
 and the unitary matrix $\bm{V}_{\bm{\mathcal{H}}}$ is defined by the SVD
\begin{align}\label{eq86}
 \bm{K}_{\rm n}^{-\frac{1}{2}} \widehat{\bm{H}} \bm{\Phi}^{-\frac{1}{2}} =&
  \bm{U}_{\bm{\mathcal{H}}} \bm{\Lambda}_{\bm{\mathcal{ H}}} \bm{V}_{\bm{\mathcal{H}}}^{\rm H}
  \text{ with } \bm{\Lambda}_{\bm{\mathcal{H}}} \searrow .
\end{align}
\end{conclusion}

 The solution given by Conclusion~\ref{C6} is not in closed-form. Similarly, two
 special cases are considered: 1)~\textbf{P6.1}: the transmit antennas have no
 spatial correlation, and 2)~\textbf{P6.2}: the receive antennas have no spatial
 correlation.

 \textbf{P6.1}: For $\bm{R}_{\rm T}\propto \bm{I}$, we have the following simplified
 solution.

\vspace*{2mm}

\noindent
\textbf{Conclusion~6.1} \textit{When $\bm{R}_{\rm T}=r_b\bm{I}$, the optimal solution
 of \textbf{P6} has the following water-filling structure}
\begin{align}\label{eq87}
 \bm{Q}\! =& \! \big[\bm{V}_{\bm{\mathcal{H}}}\big]_{:,1:N}\! \Big( \widetilde{\mu}^{-1}
  \big[\bm{\Lambda}_{\bm{\mathcal{H}}}\big]_{1:N,1:N}^{-1}\! -\!
  \big[\bm{\Lambda}_{\bm{\mathcal{H}}}\big]_{1:N,1:N}^{-2} \Big)^{+}\!
  \big[\bm{V}_{\bm{\mathcal{H}}}\big]_{:,1:N}^{\rm H} , \!
\end{align}
 \textit{where the unitary matrix $\bm{V}_{\bm{\mathcal{H}}}$ is defined in (\ref{V_H_C}).}

\noindent \textbf{Computation of $\widetilde{\mu}$}:
 The real scalar $\widetilde{\mu}$ can be computed based on (\ref{eq64}). Obviously,
 this is a standard water-level computation for MIMO sum MSE minimization. In other words,
 $\widetilde{\mu}$ can be computed analytically or numerically \cite{Palomar03}.

 \textbf{P6.2}: For $\bm{R}_{\rm R}=r_a \bm{I}$, with the definition (\ref{Q_Q}), the
 first KKT condition can be reformulated as
\begin{align}\label{eq88}
 \widehat{\bm{H}}^{\rm H}\! \Big(\! \bm{I}\! +\! \widehat{\bm{H}} \widetilde{\bm{Q}}
  \widehat{\bm{H}}^{\rm H} \Big)^{-2} \widehat{\bm{H}}\! =& \frac{\mu k_n}{\sigma_n^2}
  \big(\sigma_n^2 \bm{I}\! +\! r_a P \bm{R}_{\rm T}\big)\! -\! k_n \bm{\Psi} .\!
\end{align}
 Based on Theorem~\ref{T1} and recalling the definitions $\widetilde{\mu}$ (\ref{eq74})
 and $\bm{\Phi}$ (\ref{Phi_C}), we have
\begin{align}\label{eq89}
 \widetilde{\bm{Q}} =& \bm{\Phi}^{-\frac{1}{2}} \big[\bm{V}_{\bm{\mathcal{H}}}\big]_{:,1:N}
  \Big( \widetilde{\mu}^{-\frac{1}{2}} \big[\bm{\Lambda}_{\bm{\mathcal{H}}}\big]_{1:N,1:N}^{-1}
  - \big[\bm{\Lambda}_{\bm{\mathcal{H}}}\big]_{1:N,1:N}^{-2} \Big)^{+} \nonumber \\
 & \times \big[\bm{V}_{\bm{\mathcal{H}}}\big]_{:,1:N}^{\rm H} \bm{\Phi}^{-\frac{1}{2}} ,
\end{align}
 where the unitary matrix $\bm{V}_{\bm{\mathcal{H}}}$ is specified by (\ref{V_H_2}).
 Together with (\ref{k_n_equ}), we arrive at the following conclusion.

\vspace*{2mm}

\noindent
\textbf{Conclusion~6.2} \textit{When $\bm{R}_{\rm R}=r_a\bm{I}$, the optimal $\bm{Q}$ of
 \textbf{P6} satisfies the following structure}
\begin{align}\label{eq90}
 \bm{Q} =& \frac{P}{{\rm Tr}\big(\widetilde{\bm{Q}}\big)}{\widetilde{\bm{Q}}} .
\end{align}

\noindent \textbf{Computation of $\widetilde{\mu}$}:
 The real scalar $\widetilde{\mu}$ can be computed based on (\ref{eq79}). By substituting
 $\widetilde{\bm{Q}}$ of (\ref{eq89}) into the equality (\ref{eq79}), it can readily be
 seen that the computation of $\widetilde{\mu}$ is a standard water-level computation for
 the water-filling computation of MIMO sum MSE minimization. Therefore, $\widetilde{\mu}$
 in Conclusion~{6.2} can be computed analytically or numerically \cite{Palomar03}.

 The solutions given in Conclusions~{5.1}, {5.2}, {6.1} and {6.2} are all in closed-forms,
 while the general solutions given by Conclusions~\ref{C5} and \ref{C6} can be computed
 using iterative algorithms, such as fixed point algorithms. Some similar optimization
 problems have also been discussed in the existing literature \cite{XingTSP2013,Ding09,Ding10}.
 Compared to Ding and Blostein's works \cite{Ding09,Ding10}, our derivations are much
 simpler and provide a unified framework optimization for both capacity maximization and
 MSE minimization. As discussed in \cite{Xing2016}, the methods in \cite{Ding09,Ding10}
 suffer from the turning-off effect, while the proposed framework does not have this
 problem. It is also worth highlighting that our previous work \cite{XingTSP2013} cannot
 cover the general conclusions given by Conclusions~\ref{C5} and \ref{C6}.

\subsubsection{Suboptimal solutions for \textbf{P5} and \textbf{P6}}

 As mentioned previously, the solutions given by Conclusions~\ref{C5} and \ref{C6} are not
 closed-form solutions. There exist two approaches to compute the solutions of \textbf{P5}
 and \textbf{P6}.

 \emph{3.1)}~In the first approach, we simply replace $\bm{R}_{\rm T}$ by
 $\lambda_{\max}(\bm{R}_{\rm T})\bm{I}$ or replace $\bm{R}_{\rm R}$ by
 $\lambda_{\max}(\bm{R}_{\rm R})\bm{I}$ in \textbf{P5} and \textbf{P6}. Then
 Conclusions~{5.1}, {5.2}, {6.1} and {6.2} become applicable with the closed-form solutions.
 These closed-form solutions are clearly suboptimal for \textbf{P5} and \textbf{P6}.

 \emph{3.2)}~On the other hand, a fixed point approach can be used to compute the solutions
 of \textbf{P5} and \textbf{P6}. From Conclusions~\ref{C5} and \ref{C6}, it can be seen that
 the main difficulty in computing the solutions of \textbf{P5} and \textbf{P6} come from
 the following three scalar terms in $\bm{\Phi}$ and $\bm{K}_{\rm n}$, defined respectively
 as $\alpha$, $\beta$ and $\gamma$:
\begin{align}
 \alpha =& {\rm Tr}\big(\bm{R}_{\rm T}\bm{Q}\big) , \label{eq91}
\end{align}
\begin{align}
 \beta =& {\rm Tr}\bigg( \Big( \bm{K}_{\rm n}^{-1} - \Big(\bm{K}_{\rm n} + \widehat{\bm{H}}
  \bm{Q} \widehat{\bm{H}}^{\rm H}\Big)^{-1} \Big) \bm{R}_{\rm R} \bigg) , \label{eq92} \\
 \gamma =& {\rm Tr}\bigg( \bm{K}_{\rm n}^{-1} \widehat{\bm{H}} \bm{Q}^{\frac{1}{2}}
  \Big( \bm{I} + \bm{Q}^{\frac{1}{2}} \widehat{\bm{H}}^{\rm H} \bm{K}_{\rm n}^{-1}
  \widehat{\bm{H}} \bm{Q}^{\frac{1}{2}} \Big)^{-2} \nonumber \\
 & \times \bm{Q}^{\frac{1}{2}} \widehat{\bm{H}}^{\rm H} \bm{K}_{\rm n}^{-1} \bm{R}_{\rm R} \bigg) . \label{eq93}
\end{align}
 For a fixed point approach, an iterative algorithm can be used to compute the solutions
 in Conclusions~\ref{C5} and \ref{C6}. Specifically, in the each iteration, the value of
 $\bm{Q}$ obtained in the previous iteration is used in computing $\alpha$, $\beta$ and
 $\gamma$. The simulation results show that this approach enjoys good performance. This
 is because when the channel estimation error is small, the values of $\alpha$, $\beta$
 and $\gamma$ also become small, and they do not affect the matrices $\bm{\Phi}$ and
 $\bm{K}_{\rm n}$ much.

 The above fixed point approach has some known weaknesses, e.g., the rigorous
 convergence proof is lacking. An alternative is to use its non-iterative approximation.
 It is worth noting that $\bm{K}_{\rm n}$ is the equivalent noise covariance matrix, and
 $\alpha$ and $\beta$ reflect the penalty of channel estimation error on the objective
 function. Therefore, using their upper bounds are preferred. Since $\alpha =
 {\rm Tr}\big(\bm{R}_{\rm T}\bm{Q}\big)\le \lambda_{\max}\big(\bm{R}_{\rm T}\big) P$,
 $\alpha$ can be replaced by the following upper bound
\begin{align}\label{eq94}
 \alpha_{\max} =& \lambda_{\max}\big(\bm{R}_{\rm T}\big) P .
\end{align}
 For given $\alpha$, $\bm{K}_{\rm n}$ becomes a constant. For any given $\bm{K}_{\rm n}$,
 an upper bound of $\beta$ is
\begin{align}\label{eq95}
 \beta_{\max} =& {\rm Tr}\big(\bm{K}_{\rm n}^{-1}\bm{R}_{\rm R}\big) .
\end{align}
 Furthermore, for $\gamma$, its upper bound is
\begin{align}\label{eq96}
 \gamma_{\max} =&\frac{1}{4}{\rm Tr}\big(\bm{K}_{\rm n}^{-1}\bm{R}_{\rm R}\big) .
\end{align}

\subsection{Multiple Weighted Power Constraints with imperfect CSI}\label{S4.2}

 To our best knowledge, the transmission covariance matrix optimizations for MIMO
 systems under multiple weighted power constraints and imperfect CSI have not been
 derived in the existing open literature.

\subsubsection{Capacity maximization}

 The capacity maximization problem can be formulated as follows
\begin{align}\label{eq97}
\begin{array}{lcl}
 \textbf{P7}: & \min\limits_{\bm{Q}} & -\log\Big| \bm{I} + \bm{K}_{\rm n}^{-1}
  \widehat{\bm{H}} \bm{Q} \widehat{\bm{H}}^{\rm H}\Big| , \\
 & {\rm s.t.} & {\rm Tr}(\bm{\Omega}_i\bm{Q}) \le P_i , 1\le i\le I, ~ \bm{Q} \succeq \bm{0} ,
\end{array}
\end{align}
 where $\bm{K}_{\rm n}$ is defined in (\ref{eq55}). The corresponding full KKT
 conditions are given by
\begin{align}\label{KKT_67} 
\left\{\begin{array}{l}\!\!\!
  \widehat{\bm{H}}^{\rm H} \bm{K}_{\rm n}^{-\frac{1}{2}} \! \Big(\! \bm{I}\! +\! \bm{K}_{\rm n}^{-\frac{1}{2}}
  \widehat{\bm{H}} \bm{Q} \widehat{\bm{H}}^{\rm H} \bm{K}_{\rm n}^{-\frac{1}{2}} \Big)^{-1}\!
  \bm{K}_{\rm n}^{-\frac{1}{2}} \widehat{\bm{H}}\! =\! \sum\limits_{i=1}^I \mu_i \bm{\Omega}_i \\
 \hspace{5mm} + {\rm Tr}\bigg( \Big( \bm{K}_{\rm n}^{-1}\! -\! \Big( \bm{K}_{\rm n}\! +\! \widehat{\bm{H}}
  \bm{Q} \widehat{\bm{H}}^{\rm H} \Big)^{-1} \Big) \bm{R}_{\rm R} \bigg) \bm{R}_{\rm T}
  - \bm{\Psi} , \\
 \mu_i \ge 0 , ~ \mu_i({\rm Tr}(\bm{\Omega}_i\bm{Q})-P_i)=0 , ~ {\rm Tr}(\bm{\Omega}_i\bm{Q})\le P_i , \\
  \hspace{5mm} 1\le i\le I , ~ {\rm Tr}(\bm{Q}\bm{\Psi})=0 , ~ \bm{\Psi}\succeq \bm{0} , ~
  \bm{Q}\succeq \bf{0} .
\end{array}\right. \!\!
\end{align}
 Based on Theorem~\ref{T1}, after introducing an auxiliary variable $\mu$, which ensures
 that the following equality holds
\begin{align}\label{mu_equ} 
 {\rm Tr}\left(\sum_{i=1}^I\frac{\mu_i}{\mu}\bm{\Omega}_i\bm{Q}\right) =& \sum_{i=1}^I
  \frac{\mu_i}{\mu} P_i = \sum_{i=1}^I P_i ,
\end{align}
 the following conclusion is obtained.

\begin{conclusion}\label{C7}
 The optimal transmission covariance matrix for \textbf{P7} has the following
 water-filling structure
\begin{align}\label{eq100}
 \bm{Q}\! =& \bm{\Phi}^{-\frac{1}{2}} \big[\bm{V}_{\bm{\mathcal{H}}}\big]_{:,1:N} \Big(
  \mu^{-1} \bm{I}\! -\! \big[\bm{\Lambda}_{\bm{\mathcal{H}}}\big]_{1:N,1:N}^{-2} \Big)^{+}\!
  \big[\bm{V}_{\bm{\mathcal{H}}}\big]_{:,1:N}^{\rm H} \bm{\Phi}^{-\frac{1}{2}}\! ,
\end{align}
 where
\begin{align}\label{eq101}
 \bm{\Phi}\! =&\! \sum_{i=1}^I\! \frac{\mu_i}{\mu}\bm{\Omega}_i\! +\! \frac{1}{\mu}\bigg(\! \Big(
  \bm{K}_{\rm n}^{-1}\! -\! \Big(\! \bm{K}_{\rm n}\! +\! \widehat{\bm{H}} \bm{Q} \widehat{\bm{H}}^{\rm H}
  \Big)^{-1} \Big) \bm{R}_{\rm R} \bigg)\! \bm{R}_{\rm T},
\end{align}
 and the unitary matrix $\bm{V}_{\bm{\mathcal{H}}}$ is defined by the SVD
\begin{align}\label{eq102}
 \bm{K}_{\rm n}^{-\frac{1}{2}} \widehat{\bm{H}} \bm{\Phi}^{-\frac{1}{2}} =& \bm{U}_{\bm{\mathcal{H}}}
  \bm{\Lambda}_{\bm{\mathcal{H}}} \bm{V}_{\bm{\mathcal{H}}}^{\rm H} \text{ with }
  \bm{\Lambda}_{\bm{\mathcal{H}}} \searrow .
\end{align}
\end{conclusion}

 For the special case of $\bm{R}_{\rm R}=r_a\bm{I}$, the first KKT condition in
 (\ref{KKT_67}) can be rewritten as
\begin{align}\label{eq103}
 & \widehat{\bm{H}}^{\rm H} \bm{K}_{\rm n}^{-\frac{1}{2}} \Big( \bm{I} + \bm{K}_{\rm n}^{-\frac{1}{2}}
  \widehat{\bm{H}} \bm{Q} \widehat{\bm{H}}^{\rm H} \bm{K}_{\rm n}^{-\frac{1}{2}} \Big)^{-1}
  \bm{K}_{\rm n}^{-\frac{1}{2}} \widehat{\bm{H}} = \sum_{i=1}^I \mu_i \bm{\Omega}_i \nonumber \\
 & \hspace*{5mm} + r_a {\rm Tr}\bigg(\! \Big( \bm{K}_{\rm n}^{-\frac{1}{2}}\! -\! \Big( \bm{K}_{\rm n}
  \! +\! \widehat{\bm{H}} \bm{Q} \widehat{\bm{H}}^{\rm H} \Big)^{-1} \Big) \bigg)\! \bm{R}_{\rm T}
  \! -\! \bm{\Psi} .
\end{align}
 It can be shown that the following two equalities are equivalent
\begin{align}
 & {\rm Tr}\left(\sum_{i=1}^I \mu_i \bm{\Omega}_i \bm{Q}\right) = \sum_{i=1}^I\mu_i P_i , \label{eq104} \\
 & \frac{1}{k_n}{\rm Tr}\left(\left( \sigma_n^2 \sum_{i=1}^I \mu_i \bm{\Omega}_i + r_a
  \sum_{i=1}^I \mu_i P_i \bm{R}_{\rm T} \right) \bm{Q}\right) = \sum_{i=1}^I \mu_i P_i , \label{eq105}
\end{align}
 where
\begin{align}\label{equ_98} 
 k_n =& \sigma_n^2 + r_a {\rm Tr}\big(\bm{R}_{\rm T}\bm{Q}\big) .
\end{align}
 Noting the definition $\widetilde{\bm{Q}}$ of (\ref{Q_Q}), the first KKT condition
 in (\ref{KKT_67}) becomes
\begin{align}\label{eq107}
 & \widehat{\bm{H}}^{\rm H} \Big( \bm{I} + \widehat{\bm{H}} \widetilde{\bm{Q}}
  \widehat{\bm{H}}^{\rm H} \Big)^{-1} \widehat{\bm{H}} = \sum_{i=1}^I k_n \mu_i \bm{\Omega}_i \nonumber \\
 & \hspace{2mm} +\! r_a {\rm Tr} \bigg(\! \Big(\! \bm{I}\! -\! \Big(\bm{I}\! +\! \widehat{\bm{H}}
  \widetilde{\bm{Q}} \widehat{\bm{H}}^{\rm H}\Big)^{-1}\Big)\! \bigg)\! \bm{R}_{\rm T}\! -\!
  k_n \bm{\Psi}\! =\! \sum_{i=1}^I\! k_n \mu_i \bm{\Omega}_i \nonumber \\
 & \hspace*{2mm} +\! r_a {\rm Tr} \bigg(\! \widehat{\bm{H}}^{\rm H}
  \! \Big(\! \bm{I}\! +\! \widehat{\bm{H}} \widetilde{\bm{Q}} \widehat{\bm{H}}^{\rm H} \Big)^{-1}
  \widehat{\bm{H}} \widetilde{\bm{Q}}\bigg) \bm{R}_{\rm T}\! -\! k_n \bm{\Psi} .
\end{align}
 Similar to (\ref{KKT_1_a}), based on (\ref{eq107}), we can prove that
\begin{align}\label{eq108}
 & {\rm Tr} \bigg( \widehat{\bm{H}}^{\rm H} \Big( \bm{I} + \widehat{\bm{H}} \widetilde{\bm{Q}}
  \widehat{\bm{H}}^{\rm H} \Big)^{-1} \widehat{\bm{H}} \widetilde{\bm{Q}}\bigg) =
  \frac{k_n {\rm Tr}\left(\sum_{i=1}^I \mu_i \bm{\Omega}_i \widetilde{\bm{Q}}\right)}
  {1-r_a {\rm Tr}\big(\bm{R}_{\rm T}  \widetilde{\bm{Q}}\big)} \nonumber \\
& \hspace*{10mm} =\frac{ k_n {\rm Tr}\left(\sum_{i=1}^I \mu_i\bm{\Omega}_i\bm{Q}\right)}
  {k_n-r_a {\rm Tr}\big(\bm{R}_{\rm T}\bm{Q}\big)} = \frac{k_n\sum_{i=1}^I \mu_i P_i}{\sigma_n^2} .
\end{align}
 Therefore, the first KKT condition in (\ref{KKT_67}) can be rewritten as
\begin{align}\label{equ_101_1} 
 &  \widehat{\bm{H}}^{\rm H} \Big( \bm{I} + \widehat{\bm{H}} \widetilde{\bm{Q}}
  \widehat{\bm{H}}^{\rm H} \Big)^{-1} \widehat{\bm{H}} = \sum\nolimits_{i=1}^I k_n \mu_i
  \bm{\Omega}_i\nonumber \\
 & \hspace*{10mm} + \frac{r_a k_n\sum\nolimits_{i=1}^I \mu_i P_i}{\sigma_n^2} \bm{R}_{\rm T}
  - k_n \bm{\Psi} ,
\end{align}
 based on which and using Theorem~\ref{T1}, we have
\begin{align}\label{equ_101} 
 & \widetilde{\bm{Q}}\! =\! \bm{\Phi}^{-\frac{1}{2}} \big[\bm{V}_{\bm{\mathcal{H}}}\big]_{:,1:N}
  \Big( \widetilde{\mu}^{-1} \bm{I}\! -\! \big[\bm{\Lambda}_{\bm{\mathcal{H}}}\big]_{1:N,1:N}^{-2}
  \Big)^{+} \big[\bm{V}_{\bm{\mathcal{H}}}\big]_{:,1:N}^{\rm H}\bm{\Phi}^{-\frac{1}{2}} ,
\end{align}
 where
\begin{align}
 \widetilde{\mu} =& \frac{k_n \mu}{\sigma_n^2} , \label{eq111} \\
 \bm{\Phi} =& \sigma_n^2 \sum\nolimits_{i=1}^I \widetilde{\mu}_i \bm{\Omega}_i + r_a
  \sum\nolimits_{i=1}^I \widetilde{\mu}_i P_i \bm{R}_{\rm T}, \label{eq112} \\
 \widetilde{\mu}_i =& \frac{\mu_i}{\mu} , \label{eq113}
\end{align}
 and the unitary matrix $\bm{V}_{\bm{\mathcal{H}}}$ is defined by the SVD
\begin{align}\label{eq114}
 & \widehat{\bm{H}} \bm{\Phi}^{-\frac{1}{2}} = \bm{U}_{\bm{\mathcal{H}}}\bm{\Lambda}_{\bm{\mathcal{H}}}
  \bm{V}_{\bm{\mathcal{H}}}^{\rm H} \text{ with } \bm{\Lambda}_{\bm{\mathcal{H}}} \searrow .
\end{align}
 Noting $\widetilde{\bm{Q}}=\frac{1}{k_n}\bm{Q}$, we have
\begin{align}\label{equ_103} 
 k_n =& \frac{{\rm Tr}\left(\sum\nolimits_{i=1}^I \mu_i \bm{\Omega}_i \bm{Q}\right)}
  {{\rm Tr}\left(\sum\nolimits_{i=1}^I \mu_i \bm{\Omega}_i \widetilde{\bm{Q}}\right)}
 = \frac{\sum\nolimits_{i=1}^I \mu_i P_i}{{\rm Tr}\left(\sum\nolimits_{i=1}^I \mu_i
  \bm{\Omega}_i \widetilde{\bm{Q}}\right)} \nonumber \\
 =& \frac{\sum\nolimits_{i=1}^I \widetilde{\mu}_i P_i}{{\rm Tr}\left(\sum\nolimits_{i=1}^I
  \widetilde{\mu}_i \bm{\Omega}_i \widetilde{\bm{Q}}\right)} =
  \frac{\sum\nolimits_{i=1}^I P_i}{{\rm Tr}\left(\sum\nolimits_{i=1}^I
  \widetilde{\mu}_i \bm{\Omega}_i \widetilde{\bm{Q}}\right)} ,
\end{align}
 where the last equality is due to (\ref{mu_equ}). Based on (\ref{equ_98}), (\ref{equ_101})
 and (\ref{equ_103}), the following conclusion is obtained.

\vspace*{2mm}

\noindent
\textbf{Conclusion~7.1} \textit{When $\bm{R}_{\rm R}=r_a\bm{I}$, the optimal $\bm{Q}$ for
 \textbf{P7} has the following structure}
\begin{align}\label{eq116}
 \bm{Q} =& \frac{\sum\nolimits_{i=1}^I P_i}{{\rm Tr}\left(\sum\nolimits_{i=1}^I \widetilde{\mu}_i
  \bm{\Omega}_i \widetilde{\bm{Q}}\right)} \widetilde{\bm{Q}} .
\end{align}

\noindent \textbf{Computation of $\widetilde{\mu}$ and $\widetilde{\mu}_i$, $\forall i$}:
 The scalar $\widetilde{\mu}$ ensures that the following equality holds
\begin{align}\label{eq117}
 {\rm Tr}\big(\bm{\Phi}\widetilde{\bm{Q}}\big)=\sum\nolimits_{i=1}^I P_i .
\end{align}
 Substituting Conclusion~{7.1} into the above equation, it is obvious that computation of
 $\widetilde{\mu}$ is a standard water-level computation of the water-filling solution for
 MIMO capacity maximization. The scalars $\widetilde{\mu}_i$ can be computed by using
 subgradient methods.

\subsubsection{MSE minimization}

 The MSE minimization problem for SU MIMO systems under multiple weighted power constraints
 and imperfect CSI can be written in the following form \cite{XingTSP2013}
\begin{align}\label{eq118}
\begin{array}{lcl}
 \textbf{P8}: & \min\limits_{\bm{Q}} & {\rm Tr}\bigg(\Big( \bm{I} + \bm{K}_{\rm n}^{-1}
  \widehat{\bm{H}} \bm{Q} \widehat{\bm{H}}^{\rm H}\Big)^{-1}\bigg) , \\
 & {\rm{s.t.}} & {\rm Tr}(\bm{\Omega}_i\bm{Q}) \le P_i , 1\le i\le I, ~ \bm{Q} \succeq \bm{0} ,
\end{array}
\end{align}
 with $\bm{K}_{\rm n}$ defined in (\ref{eq55}). The full KKT conditions of \textbf{P8} are
\begin{align}\label{eq119}
\left\{\begin{array}{l}\!\!\!
 \widehat{\bm{H}}^{\rm H} \bm{K}_{\rm n}^{-\frac{1}{2}}\! \Big(\! \bm{I}\! +\! \bm{K}_{\rm n}^{-\frac{1}{2}}
  \widehat{\bm{H}} \bm{Q} \widehat{\bm{H}}^{\rm H} \bm{K}_{\rm n}^{-\frac{1}{2}} \Big)^{-2}\!
  \bm{K}_{\rm n}^{-\frac{1}{2}} \widehat{\bm{H}}\! =\! \sum\limits_{i=1}^I \mu_i \bm{\Omega}_i  \\
 \hspace{2mm} + {\rm Tr}\bigg( \bm{K}_{\rm n}^{-1} \widehat{\bm{H}} \bm{Q}^{\frac{1}{2}}
  \Big( \bm{I} + \bm{Q}^{\frac{1}{2}} \widehat{\bm{H}}^{\rm H} \bm{K}_{\rm n}^{-1}
  \widehat{\bm{H}} \bm{Q}^{\frac{1}{2}}\Big)^{-2} \\
 \hspace{5mm} \times \bm{Q}^{\frac{1}{2}} \widehat{\bm{H}}^{\rm H} \bm{K}_{\rm n}^{-1}
  \bm{R}_{\rm R} \bigg) \bm{R}_{\rm T} - \bm{\Psi} , \\
 \mu_i \ge 0, ~ \mu_i({\rm Tr}(\bm{\Omega}_i \bm{Q})-P_i)=0, ~ {\rm Tr}(\bm{\Omega}_i\bm{Q})\le P_i, \\
 \hspace*{2mm} 1\le i\le I , ~ {\rm Tr}(\bm{Q}\bm{\Psi})=0, ~ \bm{\Psi} \succeq \bm{0} , ~
  \bm{Q}\succeq \bm{0} .
\end{array}\right. \!\!
\end{align}
 Based on Theorem~\ref{T1} and introducing an auxiliary variable $\mu$,
 the following conclusion is obtained.

\begin{conclusion}\label{C8}
 The optimal transmission covariance matrix of \textbf{P8} has the following structure
\begin{align}\label{eq120}
 \bm{Q} =& \bm{\Phi}^{-\frac{1}{2}}\big[\bm{V}_{\bm{\mathcal{H}}}\big]_{:,1:N} \Big(
  \mu^{-\frac{1}{2}} \big[\bm{\Lambda}_{\bm{\mathcal{H}}}\big]_{1:N,1:N}^{-1} -
  \big[\bm{\Lambda}_{\bm{\mathcal{H}}}\big]_{1:N,1:N}^{-2} \Big)^{+} \nonumber \\
 &\times \big[\bm{V}_{\bm{\mathcal{H}}}\big]_{:,1:N}^{\rm H} \bm{\Phi}^{-\frac{1}{2}} ,
\end{align}
 where $\bm{\Phi}$ is given by
\begin{align}\label{eq121}
 \bm{\Phi}\! =&\! \sum\limits_{i=1}^I\! \frac{\mu_i}{\mu} \bm{\Omega}_i\! +\! \frac{1}{\mu}
  {\rm Tr} \bigg(\! \bm{K}_{\rm n}^{-1} \widehat{\bm{H}} \bm{Q}^{\frac{1}{2}}
  \Big( \bm{I}\! +\! \bm{Q}^{\frac{1}{2}} \widehat{\bm{H}}^{\rm H} \bm{K}_{\rm n}^{-1}
   \widehat{\bm{H}} \bm{Q}^{\frac{1}{2}} \Big)^{-2} \nonumber \\
 & \times \bm{Q}^{\frac{1}{2}} \widehat{\bm{H}}^{\rm H} \bm{K}_{\rm n}^{-1}
  \bm{R}_{\rm R}\bigg) \bm{R}_{\rm T} ,
\end{align}
 and the unitary matrix $\bm{V}_{\bm{\mathcal{H}}}$ is defined by the SVD
\begin{align}\label{eq122}
 \bm{K}_{\rm n}^{-\frac{1}{2}} \widehat{\bm{H}} \bm{\Phi} ^{-\frac{1}{2}} =&
  \bm{U}_{\bm{\mathcal{H}}} \bm{\Lambda}_{\bm{\mathcal{H}}} \bm{V}_{\bm{\mathcal{H}}}^{\rm H}
  \text{ with } \bm{\Lambda}_{\bm{\mathcal{H}}} \searrow .
\end{align}
\end{conclusion}

 For the special case of $\bm{R}_{\rm R}=r_a\bm{I}$, the first KKT condition can be written as
\begin{align}\label{eq123}
 & \widehat{\bm{H}}^{\rm H} \bm{K}_{\rm n}^{-\frac{1}{2}} \Big( \bm{I} +
  \bm{K}_{\rm n}^{-\frac{1}{2}} \widehat{\bm{H}} \bm{Q} \widehat{\bm{H}}^{\rm H}
  \bm{K}_{\rm n}^{-\frac{1}{2}} \Big)^{-2} \bm{K}_{\rm n}^{-\frac{1}{2}} \widehat{\bm{H}}
  = \sum\limits_{i=1}^I \! \mu_i \bm{\Omega}_i \nonumber \\
 & \hspace{2mm} + r_a {\rm Tr} \bigg( \bm{K}_{\rm n}^{-1} \widehat{\bm{H}} \bm{Q}^{\frac{1}{2}}
  \Big( \bm{I} + \bm{Q}^{\frac{1}{2}} \widehat{\bm{H}}^{\rm H} \bm{K}_{\rm n}^{-1}
  \widehat{\bm{H}} \bm{Q}^{\frac{1}{2}} \Big)^{-2} \nonumber \\
 & \hspace{5mm} \times \bm{Q}^{\frac{1}{2}} \widehat{\bm{H}}^{\rm H} \bm{K}_{\rm n}^{-1}
  \bigg) \bm{R}_{\rm T} - \bm{\Psi} .
\end{align}
 Based on  (\ref{equ_101_1}), (\ref{eq123}) can be rewritten as
\begin{align}\label{eq124}
 & \widehat{\bm{H}}^{\rm H} \Big( \bm{I} + \widehat{\bm{H}} \widetilde{\bm{Q}}
  \widehat{\bm{H}}^{\rm H}\Big)^{-2}\widehat{\bm{H}}  \nonumber \\
 & \hspace*{2mm} = \widetilde{\mu} \left(\sum_{i=1}^I \sigma_n^2 \widetilde{\mu}_i \bm{\Omega}_i
  + \bigg( \sum_{i=1}^I \widetilde{\mu}_i P_i\bigg) r_a \bm{R}_{\rm T}\right) - k_n \bm{\Psi} ,
\end{align}
 where $\widetilde{\mu}$ and $\widetilde{\mu}_i$ are defined in (\ref{eq111}) and
 (\ref{eq113}), respectively. Based on Theorem~\ref{T1}, we have the following conclusion.

\vspace*{2mm}

\noindent
\textbf{Conclusion~8.1} \textit{When $\bm{R}_{\rm R}=r_a\bm{I}$, the optimal
 $\bm{Q}$ of \textbf{P8} satisfies the following structure}
\begin{align}\label{eq125}
 \bm{Q} =& \frac{\sum_{i=1}^I P_i}{{\rm Tr}\left(\sum_{i=1}^I \widetilde{\mu}_i \bm{\Omega}_i
  \widetilde{\bm{Q}}\right)} \widetilde{\bm{Q}} ,
\end{align}
 \textit{with}
\begin{align}\label{eq126}
 \widetilde{\bm{Q}} =& \bm{\Phi}^{-\frac{1}{2}}\big[\bm{V}_{\bm{\mathcal{H}}}\big]_{:,1:N}
  \Big(\widetilde{\mu}^{-\frac{1}{2}} \big[\bm{\Lambda}_{\bm{\mathcal{H}}}\big]_{1:N,1:N}^{-1}
  - \big[\bm{\Lambda}_{\bm{\mathcal{H}}}\big]_{1:N,1:N}^{-2}\Big)^{+} \nonumber \\
 & \times \big[\bm{V}_{\bm{\mathcal{H}}}\big]_{:,1:N}^{\rm H}\bm{\Phi}^{-\frac{1}{2}} ,
\end{align}
 \textit{where $\bm{\Phi}$ and $\bm{V}_{\bm{\mathcal{H}}}$ are defined in
 (\ref{eq112}) and (\ref{eq114}), respectively.}

\vspace*{2mm}

\noindent \textbf{Computation of $\widetilde{\mu}$ and $\widetilde{\mu}_i$, $\forall i$}:
 The scalar $\widetilde{\mu}$ makes sure that the equality (\ref{eq117}) holds.
 Substituting Conclusion~{8.1} into (\ref{eq117}), it is obvious that computation of
 $\widetilde{\mu}$ is a standard water-level computation of the water-filling solution
 for MIMO sum MSE minimization. Thus $\widetilde{\mu}$ can be computed effectively. The
 scalars $\widetilde{\mu}_i$ can be computed by using subgradient methods.

 The capacity maximization \textbf{P7} and MSE minimization \textbf{P8} for SU MIMO
 systems under imperfect CSI and multiple weighted power constraints, to our best
 knowledge, have not been derived in the existing open literature. The structures of
 the optimal solutions for \textbf{P7} and \textbf{P8}, given in Conclusions~\ref{C7}
 and \ref{C8}, are in complicated forms and iterative schemes, such as fixed point
 algorithm are needed to compute the optimal solutions. In the special case with only
 transmit antenna correlation, the optimal solutions of \textbf{P7} and \textbf{P8}
 can be derived in closed-form as given in Conclusions~{7.1} and {8.1}, respectively.

\subsubsection{Suboptimal solutions for \textbf{P7} and \textbf{P8}}

 In general, the water-filling structures for the optimal covariance matrices for SU
 MIMO systems under multiple weighted power constraints and imperfect CSI, derived in
 Conclusions~\ref{C7} and \ref{C8}, are not closed-form solutions. Similarly, there
 are two approaches for computing the solutions of \textbf{P7} and \textbf{P8}.

 \emph{3.1)}~Some approximation is used to achieve closed-form solution. Specifically,
 by replacing $\bm{R}_{\rm R}$ with  $\lambda_{\max}(\bm{R}_{\rm R})\bm{I}$, the
 closed-form solutions of Conclusions~{7.1} and {8.1} become applicable, which offer
 good approximate closed-form solutions to \textbf{P7} and \textbf{P8}, respectively.

 \emph{3.2)}~Fixed point methods can be used to compute the solutions of \textbf{P7}
 and \textbf{P8}. To compute $\bm{\Phi}$ and $\bm{K}_{\rm n}$, the three scalar terms,
 $\alpha$, $\beta$ and $\gamma$, are needed, which are defined in (\ref{eq91}) to
 (\ref{eq93}). From (\ref{eq91}) to (\ref{eq93}), it can be seen that $\alpha$, $\beta$
 and $\gamma$ all depend on $\bm{Q}$. When using a fixed point algorithm to compute the
 solutions in Conclusions~\ref{C7} and \ref{C8}, in the each iteration, the value of
 $\bm{Q}$ obtained in the previous iteration is used to calculate the current $\alpha$,
 $\beta$ and $\gamma$.

 Although the simulation results show that this iterative algorithm enjoys good
 performance, it suffers from some well-known drawbacks, e.g., convergence may be slow.
 Again, an alternative is to use its non-iterative approximation by replacing $\alpha$,
 $\beta$ and $\gamma$ with their respective upper bounds $\alpha_{\max}$, $\beta_{\max}$
 and $\gamma_{\max}$, given in (\ref{eq94}) to (\ref{eq96}).

\section{Transmission Covariance Matrix Optimization for MU MIMO under Multiple Weighted Power Constraints}\label{S5}

 In the MU MIMO system investigated, $K$ multi-antenna aided users communicate with the
 multi-antenna aided base station (BS). The channel between the $k$th user and the BS is
 denoted by $\bm{H}_k$ and the $k$th user's transmission covariance matrix is denoted
 by $\bm{Q}_k$, for $1\le k\le K$. Multiple weighted power constraints are considered
 with $\bm{\Omega}_{k,i}$ being the weighting matrix for the $i$th power constraint
 of the $k$th user and $P_{k,i}$ being the corresponding power limit, for $1\le i\le I_k$.

\subsection{MU MIMO Uplink}\label{S5.1}

\subsubsection{Perfect CSI}

 For the uplink where the users send data to the BS, the capacity maximization
 problem under multiple weighted power constraints and perfect CSI is formulated as
\begin{align}\label{eq127}
\begin{array}{lcl}
 \textbf{P9}: \!\! &\!\! \min\limits_{\big\{\bm{Q}_k\big\}_{k=1}^K}\!\! &\!\! -\log\left|\bm{I} + \bm{R}_{\rm n}^{-1}
 \sum\limits_{k=1}^K\big(\bm{H}_k\bm{Q}_k\bm{H}_k^{\rm H}\big)\right| , \\
 \!\! &\!\! {\rm{s.t.}}\!\! &\!\! {\rm Tr}\big(\bm{\Omega}_{k,i}\bm{Q}_k\big)\le P_{k,i} , 1\le i\le I_k , \\
 \!\! &\!\! \!\! &\!\! \bm{Q}_k \succeq \bm{0}, 1\le k\le K .
\end{array}
\end{align}
 The full KKT conditions of \textbf{P9} are
\begin{align}\label{eq128}
\left\{\begin{array}{l}\!\!\!
 \bm{H}_k^{\rm H}\left( \bm{R}_{\rm n} + \sum\limits_{j\neq k} \bm{H}_j\bm{Q}_j\bm{H}_j^{\rm H}
  + \bm{H}_k\bm{Q}_k\bm{H}_k^{\rm H}\right)^{-1} \bm{H}_k \\
 \hspace{5mm} =\sum\limits_{i=1}^{I_k} \mu_{k,i} \bm{\Omega}_{k,i} - \bm{\Psi}_k, ~
  1\le k\le K , \\
 \mu_{k,i} \ge 0 , ~ \mu_{k,i}({\rm Tr}(\bm{\Omega}_{k,i}\bm{Q}_k)-P_{k,i})=0 , \\
 \hspace{5mm} {\rm Tr}(\bm{\Omega}_{k,i}\bm{Q}_k)\le P_{k,i} , ~ 1\le i\le I_k , \\
 {\rm Tr}(\bm{Q}_k\bm{\Psi}_k)=0 , ~ \bm{\Psi}_k \succeq \bm{0} , ~
  \bm{Q}_k\succeq \bm{0} , ~ 1\le k\le K ,
\end{array}\right. \!\!
\end{align}
 where $\bm{\Psi}_k$ is the Lagrangian multiplier corresponding to the constraint
 $\bm{Q}_k \succeq \bm{0}$. By using Theorem~\ref{T1} and defining
\begin{align}\label{eq129}
 \widetilde{\mu}_{k,i} =& \frac{\mu_{k,i}}{\mu_k}
\end{align}
 the following conclusion is obtained.

\begin{conclusion}\label{C9}
 The optimal transmission covariance matrices for \textbf{P9} have the following
 water-filling structure
\begin{align}\label{eq130}
 \bm{Q}_k =& \bm{\Phi}_k^{-\frac{1}{2}}\big[\bm{V}_{\bm{\mathcal{H}}_k}\big]_{:,1:N}
  \left(\mu_k^{-1}\bm{I} - \big[\bm{\Lambda}_{\bm{\mathcal{H}}_k}\big]_{1:N,1:N}^{-2}\right)^{+} \nonumber \\
 & \times \big[\bm{V}_{\bm{\mathcal{H}}_k}\big]_{:,1:N}^{\rm H}\bm{\Phi}_k^{-\frac{1}{2}} ,
  ~ 1\le k\le K ,
\end{align}
 where $\bm{\Phi}_k$ is defined as
\begin{align}\label{eq131}
 \bm{\Phi}_k =&\sum\limits_{i=1}^{I_k} \widetilde{\mu}_{k,i}\bm{\Omega}_{k,i} ,
\end{align}
 and the unitary matrix ${\bf{V}}_{{\boldsymbol{\mathcal{H}}}_k}$ is defined by the following SVD
\begin{align}\label{eq132}
 \bm{\Pi}_k^{-\frac{1}{2}}\bm{H}_k\bm{\Phi}_k^{-\frac{1}{2}} =& \bm{U}_{\bm{\mathcal{H}}_k}
  \bm{\Lambda}_{\bm{\mathcal{H}}_k} \bm{V}_{\bm{\mathcal{H}}_k}^{\rm H} \text{ with }
  \bm{\Lambda}_{\bm{\mathcal{H}}_k} \searrow ,
\end{align}
 in which
\begin{align}\label{eq133}
 \bm{\Pi}_k =& \bm{R}_{\rm n} + \sum\limits_{j\neq k} \bm{H}_j\bm{Q}_j\bm{H}_j^{\rm H} .
\end{align}
\end{conclusion}

\noindent \textbf{Computation of $\mu_k$ and $\widetilde{\mu}_{k,i}$, $\forall k,i$}:
 The scalars $\mu_k$ can be computed based on the following equality
\begin{align}\label{eq134}
 {\rm Tr}\left(\sum\limits_{i=1}^{I_k} \widetilde{\mu}_{k,i}\bm{\Omega}_{k,i}\bm{Q}_k\right)
 =& \sum\limits_{i=1}^{I_k}P_{k,i} , 1\le k\le K .
\end{align}
 Hence the computation of $\mu_k$ is a standard water-level computation for iterative
 water-filling solution \cite{YUWEI2004}. In addition, $\widetilde{\mu}_{k,i}$ can be
 effectively computed using subgradient algorithm.

\subsubsection{Imperfect CSI}

 For MU MIMO systems with imperfect CSI, the following channel model is adopted
\begin{align}\label{eq135}
 \bm{H}_k = \widehat{\bm{H}}_k + \Delta\bm{H}_k \text{ with } \Delta\bm{H}_k =
  \bm{R}_{{\rm R},k}^{\frac{1}{2}}\bm{H}_{{\rm W},k}\bm{R}_{{\rm T},k}^{\frac{1}{2}} ,
\end{align}
 for $1\le k\le K$, where similar to the previous notations in (\ref{error_model}),
 $\Delta\bm{H}_k$ represents the CSI error for the channel matrix between the $k$th
 user and the BS, $\bm{R}_{{\rm R},k}$ is the BS receive correlation matrix related to
 user $k$, $\bm{R}_{{\rm T},k}$ is the transmit correlation matrix of user $k$, and
 $\bm{H}_{{\rm W},k}$ is a random matrix whose entries are i.i.d. Gaussian variables
 with the distribution $\mathcal{CN}(0,1)$.

 The capacity maximization problem under multiple weighted power constraints, imperfect
 CSI and white noise, i.e., $\bm{R}_{\rm n}=\sigma_n^2\bm{I}$, can be formulated as follows
\begin{align}\label{eq136}
\begin{array}{lcl}
 \textbf{P10}:\!\! &\!\! \min\limits_{\big\{\bm{Q}_k\big\}_{k=1}^K}\!\! &\!\! -\log\left|\bm{I} +
  \bm{K}_k^{-1}\! \sum\limits_{k=1}^K\! \Big(\widehat{\bm{H}}_k \bm{Q}_k \widehat{\bm{H}}_k^{\rm H}\Big) \right|\! , \\
 \!\! &\!\! {\rm{s.t.}} \!\! &\!\! {\rm Tr}\left(\bm{\Omega}_{k,i}\bm{Q}_k\right)\le P_{k,i}, 1\le i\le I_k , \\
 \!\! &\!\! \!\! &\!\! \bm{Q}_k \succeq \bm{0}, 1\le k\le K ,
\end{array}\!\!
\end{align}
 where
\begin{align}\label{eq137}
 \bm{K}_k =& \sigma_n^2 \bm{I} + \sum\limits_{k=1}^K {\rm Tr}\big(\bm{Q}_k \bm{R}_{{\rm T},k}\big)
  \bm{R}_{{\rm R},k} .
\end{align}
 By defining the following positive semi-definite matrices
\begin{align}\label{eq138}
 \bm{\Sigma}_k =& \bm{K}_k + \sum\limits_{j\neq k} \widehat{\bm{H}}_j \bm{Q}_j
  \widehat{\bm{H}}_j^{\rm H} , 1\le k\le K ,
\end{align}
 the full KKT conditions of \textbf{P10} are listed as follows
\begin{align}\label{eq139}
\left\{\begin{array}{l}\!\!\!
 \widehat{\bm{H}}_k^{\rm H} \Big( \bm{\Sigma}_k + \widehat{\bm{H}}_k \bm{Q}_k
  \widehat{\bm{H}}_k^{\rm H} \Big)^{-1}  \widehat{\bm{H}} = \sum_{i=1}^{I_k} \mu_{k,i}
  \bm{\Omega}_{k,i} \\
 \hspace{2mm} + {\rm Tr}\bigg( \Big( \bm{\Sigma}_{k}^{-1} \Big( \bm{\Sigma}_k\! +\!
  \widehat{\bm{H}}_k \bm{Q}_k \widehat{\bm{H}}_k^{\rm H} \Big)^{-1} \Big) \bm{R}_{{\rm R},k}
  \bigg) \bm{R}_{{\rm T},k} \\
 \hspace*{2mm} - \bm{\Psi}_k, ~ 1\le k\le K , \\
 \mu_{k,i} \ge 0 , ~ \mu_{k,i}({\rm Tr}(\bm{\Omega}_{k,i}\bm{Q})-P_{k,i})=0 , \\
 \hspace{2mm} {\rm Tr}(\bm{\Omega}_{k,i}\bm{Q}_k)\le P_{k,i} , ~ 1\le i\le I_k , \\
 {\rm Tr}(\bm{Q}_k\bm{\Psi}_k)=0, ~ \bm{Q}_k\succeq \bm{0}, ~ \bm{\Psi}_k \succeq \bm{0} ,
  ~ 1\le k\le K ,
\end{array}\right. \!\!
\end{align}
 where $\bm{\Psi}_k$ is the Lagrangian multiplier related to $\bm{Q}_k \succeq \bm{0}$.
 Based on Theorem~\ref{T1}, we have the following conclusion.

\begin{conclusion}\label{C10}
 The optimal transmission covariance matrices for \textbf{P10} have the following structure
\begin{align}\label{eq140}
 \bm{Q}_k =& \bm{\Phi}_k^{-\frac{1}{2}} \big[\bm{V}_{\bm{\mathcal{H}}_k}\big]_{:,1:N}
 \left( \mu_k^{-1} \bm{I} - \big[\bm{\Lambda}_{\bm{\mathcal{H}}_k}\big]_{1:N,1:N}^{-2}\right)^{+} \nonumber \\
 & \times \big[\bm{V}_{\bm{\mathcal{H}}_k}\big]_{:,1:N}^{\rm H}\bm{\Phi}_k^{-\frac{1}{2}} ,
 ~ 1\le k\le K ,
\end{align}
 where $\bm{\Phi}_k$ is given by
\begin{align}\label{eq141}
 & \bm{\Phi}_k = \sum\limits_{i=1}^{I_k} \frac{\mu_{k,i}}{\mu_k} \bm{\Omega}_{k,i} \nonumber  \\
 & \hspace*{2mm} +\! \frac{1}{\mu_k}{\rm Tr}\Bigg(\! \bigg(\! \bm{\Sigma}_k^{-1}\! -\! \Big( \bm{\Sigma}_k\! +\!
  \widehat{\bm{H}}_k \bm{Q}_k \widehat{\bm{H}}_k^{\rm H} \Big)^{-1} \bigg)\! \bm{R}_{{\rm R},k}
  \Bigg)\! \bm{R}_{{\rm T},k} ,
\end{align}
 and the unitary matrix $\bm{V}_{\bm{\mathcal{H}}}$ is defined by the SVD
\begin{align}\label{eq142}
 \bm{\Pi}_k^{-\frac{1}{2}} \widehat{\bm{H}}_k \bm{\Phi}_k^{-\frac{1}{2}} =& \bm{U}_{\bm{\mathcal{H}}_k}
  \bm{\Lambda}_{\bm{\mathcal{H}}_k} \bm{V}_{\bm{\mathcal{H}}_k}^{\rm H} \text{ with }
  \bm{\Lambda}_{\bm{\mathcal{H}}} \searrow  ,
\end{align}
 in which
\begin{align}\label{eq143}
 \bm{\Pi}_k\! =& \sigma_n^2\bm{I}\! +\! \sum\limits_{k=1}^K\! {\rm Tr}\big(\bm{Q}_k\bm{R}_{{\rm T},k}\big)
  \bm{R}_{{\rm R},k}\! +\! \sum\limits_{j\neq k}\! \widehat{\bm{H}}_j \bm{Q}_j \widehat{\bm{H}}_j^{\rm H} \! . \!
\end{align}
\end{conclusion}

 When only a sum power constraint for each user is considered and there is no
 spatial correlation  at the transmit antennas, Conclusion~\ref{C10} can be
 simplified as follows.

\vspace*{2mm}

\noindent
\textbf{Conclusion~10.1} \textit{When $I_k=1$, $\bm{\Omega}_{k,1}=\bm{I}$ and
 $\bm{R}_{{\rm T},k}= r_{b_k}\bm{I}$, the optimal solution for \textbf{P10} has the
 following structure}
\begin{align}\label{eq144}
 \bm{Q}_k =& \big[\bm{V}_{\bm{\mathcal{H}}_k}\big]_{:,1:N} \left( \mu_k^{-1} \bm{I} -
  \big[\bm{\Lambda}_{\bm{\mathcal{H}}_k}\big]_{1:N,1:N}^{-2}\right)^{+}
  \big[\bm{V}_{\bm{\mathcal{H}}_k}\big]_{:,1:N}^{\rm{H}} , \nonumber \\
 & 1\le k\le K ,
\end{align}
 \textit{where the unitary matrix $\bm{V}_{\bm{\mathcal{H}}_k}$ is defined by the SVD}
\begin{align}\label{eq145}
 \bm{\Pi}_k^{-\frac{1}{2}} \widehat{\bm{H}}_k =& \bm{U}_{\bm{\mathcal{H}}_k}
  \bm{\Lambda}_{\bm{\mathcal{H}}_k} \bm{V}_{\bm{\mathcal{H}}_k}^{\rm H} \text{ with }
  \bm{\Lambda}_{\bm{\mathcal{H}}_k} \searrow ,
\end{align}
 \textit{and}
\begin{align}\label{eq146}
 \bm{\Pi}_k =& \sigma_n^2\bm{I} + \sum\limits_{k=1}^K P_k r_{b_k} \bm{R}_{{\rm R},k}
  + \sum\limits_{j\neq k} \widehat{\bm{H}}_j \bm{Q}_j \widehat{\bm{H}}_j^{\rm H} .
\end{align}

\noindent \textbf{Computation of $\mu_k$, $\forall k$}: In Conclusion~{10.1}, the
 scalar $\mu_k$ can be computed based on the following equality
\begin{align}\label{eq147}
 {\rm Tr}\left(\bm{Q}_k\right) =& P_{k,1} .
\end{align}
 In this case, the computation of $\mu_k$ is a standard iterative water-filling solution
 \cite{YUWEI2007}.

\subsubsection{Suboptimal solutions for \textbf{P10}}

 For the general case, the solution given by Conclusion~\ref{C10} is not in closed-form.
 From Conclusion~\ref{C10}, it can be seen that the following scalar terms,
\begin{align}
 \alpha =& \sum\limits_{k=1}^{K} {\rm Tr}\big(\bm{Q}_k\bm{R}_{{\rm T},k}\big) ,\label{eq148} \\
 \beta_k\! =& {\rm Tr} \Bigg(\! \bigg(\! \bm{\Sigma}_k^{-1}\! -\! \Big( \bm{\Sigma}_k\! +\! \widehat{\bm{H}}_k
  \bm{Q}_k \widehat{\bm{H}}_k^{\rm H} \Big)^{-1} \bigg) \bm{R}_{{\rm R},k}\! \Bigg)\! ,
  1\le k\le K , \label{eq149}
\end{align}
 are required to compute $\bm{\Phi}_k$ and $\bm{K}_k$. Clearly, $\alpha$ and $\beta_k$
 depend on $\bm{Q}_k$, $1\le k\le K$. When using an iterative algorithm to suboptimally
 compute the solutions of Conclusion~\ref{C10}, therefore, in each iteration, we can use
 the values of $\bm{Q}_k$, $1\le k\le K$, obtained in the previous iteration, to calculate
 $\alpha$ and $\beta_k$. The simulation results show that this iterative algorithm enjoys
 a good performance.

 To avoid the possible slow convergence difficulty, an alternative is to adopt its non-iterative
 approximation. Specifically, we use the maximum of $\alpha$, denoted as $\alpha_{\max}$,
 which is the solution of the following optimization
\begin{align}\label{eq150}
\hspace*{-2mm}\begin{array}{cl}
 \max & \sum_{k=1}^K {\rm Tr}\big(\bm{Q}_k\bm{R}_{{\rm T},k}\big) ,\\
 {\rm{s.t.}} & {\rm Tr}\big(\bm{\Omega}_{k,i}\bm{Q}_k\big)\le P_{k,i} ,
  1\le i\le I_k, 1\le k\le K .
\end{array} \!\!
\end{align}
 Note that given $\alpha$, $\bm{\Sigma}_k$ is a constant, and thus we can replace
 $\beta_k$ with its corresponding maximum, which is given by
\begin{align}\label{eq151}
 \beta_{k,\max} =& {\rm Tr}\big(\bm{\Sigma}_k^{-1}\bm{R}_{{\rm R},k}\big) .
\end{align}

\subsection{MU MIMO Downlink}\label{S5.2}

\subsubsection{Perfect CSI}

 In downlink, the BS sends the user-related information to all the $K$ users.
 For the perfect CSI case and under multiple weighted power constraints, the
 transmission covariance matrices optimization problem for the sum-capacity
 maximization is formulated as
\begin{align}\label{eq152}
\hspace*{-2mm}\begin{array}{lcl}
 \textbf{P11}:\!\! &\!\! \min\limits_{\big\{\bm{Q}_k\big\}_{k=1}^K}\!\! &\!\! \sum\limits_{k=1}^K -\log
  \left| \bm{I} + \bm{\Sigma}_k^{-1} \bm{H}_k \bm{Q}_k \bm{H}_k^{\rm H}\right| , \\
 \!\! &\!\! {\rm{s.t.}} \!\! &\!\! {\rm Tr}\Big(\bm{\Omega}_i \sum_{k=1}^K \bm{Q}_k\Big)\le P_i,
  1\le i\le I , \\
 \!\! &\!\!  \!\! &\!\! \bm{Q}_k \succeq \bm{0} , 1\le k\le K ,
\end{array}\!\!
\end{align}
 where $\bm{\Omega}_i$ is the $i$th constraint's weighting matrix and $P_i$ is
 the corresponding power limit, while
\begin{align}\label{eq153}
 \bm{\Sigma}_k = & \bm{R}_{{\rm n}_k} + \bm{H}_k\sum\nolimits_{j\neq k} \bm{Q}_j \bm{H}_k^{\rm H} ,
\end{align}
 and $\bm{R}_{{\rm n}_k}$ is the noise covariance matrix at user $k$. The full
 KKT conditions of \textbf{P11} are
\begin{align}\label{eq154}
\left\{\begin{array}{l}\!\!\!
 \bm{H}_k^{\rm H}\Big(\bm{\Sigma}_k + \bm{H}_k \bm{Q}_k \bm{H}_k^{\rm H} \Big)^{-1}
  \bm{H}_k = \sum\limits_{i=1}^I \mu_i \bm{\Omega}_i + \sum\limits_{j\neq k} \bm{H}_j^{\rm H} \\
 \hspace{2mm} \times \Big( \bm{\Sigma}_j^{-1}\!\! -\! \Big( \bm{\Sigma}_j\! +\!
  \bm{H}_j \bm{Q}_j \bm{H}_j^{\rm H} \Big)^{-1} \Big) \bm{H}_j\! -\! \bm{\Psi}_k, 1\le k\le K , \\
 \mu_i \ge 0, ~ \mu_i\Big({\rm Tr}\Big(\bm{\Omega}_i\sum\nolimits_{k=1}^K \bm{Q}_k\Big)-P_i\Big)=0 , \\
 \hspace{2mm} {\rm Tr}\Big(\bm{\Omega}_i \sum\nolimits_{k=1}^K \bm{Q}_k\Big)\le P_ i, ~ 1\le i\le I , \\
 {\rm Tr}(\bm{Q}_k\bm{\Psi}_k)=0, \bm{\Psi}_k \succeq \bm{0}, \bm{Q}_k\succeq \bm{0} ,
  ~ 1\le k\le K ,
\end{array}\right. \!\!
\end{align}
 where $\bm{\Psi}_k$ is the Lagrangian multiplier related to $\bm{Q}_k \succeq \bm{0}$.
 Based on Theorem~\ref{T1} and introducing auxiliary variables $\widetilde{\mu}_k$ for
 $1\le k\le K$, the following conclusion is obtained.

\begin{conclusion}\label{C11}
 The optimal transmission covariance matrices for \textbf{P11} satisfy the following structure
\begin{align}\label{eq155}
 \bm{Q}_k =& \bm{\Phi}_k^{-\frac{1}{2}}\big[\bm{V}_{\bm{\mathcal{H}}_k}\big]_{:,1:N}
  \left( \widetilde{\mu}_k^{-1} \bm{I} - \big[\bm{\Lambda}_{\bm{\mathcal{H}}_k}]_{1:N,1:N}^{-2}
  \right)^{+} \nonumber \\
 & \times \big[\bm{V}_{\bm{\mathcal{H}}_k}\big]_{:,1:N}^{\rm H} \bm{\Phi}_k^{-\frac{1}{2}},
  ~ 1\le k\le K ,
\end{align}
 where
\begin{align}\label{eq156}
 \bm{\Phi}_k =& \sum_{i=1}^I\frac{\mu_i}{\widetilde{\mu}_k} \bm{\Omega}_i
  + \frac{1}{\widetilde{\mu}_k} \sum_{j\neq k} \bm{H}_j^{\rm H} \nonumber \\
 & \times \left(\bm{\Sigma}_j^{-1} - \Big( \bm{\Sigma}_j + \bm{H}_j \bm{Q}_j \bm{H}_j^{\rm H}
  \Big)^{-1}\right) \bm{H}_j ,
\end{align}
 and the unitary matrix $\bm{V}_{\bm{\mathcal{H}}_k}$ is defined by the following SVD
\begin{align}\label{eq157}
 \bm{\Sigma}_{k}^{-\frac{1}{2}} \bm{H}_k \bm{\Phi}_k^{-\frac{1}{2}} =&
  \bm{U}_{\bm{\mathcal{H}}_k} \bm{\Lambda}_{\bm{\mathcal{H}}_k} \bm{V}_{\bm{\mathcal{H}}_k}^{\rm H}
  \text{ with } \bm{\Lambda}_{\bm{\mathcal{H}}_k} \searrow .
\end{align}
\end{conclusion}

\noindent \textbf{Computation of $\mu_i$, $\forall i$}: In the MU MIMO downlink
 communications under multiple weighted power constraint and perfect CSI, the
 computation of $\bm{Q}_k$ is performed in an iterative manner. The auxiliary
 variable $\widetilde{\mu}_k$ in (\ref{eq155}) and (\ref{eq156}) cancels output
 each other and, therefore, there is no need to compute it. The remaining variables
 $\mu_i$, $\forall i$, are computed using subgradient algorithms.

\subsubsection{Imperfect CSI}

 When the CSI error is present, a similar model to (\ref{eq135}) is adopted. But in this
 case, the BS has the single transmit correlation matrix $\bm{R}_{\rm T}$ and the $k$th
 user's receive correlation matrix is denoted by $\bm{R}_{{\rm R},k}$. Further assuming
 the white noises at the users' receivers, i.e., $\bm{R}_{{\rm n}_k}=
 \sigma_{{\rm n}_k}^2\bm{I}$, the sum-capacity maximization problem can be written in the
 following form
\begin{align}\label{eq158}
\hspace*{-2mm}\begin{array}{lcl}
 \textbf{P12}:\!\! &\!\! \min\limits_{\big\{\bm{Q}_k\big\}_{k-1}^K}\!\! &\!\! \sum\limits_{k=1}^K -\log
  \left| \bm{I} + \bm{\Sigma}_k^{-1} \widehat{\bm{H}}_k \bm{Q}_k \widehat{\bm{H}}_k^{\rm H}\right| ,  \\
 \!\! &\!\! {\rm{s.t.}} \!\! &\!\! {\rm Tr}\Big(\bm{\Omega}_i \sum_{k=1}^K \bm{Q}_k\Big)\le P_i,
  1\le i\le I , \\
 \!\! &\!\!  \!\! &\!\! \bm{Q}_k \succeq \bm{0}, 1\le k\le K ,
\end{array}\!\!
\end{align}
 where
\begin{align}\label{eq159}
 \bm{\Sigma}_k\! =& \sigma_{{\rm n}_k}^2\! \bm{I}\! +\!  {\rm Tr}\! \left(\sum_{k=1}^K \bm{Q}_k
  \bm{R}_{\rm T}\! \right)\! \bm{R}_{{\rm R},k}\! +\! \widehat{\bm{H}}_k \sum_{j\neq k} \bm{Q}_j
 \widehat{\bm{H}}_k^{\rm H}\! .\!
\end{align}
 The full KKT conditions of \textbf{P12} are derived as
\begin{align}\label{eq160}
\left\{\begin{array}{l}\!\!\!
 \widehat{\bm{H}}_k^{\rm H} \left(\bm{\Sigma}_k + \widehat{\bm{H}}_k \bm{Q}_k
  \widehat{\bm{H}}_k^{\rm H}\right)^{-1} \widehat{\bm{H}}_k = \sum\limits_{i=1}^I \mu_i \bm{\Omega}_i \\
 \hspace{2mm} + \sum\limits_{k=1}^K {\rm Tr} \bigg( \Big( \bm{\Sigma}_k^{-1} - \Big( \bm{\Sigma}_k
  + \widehat{\bm{H}}_k \bm{Q} _k \widehat{\bm{H}}_k^{\rm H} \Big)^{-1} \Big) \bm{R}_{{\rm R},k}
  \bigg) \bm{R}_{\rm T} \\
 \hspace{2mm} + \sum\limits_{j\neq k} \widehat{\bm{H}}_j^{\rm H} \bigg( \bm{\Sigma}_j^{-1}
  - \Big( \bm{\Sigma}_j + \widehat{\bm{H}}_j \bm{Q}_j \widehat{\bm{H}}_j^{\rm H} \Big)^{-1}
  \bigg) \widehat{\bm{H}}_j - \bm{\Psi}_k ,  \\
 \mu_i \ge 0, ~ \mu_i \bigg( {\rm Tr}\Big( \bm{\Omega}_i \sum\limits_{k=1}^K \bm{Q}_k\Big)-P_i\bigg) = 0 , \\
 \hspace{2mm} {\rm Tr}\Big( \bm{\Omega}_i \sum\limits_{k=1}^K \bm{Q}_k\Big) \le P_i , ~ 1\le i\le I , \\
 {\rm Tr}\big(\bm{Q}_k\bm{\Psi}_k\big)=0, ~ \bm{\Psi}_k \succeq \bm{0}, ~ \bm{Q}_k\succeq \bm{0} ,
  1\le k\le K .
\end{array}\right. \!\!
\end{align}
 Based on Theorem~\ref{T1}, the following conclusion is obtained.

\begin{conclusion}\label{C12}
 The optimal transmission covariance matrices for \textbf{P12} satisfies the following
 structure
\begin{align}\label{eq161}
 \bm{Q} _k=& \bm{\Phi}_k^{-\frac{1}{2}} \big[\bm{V}_{\bm{\mathcal{H}}_k}\big]_{:,1:N}
  \left( \widetilde{\mu}_k^{-1} \bm{I} - \big[\bm{\Lambda}_{\bm{\mathcal{H}}_k}\big]_{1:N,1:N}^{-2}
  \right)^{+} \nonumber \\
 & \times \big[\bm{V}_{\bm{\mathcal{H}}_k}\big]_{:,1:N}^{\rm H} \bm{\Phi}_k^{-\frac{1}{2}} ,
  ~ 1\le k\le K ,
\end{align}
 where $\bm{\Phi}_k$ is defined by
\begin{align}\label{eq162}
 \bm{\Phi}_k =& \sum\limits_{i=1}^I \frac{\mu_i}{\widetilde{\mu}_k} \bm{\Omega}_i  \nonumber \\
 & \hspace*{-2mm}+\! \frac{1}{\widetilde{\mu}_k} \sum\limits_{k=1}^K {\rm Tr}
  \bigg(\! \Big( \bm{\Sigma}_k^{-1}\! -\!
  \Big( \bm{\Sigma}_k\! +\! \widehat{\bm{H}}_k \bm{Q}_k \widehat{\bm{H}}_k^{\rm H} \Big)^{-1}\!
  \bm{R}_{{\rm R},k} \! \bigg)\! \bm{R}_{\rm T} \nonumber \\
 & \hspace*{-2mm}+\! \frac{1}{\widetilde{\mu}_k}\! \sum\limits_{j\neq k}\! \widehat{\bm{H}}_j^{\rm H}
  \bigg(\! \bm{\Sigma}_j^{-1}\! -\! \Big( \bm{\Sigma}_j\! +\! \widehat{\bm{H}}_j \bm{Q}_j
  \widehat{\bm{H}}_j^{\rm H} \Big)^{-1}\bigg) \widehat{\bm{H}}_j ,
\end{align}
 and the unitary matrix $\bm{V}_{\bm{\mathcal{H}}_k}$ is defined by the SVD
\begin{align}\label{eq163}
 \bm{\Sigma}_k^{-\frac{1}{2}} \widehat{\bm{H}}_k \bm{\Phi}_k^{-\frac{1}{2}} =&
  \bm{U}_{\bm{\mathcal{H}}_k} \bm{\Lambda}_{\bm{\mathcal{H}}_k} \bm{V}_{\bm{\mathcal{H}}_k}^{\rm H}
  \text{ with } \bm{\Lambda}_{\bm{\mathcal{H}}_k} \searrow .
\end{align}
\end{conclusion}

\subsubsection{Suboptimal Solution for \textbf{P12}}

 In the general case, the solution given by Conclusion~\ref{C12} is not in
 closed-form, and thus fixed point methods must be used to numerically compute
 the solution from the complicated solution form given in Conclusion~\ref{C12}.
 From Conclusion~\ref{C12}, it can be seen that the main difficulty comes from
 the following two scalar terms in $\bm{\Sigma}_k$ and $\bm{\Phi}_k$, denoted as
 $\alpha$ and $\beta$, respectively,
\begin{align}
 \alpha =& {\rm Tr}\left( \sum\limits_{k=1}^K \bm{Q}_k \bm{R}_{\rm T} \right) , \label{eq164} \\
 \beta =& \sum\limits_{k=1}^K {\rm Tr}\left(\! \bigg( \bm{\Sigma}_k^{-1}\! -\!
  \Big( \bm{\Sigma}_k\! +\! \widehat{\bm{H}}_k \bm{Q}_k \widehat{\bm{H}}_k^{\rm H} \Big)^{-1}
  \bigg) \bm{R}_{{\rm R},k} \right) \! , \label{eq165}
\end{align}
 which depends on $\bm{Q}_k$, $\forall k$. An iterative algorithm can be used to
 compute the solution of Conclusion~\ref{C12}. Specifically, in each iteration
 the values of $\bm{Q}_k$, $\forall k$, obtained in the previous iteration are used
 to calculate $\alpha$ and $\beta$. The simulation results show that this algorithm
 enjoys a good performance.

 Alternatively, a non-iterative approximation can be adopted. Note that the maximum
 of $\alpha$, $\alpha_{\max}$, is the solution of the following optimization
\begin{align}\label{eq166}
\begin{array}{cl}
 \max & {\rm Tr}\left(\sum\limits_{k=1}^K \bm{Q}_k \bm{R}_{\rm T}\right) , \\
 {\rm{s.t.}} & {\rm Tr}\left( \bm{\Omega}_i\sum\limits_{k=1}^K \bm{Q}_k\right)\le P_i ,
 1\le i\le I .
\end{array}
\end{align}
 For given $\alpha$, $\bm{\Sigma}_k$ becomes a constant. For any given $\bm{\Sigma}_k$,
 the maximum of $\beta$ is
\begin{align}\label{eq167}
 \beta_{\max} =& \sum\limits_{k=1}^K {\rm Tr}\big(\bm{\Sigma}_k^{-1}\bm{R}_{{\rm R},k}\big) .
\end{align}
 By using $\alpha_{\max}$ and $\beta_{\max}$ to replace $\alpha$ and  $\beta$, we
 readily obtain a suboptimal solution of \textbf{P12}.

\begin{remark}\label{R3}
 The optimization problems \textbf{P9} and \textbf{P11} were studied in
 \cite{GeneralizedWF} and \cite{YUWEI2007}, respectively. But our proposed method
 is more straightforward and furthermore it avoids the turning-off effect. The
 optimization problems \textbf{P10} and \textbf{P12} have not been found in the
 existing open literature.

 Based on our results, it can be discovered that multiple weighted power constraints
 have the same impact as CSI errors in MU MIMO uplink. It is also revealed that in
 the special case of MU MIMO uplink with the sum power constraint and only the
 receive antenna correlation, the optimal solution is in closed-form as given in
 Conclusion~{10.1}. The general solutions given in Conclusions~\ref{C9} to \ref{C12}
 require iterative numerical algorithms to solve them.

 For MU MIMO optimizations, the sum-MSE minimization is not considered in this work
 because: 1)~the  structure of the optimal solution for MU-MIMO uplink cannot be
 derived directly based on its KKT conditions, and 2)~the optimization solution has
 different structure from that for the capacity maximization. Naturally, the
 sum-MSE minimization problem can be handled via numerical convex optimization
 methods, such as semi-definite programming (SDP), second order cone programming
 (SOCP), etc. \cite{XingIET}. But such solutions provide little insights and are
 not closely related to the main theme of this paper. Interested readers are
 referred to \cite{XingIET} and the references therein.
\end{remark}

\begin{figure*}[bp!]\setcounter{equation}{170}
\vspace*{-5mm}
\hrule
\begin{align}\label{eq171}
\left\{\begin{array}{l}\!\!\!
 \widehat{\bm{H}}_l^{\rm H}\! \bigg(\! \bm{\Sigma}_r\! +\!\!\!\! \sum\limits_{i\in \psi_{{\rm D},r}}\!\!\!\!
  \widehat{\bm{H}}_i \bm{Q}_{v(i)} \widehat{\bm{H}}_i^{\rm H}\! \bigg)^{-1}\!\!
  \widehat{\bm{H}}_l\! =\!\! \sum\limits_{m=1}^{I_{S(l)}} \mu_{S(l),m} \bm{\Omega}_{S(l),m}\!\! +\!\!\!\!
  \sum\limits_{i\in \pi_{v(l)},m=D(i)}\!\!\! {\rm Tr} \Bigg(\!\! \Bigg(\! \bm{\Sigma}_m^{-1}\! -\!\!
  \bigg(\! \bm{\Sigma}_m\!\! +\!\!\! \sum\limits_{j\in \psi_{{\rm D},m}}\!\!\! \widehat{\bm{H}}_j
  \bm{Q}_{v(j)} \widehat{\bm{H}}_j^{\rm H}\! \bigg)^{-1} \Bigg)\!
  \bm{R}_{{\rm R},i}\! \Bigg)\! \bm{R}_{{\rm T},i} \\
 \hspace{2mm} + \sum\limits_{i\in \pi_{v(l)}, i\neq l, m=D(i)} \widehat{\bm{H}}_i^{\rm H}
  \Bigg( \bm{\Sigma}_m^{-1} - \bigg( \bm{\Sigma}_m + \sum\limits_{j\in \psi_{{\rm D},m}}
  \widehat{\bm{H}}_j \bm{Q}_{v(j)} \widehat{\bm{H}}_j^{\rm H} \bigg)^{-1} \Bigg)
  \widehat{\bm{H}}_i - \bm{\Psi}_l ,  \\
 \mu_{S(l),m} \ge 0, \ \mu_{S(l),m}\Bigg( {\rm Tr} \bigg( \bm{\Omega}_{S(l),m}\sum\limits_{i\in \xi_{S(l)}}
  \bm{Q}_i\bigg)- P_{S(l),m}\Bigg)=0 ,  ~  {\rm Tr} \bigg( \bm{\Omega}_{S(l),m}\sum\limits_{i\in \xi_{S(l)}}
  \bm{Q}_i\bigg) \le P_{S(l),m} , ~ 1\le m\le I_{S(l)} , \\
 \hspace{2mm} {\rm Tr}\big(\bm{Q}_{v(l)} \bm{\Psi}_l\big)=0, ~ \bm{\Psi}_l \succeq \bm{0}, ~
   \bm{Q}_{v(l)} \succeq \bm{0},
\end{array}\right. \!\!
\end{align}
\vspace*{-1mm}
\end{figure*}

\section{Transmission Covariance Matrix Optimization for MIMO Networks}\label{S6}

 In our MIMO network, multiple sources communicate with multiple destinations
 and every node is equipped with multiple antennas. Each source node can send
 distinct information to several destinations simultaneously, and every
 destination can receive distinct signals from several sources. To formulate
 the transmission covariance matrix optimization for this very general MIMO
 network, we use the links between the sources and the destinations as basic
 elements in the system modeling. Because wireless communications are discussed,
 for any source node and destination node, there always exists a link. For each
 channel link there is at most one desired signal. When the destination of the
 link does not want to communicate with the source of the link, the signal
 transmitted by the source of the link becomes the interference to the
 destination through the link. Moreover, CSI error has to be taken into account.

 The network links' parameters are first defined. Denote the index set of
 destination nodes by $\mathcal{D}$ and the index set of source nodes by
 $\mathcal{S}$, respectively. Moreover, the index set of links is denoted by
 $\mathcal{L}$. Let $\psi_{{\rm D},r}$ be the link set whose destination node
 is node $r$. In other words, a desired signal to node $r$ exists on each link
 of $\psi_{{\rm D},r}$. The symbol $v(l)$ represents the index for the desired
 signal vector on link $l$, whose covariance matrix is denoted as $\bm{Q}_{v(l)}$.
 If there is no desired signal, $v(l)={\rm null}$ and $\bm{Q}_{\rm null}=
 \bm{0}$. The symbol $\psi_l$ denotes the index set for all the signal
 transmitted in the $l$th link and $\xi_s$ is the index set of all the signal
 transmitted from the $s$th source. Note that if the $s$th node is the source
 node of the $l$th link, $\psi_l=\xi_s$. A signal on a given link cannot be both
 desired signal and interference simultaneously. Similar to the channel models
 of the previous sections, for the $l$th link, the channel matrix is given by\setcounter{equation}{167}
\begin{align}\label{eq168}
 \bm{H}_l =& \widehat{\bm{H}}_l + \bm{R}_{{\rm R},l}^{\frac{1}{2}}
  \bm{H}_{{\rm W},l} \bm{R}_{{\rm T},l}^{\frac{1}{2}},
\end{align}
 where $\widehat{\bm{H}}_l$ is the estimated channel matrix and
 $\bm{R}_{{\rm R},l}^{\frac{1}{2}}\bm{H}_{{\rm W},l}\bm{R}_{{\rm T},l}^{\frac{1}{2}}$
 is the corresponding channel estimation error. Also again the elements of
 $\bm{H}_{{\rm W},l}$ are i.i.d. Gaussian distributed, obeying ${\cal CN}(0,1)$,
 the positive definite matrices $ \bm{R}_{{\rm R},l}$ and $\bm{R}_{{\rm T},l}$ are the
 receive and transmit correlation matrices, respectively.

\subsection{Capacity Maximization for MIMO Network}\label{S6.1}

 The capacity maximization for the MIMO network with imperfect CSI and under multiple
 weighted power constraints can be formulated as
\begin{align}\label{eq169}
\hspace*{-2mm}\begin{array}{lcl}
 \textbf{P13}:\!\!\! &\!\!\! \max\limits_{ \{\bm{Q}_i, i\in\xi_s,\forall s\in\mathcal{S}\}}\!\!\! &\!\!\!
  \sum\limits_{\forall r\in\mathcal{D}} \log \left| \bm{I}\! +\!
  \bm{\Sigma}_r^{-1}\!\!\! \sum\limits_{l\in \psi_{{\rm D},r}}\!\!\! \Big( \widehat{\bm{H}}_l
  \bm{Q}_{v(l)} \widehat{\bm{H}}_l^{\rm H} \Big) \right|\! ,\\
 \!\!\! &\hspace*{-5mm}{\rm{s.t.}}\!\!\! &\hspace*{-10mm}{\rm Tr}\! \left(\! \bm{\Omega}_{s,m}\!\!\! \sum\limits_{i\in \xi_{s}}\!\!\!
  \bm{Q}_i\! \right)\! \le P_{s,m}, 1\le m\le I_s , \forall s\in\mathcal{S}, \! \\
 \!\!\! &\!\!\! \!\!\! &\hspace*{-10mm} \bm{Q}_i \succeq \bm{0}, \forall i\in\xi_s , \forall s\in\mathcal{S},
\end{array}\!\!
\end{align}
 where $\bm{\Omega}_{l,m}$ is the $m$th weighting matrix of the source node of link
 $l$ and $P_{l,m}$ is the corresponding power threshold, while
\begin{align}\label{eq170}
 \bm{\Sigma}_r =& \bm{R}_{\bm{n}_r} + \sum\limits_{i\in \psi_{{\rm D},r}} \left(
  \widehat{\bm{H}}_i \sum\limits_{j\in \psi_i, j\neq v(i)} \bm{Q}_j \widehat{\bm{H}}_i^{\rm H} \right) \nonumber \\
 & + \sum\limits_{i\in \psi_{{\rm D},r}} {\rm Tr}\left(
  \sum\limits_{j\in \psi_i} \bm{Q}_j \bm{R}_{{\rm T},i} \right) \bm{R}_{{\rm R},i} ,
\end{align}
 and $\bm{R}_{\bm{n}_r}$ is the noise covariance matrix of the link with destination
 node $r$. In this optimization problem, the covariance matrix of the desired signal
 vector on each link is the optimization variable.

 The corresponding KKT conditions for the covariance matrices of the desired signals
 on links $l$, where $l \in \psi_{{\rm D},r}$, $\forall r\in\mathcal{D}$, are given
 in (\ref{eq171}). In (\ref{eq171}), $D(l)$ and $S(l)$ denote the destination node
 and source node of link $l$, respectively. Based on these definitions, we have
 $r=D(l)$ in the KKT conditions (\ref{eq171}), and the link set $\pi_{v(l)}$ consists
 of the links on which the $v(l)$-th signal vector (whose covariance matrix is
 $\bm{Q}_{v(l)}$) is transmitted. For the optimization of $\big\{\bm{Q}_{v(l)}\big\}$,
 the following conclusion can be obtained based on Theorem~\ref{T1}. \setcounter{equation}{171}

\begin{conclusion}\label{C13}
 The optimal transmission covariance matrices $\big\{\bm{Q}_{v(l)}\big\}$ of \textbf{P13}
 have the following structure
\begin{align}\label{eq172}
 \bm{Q}_{v(l)} =& \bm{\Phi}_l^{-\frac{1}{2}}\big[\bm{V}_{\bm{\mathcal{H}}_l}\big]_{:,1:N}
  \Big( \mu_l^{-1} \bm{I} - \big[\bm{\Lambda}_{\bm{\mathcal{H}}_l}\big]_{1:N,1:N}^{-2}\Big)^{+} \nonumber \\
 & \times \big[\bm{V}_{\bm{\mathcal{H}}_l}\big]_{:,1:N}^{\rm H} \bm{\Phi}_l^{-\frac{1}{2}},
\end{align}
 where $\bm{\Phi}_l$ is defined as
\begin{align}\label{eq173}
 & \bm{\Phi}_l\! =\! \frac{1}{\mu_l}\! \sum\limits_{m=1}^{I_{S(l)}}\! \mu_{S(l),m} \bm{\Omega}_{S(l),m}
  \! +\! \frac{1}{\mu_l} \!\! \sum\limits_{i\in \pi_{v(l)},m=D(i)} \!\!\!\!\! {\rm Tr}
  \Bigg(\! \bm{\Sigma}_m^{-1}\! -\! \bigg(\! \bm{\Sigma}_m \nonumber \\
 & \hspace{2mm} +\! \sum\limits_{j\in \psi_{{\rm{D}},m}}\!\!\! \widehat{\bm{H}}_j
  \bm{Q}_{v(j)} \widehat{\bm{H}}_j^{\rm H}\bigg)^{-1}\! \bm{R}_{{\rm R},i} \Bigg)
  \bm{R}_{{\rm T},i}  +\!\!\!\! \sum\limits_{ i\in \pi_{v(l)},i\neq l,m=D(i)}\!\!\!\! \widehat{\bm{H}}_i^{\rm H} \nonumber \\
 & \hspace{2mm} \times \Bigg(\!
  \bm{\Sigma}_m^{-1}\! -\! \bigg(\! \bm{\Sigma}_m\! +\!\!\! \sum\limits_{j\in \psi_{{\rm{D}},m}}\!\!\!
  \widehat{\bm{H}}_j \bm{Q}_{v(j)} \widehat{\bm{H}}_j^{\rm H} \bigg)^{-1} \Bigg)\!
  \widehat{\bm{H}}_i\! -\! \bm{\Psi}_l , \!
\end{align}
 and the unitary matrix $\bm{V}_{\bm{\mathcal{H}}_l}$ is defined by the following SVD
\begin{align}\label{eq174}
 \bm{\Pi}_l^{-\frac{1}{2}} \widehat{\bm{H}}_l \bm{\Phi}_l^{-\frac{1}{2}} =& \bm{U}_{\bm{\mathcal{H}}_l}
  \bm{\Lambda}_{\bm{\mathcal{H}}_l} \bm{V}_{\bm{\mathcal{H}}_l}^{\rm H} \text{ with }
  \bm{\Lambda}_{\bm{\mathcal{H}}_l} \searrow  ,
\end{align}
 in which
\begin{align}\label{eq175}
 \bm{\Pi}_l =& \bm{\Sigma}_l + \sum\limits_{i\in \psi_{{\rm{D}},r},i\neq l}
  \widehat{\bm{H}}_i \bm{Q}_{v(i)} \widehat{\bm{H}}_i^{\rm H} .
\end{align}
\end{conclusion}

 Due to the nonconvex nature of the covariance matrix optimization for MIMO networks, the
 solution given by Conclusion~\ref{C13} is not a closed-form solution. In fact, even for
 the special case with perfect CSI, this solution is not a closed-form solution either.
 Therefore, the solution of  Conclusion~\ref{C13} has to be computed using some numerical
 algorithms such as fixed point algorithms. Comparing the solutions of \textbf{P13} and
 \textbf{P1}, we discover that despite the complicated relationships between the
 covariance matrices at different nodes in \textbf{P13}, the roles of the covariance
 matrices and the correlation matrices of channel errors are very simple. Specifically,
 some involved matrix variables work just as parts of the equivalent noise covariance
 matrix, but the others contribute to the weights on transmit covariance matrices.

\subsection{Suboptimal Solutions for \textbf{P13}}\label{S6.2}

 From Conclusion~\ref{C13}, it can be seen that the main difficulty in computing the
 solution of \textbf{P13} comes from the following scalar terms in $\bm{\Sigma}_r$
 and $\bm{\Phi}_l$, denoted as $\alpha_r$ and $\beta_{l,i}$, respectively,
\begin{align}
 & \alpha_r = \sum_{i\in \psi_{{\rm D},r}} {\rm Tr}\bigg( \sum_{j\in \psi_i} \bm{Q}_j
  \bm{R}_{{\rm T},i}\bigg) \bm{R}_{{\rm R},i} , \label{eq176} \\
 & \beta_{l,i} = {\rm Tr} \Bigg( \bm{\Sigma}_m^{-1} - \bigg( \bm{\Sigma}_m\! +\! \sum_{j\in \psi_{{\rm D},m}}
  \Big(\widehat{\bm{H}}_j \bm{Q}_{v(j)} \widehat{\bm{H}}_j^{\rm H}\Big) \bigg)^{-1}
  \! \bm{R}_{{\rm R},i}\Bigg) \! , \nonumber \\
 & \hspace*{10mm} i \in \pi_{v(l)}, m=D(i) . \label{eq177}
\end{align}
 By using the fixed point algorithm, an iterative algorithm can be adopted
 to compute the solutions of Conclusion~\ref{C13}. Specifically, in each iteration,
 $\bm{Q}_j$ in $\alpha_r$ and $\beta_{l,i}$ can be replaced by their values in the
 previous iteration. The simulation results show that this approach enjoys good
 performance, but this method suffers from some drawbacks, including the need to select
 good initial $\bm{Q}_j$ and potentially slow convergence.

 Alternatively, an non-iterative approximation can be employed. Note that the
 maximum of $\alpha_r$ is the solution of the following optimization
\begin{align}\label{eq178}
\begin{array}{cl}
 \max & \sum\limits_{i\in \psi_{{\rm D},r}} {\rm Tr}\bigg(\sum\limits_{j\in \psi_i} \bm{Q}_j
  \bm{R}_{{\rm T},i}\bigg) \bm{R}_{{\rm R},i} , \\
 {\rm{s.t.}} & {\rm Tr}\bigg(\bm{\Omega}_{s,m} \sum\limits_{i\in \xi_s} \bm{Q}_i\bigg) \le P_{s,m},
  1\le m\le I_s , \\
 & s=S(i), \forall i \in \psi_{{\rm D},r} .
\end{array}
\end{align}
 Given the solution of (\ref{eq178}), $\alpha_{\max ,r}$, $\bm{\Sigma}_l$ becomes a
 constant. For any given $\bm{\Sigma}_l$, an upper bound of $\beta_{l,i}$  is
\begin{align}\label{eq179}
 \beta_{\max ,l,i} =& {\rm Tr}\big(\bm{\Sigma}_m^{-1}\bm{R}_{{\rm R},i}\big), i\in \pi_{v(l)},m=D(i) .
\end{align}
 A suboptimal solution of Conclusion~\ref{C13} can readily be obtained by substituting
 $\alpha_r$ and $\beta_{l,i}$ with $\alpha_{\max ,r}$ and $\beta_{\max ,l,i}$

 The optimization problem \textbf{P13} is formulated according to network links where
 the channel estimation errors are taken into account. This formulation is much clearer
 than the work presented in \cite{PoliteWF_2}. Moreover, Theorem~\ref{T1} is readily
 applicable to \textbf{P13}. To our best knowledge, the transmission covariance matrix
 optimization for MIMO networks with imperfect CSI is regarding as an open problem in
 the existing literature. Similar to the weighted minimum-mean-squared-error (WMMSE)
 algorithm \cite{Christensen2008}, which is an iterative optimization algorithm, our
 proposed solution also has a very wide range of applications, and can be used as a
 benchmark algorithm for the transceiver optimization for many complicated MIMO systems.

\begin{remark}\label{R4}
 Centralized optimization requires full CSI at the processing center. For networks
 consisting of a large number of nodes, timely CSI sharing among all nodes is very
 challenging and may even be impractical. Thus distributed optimization algorithms
 are preferred, which require local or limited CSI only. Nowadays, great efforts
 have been devoted to distributed designs. The key challenge in distributed designs
 is how to decompose the overall optimization problem into a series of low-complexity
 subproblems to be handled at different nodes with limited information sharing among
 these subproblems. A common practice is to scrutinize the centralized optimization
 problem or its solution structure to find a viable decomposition. In our work,
 the structures of the optimal solutions for many MIMO design problems are derived,
 which may facilitate distributed algorithm designs. But as this paper is on the
 centralized MIMO designs based on KKT conditions, distributed algorithms are
 beyond the scope of this paper.
\end{remark}

\section{Summaries}\label{S7}

 In this section, we summarize this paper, by clarifying our distinct contributions for
 each of the problems considered in previous sections and explaining the connections
 among these problems. First of all, the major contribution of this paper is the
 proposal of a unified framework to derive the optimal structure of the transmission
 covariance matrix/matrices based on the KKT conditions. Since the KKT conditions are
 necessary conditions for the optimal solutions, the structure derived from the KKT
 conditions must hold at the optimal solutions, i.e., the optimal solutions must have
 the structure derived from the KKT conditions. This is applicable to convex
 optimization problems and nonconvex ones alike. In the following, our specific
 contributions of the aforementioned optimization problems are recapped.

 For the capacity maximization problem \textbf{P1} and the MSE minimization problem
 \textbf{P2} for SU MIMO systems with perfect CSI and a weighted sum power constraint,
 the optimal solutions have been derived in the literature \cite{Telatar1999,Sampth01}.
 But our proposed method is different in the following three aspects. First, we solve
 these two different optimization problems using the same framework and case-by-case
 studies are avoided. Second, the proposed proofs are simpler and more rigorous. In
 particular, the derivations in \cite{Telatar1999} take two steps -- deriving the
 diagonalizable structure based on matrix inequality and then deriving water-filling
 solution based on KKT conditions. By comparison, our work only requires one
 straightforward step to derive the optimal solutions from the KKT conditions. As
 pointed out in \cite{Xing2016}, the work of \cite{Sampth01} suffers from the
 turning-off effect, while our work successfully resolve this problem. Thirdly, the
 optimal solutions given in Conclusions~\ref{C1} and \ref{C2} are both in closed-form.

 For the capacity maximization problem \textbf{P3} and the MSE minimization problem
 \textbf{P4} for SU MIMO systems under perfect CSI and multiple weighted power
 constraints, our distinct contribution is also three-fold. First, although
 \textbf{P3} has been studied in \cite{Mai2011}, our proposed method is simpler, and
 it overcomes the turning-off effect as well as removes the restriction that the
 channel matrix must be column or row full rank. Second, for \textbf{P4}, no solution
 has been discovered before. Based on our proposed unified framework, the optimal
 solution structure is derived for this problem for the first time. Third, our work
 reveals that the effect of multiple weighted constraints are naturally equivalent
 to a weighted power constraint by comparing the solutions for \textbf{P3} and
 \textbf{P4} with those for \textbf{P1} and \textbf{P2}. Again, the optimal solutions
 given in Conclusions~\ref{C3} and \ref{C4} are both in closed-form.

 For the capacity maximization \textbf{P5} and the MSE minimization \textbf{P6}
 for SU MIMO systems with imperfect CSI under a sum power constraint, there exist
 some studies in the literature \cite{XingTSP2013,XingTSP201501,Ding09,Ding10} but
 they are for the special cases with only transmit correlation or receive correlation.
 This work considers the general cases with both transmit and receive correlation
 for the both problems, where the structures of the optimal solutions are derived
 in Conclusions~\ref{C5} and \ref{C6}, respectively. The solutions of
 Conclusions~\ref{C5} and \ref{C6} are entirely new and they are not found in the
 existing literature. It should be noted that that both \textbf{P5} and
 \textbf{P6} are nonconvex and, therefore, the KKT conditions are only necessary
 conditions for the optimality. For the special cases with only transmit
 correlation or receive correlation, our derived optimal solutions are in
 closed-form as given by Conclusions~5.1, 5.2, 6.1, and 6.2. Compared with our
 previous work \cite{XingTSP201501}, the derivation in this paper is largely
 different and considerably simpler. Compared with the works \cite{Ding09,Ding10},
 our results do not suffer from the turning-off effect.

 The more complicated capacity maximization \textbf{P7} and MSE minimization
 \textbf{P8} for MU MIMO systems under imperfect CSI and multiple weighted power
 constraints, to our best knowledge, are largely open problems in the existing
 literature. The structures of the optimal solutions for \textbf{P7} and
 \textbf{P8} given in Conclusions~\ref{C7} and \ref{C8} are entirely new. Due to
 the complicated nature of these two problems, the results are not in
 closed-form, and iterative schemes such as fixed point algorithm are required
 to compute the solutions. However, we discover that in the special case with
 only transmit correlation, the optimal solutions can be derived in closed-form
 as given in Conclusions~7.1 and 8.1. These cases have important practical
 significance as they are transformable to the MIMO training design problems
 \cite{Ding09}.

 For MU MIMO systems, \textbf{P9} and \textbf{P11} define the sum-capacity
 maximization problems under perfect CSI and with multiple weighted power
 constraints for the uplink and the downlink, respectively. Their counterparts
 with imperfect CSI are presented in \textbf{P10} and \textbf{P12}. Based on
 the proposed unified framework, the optimal solution structures are derived
 for these problems and given in Conclusions~\ref{C9} to \ref{C12}. As mentioned
 in Remark~\ref{R3}, the problems \textbf{P9} and \textbf{P11} were studied in
 \cite{GeneralizedWF} and \cite{YUWEI2007}, respectively. But our proposed method
 is more straightforward and furthermore it avoids the turning-off effect. The
 problems \textbf{P10} and \textbf{P12} have not been found in the existing open
 literature. The general solutions given in Conclusions~\ref{C9} to \ref{C12}
 require iterative numerical algorithms to solve them. By examining the results,
 it can be discovered that multiple weighted power constraints have the same
 impact as CSI errors in MU MIMO uplink. It is also revealed that in the special
 case of MU MIMO uplink with sum power constraint and only receive correlation,
 the optimal solution can be derived in closed-form as given in Conclusion~10.1.

 Finally, the general MIMO network is investigated in which there are multiple
 communication links and on each link there are multiple sources communicate
 with multiple destinations. The capacity maximization problem under imperfect
 CSI is presented in \textbf{P13}, which is regarded as an open problem in the
 existing literature. The optimal structure for the solution of \textbf{P13} is
 given in Conclusion~\ref{C13}, and it can be solved iteratively. Although the
 MIMO network topology considered is very complicated and general, with the
 aid of our unified framework given in Theorem~\ref{T1}, we are able to provide
 the optimal structure of the solution for \textbf{P13} in a simple and clear
 derivation. Our result shows the impacts of different system components and can
 facilitate the transceiver optimization for general MIMO networks.

\section{Numerical Results}\label{S8}

 The importance of the water-filling structure in transmission covariance matrix
 design is demonstrated by numerical results. Two representative designs are chosen.
 All the results are obtained by averaging over 500 independent channel realizations.

\begin{figure}[!bp]
\vspace*{-6mm}
\begin{center}
\includegraphics[width=\columnwidth]{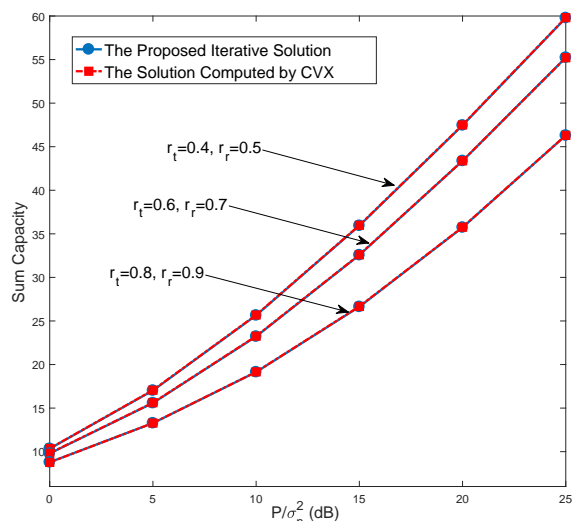}
\end{center}
\vspace*{-6mm}
\caption{Performance comparisons between the proposed iterative solution and the
 solution computed by the CVX for MU MIMO uplink.}
\label{Fig_2}
\end{figure}

\subsection{Simulation Results on MU MIMO Uplink}\label{S8.1}

 We consider the MU-MIMO uplink, where two users each with 4 antennas communicate
 with the BS equipped with 8 antennas. The signal-to-noise ratio (SNR) of the $k$th
 user is defined as $P_k/\sigma_n^2$, where $P_k$ is the sum transmit power across
 all the transmit antennas of user $k$ and $\sigma_n^2$ is the noise power at each
 receive antenna of the BS. Without loss of generality, in our simulation, the same
 SNR value is used for the both users, i.e., $P_1/\sigma_n^2=P_2/\sigma_n^2=P/\sigma_n^2$.
 Per-antenna power constraint is considered for each user, where the power limit
 for the four antennas are set to 1.6,  1.2,  0.8 and 0.4, respectively. In general,
 the power ratios between different antennas can be arbitrarily chosen. Here we set
 the power limits to be significantly different so that the difference from the
 case with a sum power constraint is sufficiently large. The widely used Kronecker
 correlation model is adopted, i.e., the $(i,j)$-th elements of $\bm{R}_{\rm R}$ and
 $\bm{R}_{{\rm T},k}$ are given respectively by $\big[\bm{R}_{\rm R}\big]_{i,j}=
 r_r^{|i-j|}$ and $\big[\bm{R}_{{\rm T},k}\big]_{i,j}=r_{t_k}^{|i-j|}$. Note that
 in uplink, the receive antenna correlation matrix $\bm{R}_{\rm R}$ is the same for
 different users. In the simulation, we also set $r_{t_1}$ and $r_{t_2}$ to the same
 value of $r_t$. For this MU MIMO uplink with the perfect CSI, the capacity maximization
 under per-antenna power constraint belongs to the case of \textbf{P9}, and the optimal
 transmission covariance matrices can be obtained using an iterative procedure based
 on Conclusion~\ref{C9}. Traditionally, this capacity maximization maximization is
 solved using numerical algorithms, such as the CVX package \cite{tool}. In
 Fig.~\ref{Fig_2}, the sum-capacity of the proposed solution is compared with
 that of the solution obtained using the CVX. As expected, both solutions achieve
 the same optimal performance. Our proposed iterative solution has the additional
 advantages that it is more straightforward and much simpler.

\subsection{Simulation Results on MIMO Networks}\label{S8.2}

 A MIMO network with four nodes, including two sources and two destinations, is
 considered. All the nodes are equipped with 4 antennas. Both source nodes communicate
 with both destinations. At each source node, the per-antenna power constraints are
 set to $2\times 1.6$, $2\times1.2$, $2\times 0.8$ and $2\times 0.4$ (Since each
 source simultaneously communicates with two destinations, the power limit is doubled).
 The estimated channel matrix is generated according to $\widehat{\bm{H}}_k=
 \bm{R}_{{\rm R},k}^{\frac{1}{2}}\widehat{\bm{H}}_{{\rm W},k}\bm{R}_{{\rm T},k}^{\frac{1}{2}}$
 \cite{XingTSP2013}. As discussed in \cite{XingTSP2013}, this model represents some
 practical channel estimators, and it is equivalent to the channel model (\ref{eq168}).
 Specifically, the elements of $\widehat{\bm{H}}_{{\rm W},k}$ and $\bm{H}_{{\rm W},k}$
 are i.i.d. Gaussian random variables. The variance of every element of $\bm{H}_{{\rm W},k}$
 is set to $\sigma_e^2$ and that of $\widehat{\bm{H}}_{{\rm W},k}$ is $1-\sigma_e^2$. Thus,
 the elements of $\bm{H}_k$ all have unit variance. The capacity maximization for MIMO
 networks with imperfect CSI is presented in \textbf{P13} and the optimal solution is
 provided in Conclusion~\ref{C13}. Naturally, our solution is equally applicable to the
 perfect CSI case by setting the channel error to zero, i.e., $\sigma_e^2=0$.

\begin{figure}[!b]
\vspace*{-6mm}
\begin{center}
\includegraphics[width=\columnwidth]{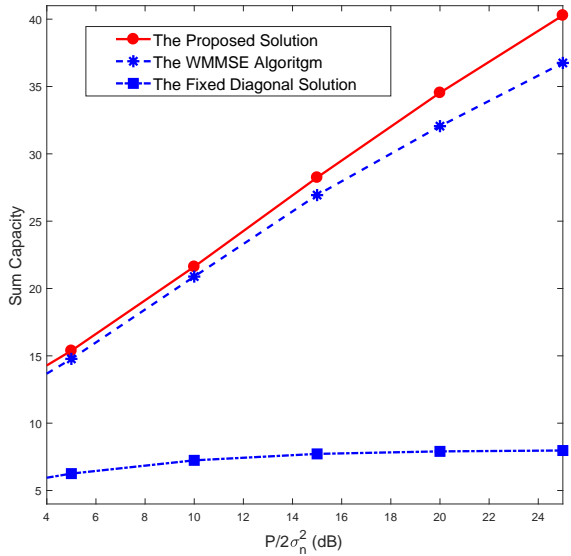}
\end{center}
\vspace*{-7mm}
\caption{Performance comparison between the proposed solution, the fixed
 diagonal solution, and the WMMSE solution for MIMO network with perfect CSI
 given $r_t=0.4$ and $r_r=0.5$.}
\label{Fig_3}
\end{figure}

 In Fig.~\ref{Fig_3}, the sum-capacity achieved by our proposed solution is shown for
 the case of perfect CSI, where two existing solutions are also shown as the benchmarks.
 For the first benchmark solution, the two signal covariance matrices at the two source
 nodes are set to the diagonal matrices that satisfy the respective per-antenna power
 constraints. The other benchmark algorithm is the WMMSE solution, which is detailed
 in Appendix~\ref{Apa}. Not surprisingly, the fixed diagonal solution has the worst
 performance, as it does not consider the mutual interference between different signals.
 The WMMSE solution performs  much better than the fixed diagonal solution as it optimizes
 the mutual interference mitigation. It can be seen that our proposed solution attains the
 best performance.

 For the case of imperfect CSI, the sum-capacity performance of our proposed solution
 as given in Conclusion~\ref{C13} is shown in Fig.~\ref{Fig_4}, where it can be observed
 that the proposed solution has better performance than its non-robust counterpart that
 takes the estimated CSI as the true CSI.

\begin{figure}[!t]
\vspace*{-1mm}
\begin{center}
\includegraphics[width=\columnwidth]{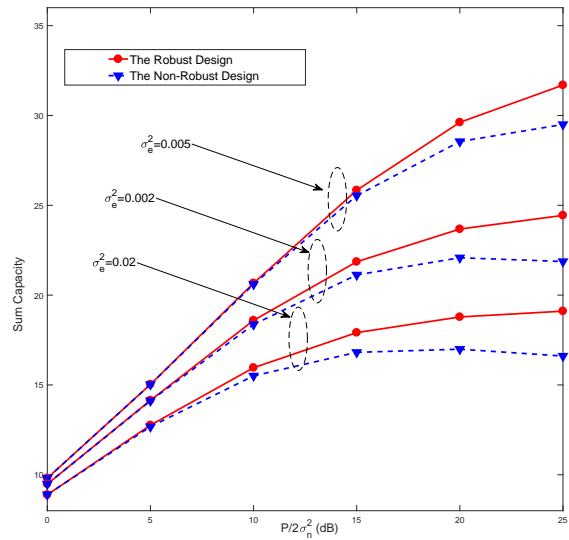}
\end{center}
\vspace*{-5mm}
\caption{The performance comparisons between the robust solution and non-robust solution
 for MIMO network with imperfect CSI given $r_t=0.4$ and $r_r=0.5$.}
\label{Fig_4}
\vspace*{-6mm}
\end{figure}

\section{Conclusions}\label{S9}

 For MIMO systems, the key task of many transceiver optimization problems is to optimize
 the covariance matrices of the transmitted signals. In this paper, a general unified
 framework has been proposed for deriving the water-filling structures of the optimal
 covariance matrices based on KKT conditions. From the general solution for the unified
 framework, interesting and important underlying relationships among the solutions of
 different MIMO optimization designs have been revealed, which help us to understand the
 related existing works much better. Furthermore, the unified framework and its general
 solution have been applied to a wide range of applications, including complicated MIMO
 networks with multiple communication links and systems with imperfect CSI, to discover
 the new solutions that do not exist in the open literature or outperform the existing
 solutions. For example, it has been shown that our proposed solution attains higher
 capacity for MIMO transceiver optimization in complicated network settings than the
 WMMSE algorithm.

\appendix

\subsection{The WMMSE Algorithm}\label{Apa}

 To our best knowledge, transforming the sum-rate or sum-capacity maximization into
 a WMMSE problem was firstly proposed in \cite{Christensen2008}, and extensions by
 other researchers followed it. In \cite{Shi2011}, the authors extended the WMMSE
 algorithm to MIMO interference broadcasting systems. In \cite{XingIET}, the WMMSE
 algorithm was extended to more general networks, e.g., multi-hop cooperative
 networks with multiple sources, relays, and destinations, where all nodes are
 equipped with multiple antennas and each source node communicates with all
 destination nodes. The channel information was assumed to be imperfect with
 Gaussian errors.

 A brief description to the WMMSE algorithm used in our simulation is given below.
 We would like to highlight that the following algorithm is more general than the
 one in \cite{Shi2011} and it is applicable to distributed MIMO networks with
 channel estimation errors. Naturally, it can directly be applied to the perfect
 CSI case by the setting the channel error to zero.

 The transceiver model for link $r$ is specified by
\begin{align}\label{ap1} 
 \bm{y}_r =& \bm{H}_r \bm{F}_r \bm{s}_r + \bm{v}_r ,
\end{align}
 where $\bm{y}_r$ is the received signal vector at the destination of link $r$,
 and the vector $\bm{s}_r$ contains all the desired signals to the destination
 by stacking all the signal vectors of the sources. Without loss of generality,
 the covariance matrix of $\bm{s}_r$ is assumed to be the identity matrix.
 Furthermore, $\bm{H}_r$ comprises all the channel matrices over which the signals
 are transmitted to the destination, i.e.,
\begin{align}\label{ap2} 
 \bm{H}_r =& \left[\bar{\bm{H}}_{i_1} ~ \bar{\bm{H}}_{i_2}\cdots\bar{\bm{H}}_{i_{N_r}}\right] , ~
  i_n \in \psi_{{\rm D},r} ,
\end{align}
 where $N_r$ is the cardinality of the set $\psi_{{\rm D},r}$. Additionally, $\bm{F}_r$
 is a block diagonal matrix, defined by
\begin{align}\label{ap3} 
 \bm{F}_r =& {\rm{diagB}}\left\{\bm{F}_{r,1},\bm{F}_{r,2},\cdots ,\bm{F}_{r,N_r}\right\} ,
\end{align}
 whose $n$th sub-matrix $\bm{F}_{r,n}$ is the corresponding precoding matrix for the
 channel $\bar{\bm{H}}_{i_n}$. Finally, $\bm{v}_r$ is the composite vector of the
 additive white noises and the interference signals at the destination of link $r$,
 whose covariance matrix is given by
\begin{align}\label{ap4} 
 \bm{\Sigma}_r =& \bm{R}_{\bm{n}_r} + \sum_{i\in \psi_{{\rm I},r}}
  \left(\bar{\bm{H}}_i\sum_{j\in \phi_i} \bm{Q}_j \bar{\bm{H}}_i^{\rm H} \right) \nonumber \\
 & + \sum_{i\in \psi_{{\rm D},r}\cup \psi_{{\rm I},r}} {\rm Tr}\left(
  \sum_{j\in i\cup \phi_i} \bm{Q}_j \bm{R}_{{\rm T},i}\right) \bm{R}_{{\rm R},i}.
\end{align}
 For the perfect CSI case, the third term in the righthand side of (\ref{ap4}) becomes zero.

 Based on the above transceiver model of link $r$, the weighted MSE can be expressed as
\begin{align}\label{WMMSE_App} 
 \sum_r {\rm Tr}\left( \bm{W}_r \mathbb{E}\Big\{ \big( \bm{G}_r \bm{y}_r - \bm{s}_r\big)
  \big(\bm{G}_r \bm{y}_r - \bm{s}_r\big)^{\rm H}\Big\}\right) ,
\end{align}
 where $\bm{G}_r$ is the receive beamforming matrix and the positive semi-definite matrix
 $\bm{W}_r$ is the weighting matrix for link $r$. Based on the weighted MSE criterion
 (\ref{WMMSE_App}), the WMMSE optimization problem can be formulated as \cite{Xing2016}
\begin{align}\label{ap6} 
\begin{array}{cl}
 \min\limits_{\bm{F}_r,\bm{G}_r,\bm{W}_r}\!\!\! &\!\!\! \sum\limits_r\! {\rm Tr}\! \left(\!
  \bm{W}_r \mathbb{E}\Big\{ \big( \bm{G}_r \bm{y}_r\! -\! \bm{s}_r\big)
  \big(\bm{G}_r \bm{y}_r\! -\! \bm{s}_r\big)^{\rm H}\Big\}\! \right)\! \\
 \!\!\! &\!\!\! - \sum\limits_r \log\left|\bm{W}_r\right|, \\
 {\rm{s.t.}}\!\!\! &\!\!\! \text{multiple weighted power constraints} .
\end{array}\!\!
\end{align}
 To find an optimal solution of (\ref{ap6}), the alternative optimization can be
 used where one set of the matrices are optimized with the other two sets fixed.

 It is obvious that given $\bm{G}_r$ matrices and $\bm{W}_r$ matrices, the
 optimization (\ref{ap6}) with weighted transmit power constraints is a standard
 quadratic matrix programming problem \cite{XingIET}. Thus the optimal $\bm{F}_r$
 matrices given $\bm{G}_r$ and $\bm{W}_r$ can be efficiently solved by standard
 SOCP, SDP or other convex optimization algorithms. When $\bm{F}_r$ and
 $\bm{W}_r$ are fixed, the optimal $\bm{G}_r$ matrices are exactly the linear
 minimum MSE equalizers, which are readily derived from the complex matrix
 derivative of the weighted MSE criterion (\ref{WMMSE_App}). Finally, for given
 $\bm{G}_r$ and $\bm{F}_r$, the optimal $\bm{W}_r$ matrices are readily given by
\begin{align}\label{ap7} 
 \bm{W}_r =& \left( \mathbb{E}\Big\{ \big( \bm{G}_r \bm{y}_r - \bm{s}_r\big)
  \big(\bm{G}_r \bm{y}_r - \bm{s}_r\big)^{\rm H}\Big\}\right)^{-1} .
\end{align}
 Thus the corresponding WMMSE optimization problem becomes the sum capacity maximization
 \cite{Xing2016}. In a nutshell, the WMMSE algorithm is an iterative alternative
 optimization algorithm with a guaranteed convergence.

\end{document}